\newtheorem{theorem}{Theorem}[section]
\newtheorem*{theorem*}{Theorem}
\newtheorem{proposition}[theorem]{Proposition}
\newtheorem{remark}[theorem]{Remark}
\newtheorem*{remark*}{Remark}
\newtheorem{lemma}[theorem]{Lemma}
\newcommand{\bz}{{\mathbf z}}
\newcommand{\be}{\boldsymbol e}
\newcommand{\Acal}{\mathcal{A}}
\newcommand{\Bcal}{\mathcal{B}}
\newcommand{\Lcal}{\mathcal{L}}
\newcommand{\Ocal}{\mathcal{O}}
\newcommand{\Tcal}{\mathcal{T}}
\newcommand{\Xcal}{\mathcal{X}}
\newcommand{\Ycal}{\mathcal{Y}}
\newcommand{\Ebb}{\mathbb{E}}
\newcommand{\Nbb}{\mathbb{N}}
\newcommand{\Rbb}{\mathbb{R}}
\newcommand{\bx}{{\boldsymbol x}}
\newcommand{\bdelta}{{\boldsymbol \delta}}
\newcommand{\by}{{\boldsymbol y}}
\newcommand{\one}{\mathbbm{1}}
\newcommand{\oneb}{\boldsymbol{\mathbbm{1}}}
\newcommand{\OT}{\text{OT}}
\newcommand{\Cost}{{\normalfont \texttt{Cost}}}
\newcommand{\Moves}{{\normalfont \texttt{Moves}}}
\newcommand{\Runtime}{{\normalfont \texttt{Runtime}}}
\newcommand{\LCA}{\text{A}}
\newcommand{\ceil}[1]{{\lceil #1 \rceil}}
\newcommand{\floor}[1]{{\lfloor #1 \rfloor}}
\DeclareMathOperator*{\argmin}{arg\,min}
\newcommand{\cL}{\mathcal {L}}
\title{Collective Tree Exploration via Potential Function Method}
\author{
  Romain Cosson\\
  \small{\texttt{romain.cosson@inria.fr}}
  \and
  Laurent Massoulié\\
  \small{\texttt{laurent.massoulie@inria.fr}}
}
\date{}
\begin{document}
\maketitle

\begin{abstract}
We study the problem of collective tree exploration (CTE) where a team of $k$ agents is tasked to traverse all the edges of an unknown tree as fast as possible, assuming complete communication between the agents \cite{fraigniaud2006collective}. 
In this paper, we present an algorithm performing collective tree exploration in only $2n/k+\Ocal(kD)$ rounds, where $n$ is the number of nodes in the tree, and $D$ is the tree depth. 
This leads to a competitive ratio of $\Ocal(\sqrt{k})$ for collective tree exploration, the first polynomial improvement over the $\Ocal(k)$ ratio of depth-first search. 
Our analysis relies on a game with robots at the leaves of a continuously growing tree, which is presented in a similar manner as the `evolving tree game' of \cite{bubeck2022shortest}, though its analysis and applications differ significantly. 
This game extends the `tree-mining game' (TM) of \cite{cosson2023breaking} and leads to guarantees for an asynchronous extension of collective tree exploration (ACTE).
Another surprising consequence of our results is the existence of algorithms $\{\Acal_k\}_{k\in \Nbb}$ for layered tree traversal (LTT) with cost at most $2L/k+\Ocal(kD)$, where $L$ is the sum of edge lengths and $D$ is the tree depth. For the case of layered trees of width $w$ and unit edge lengths, our guarantee is thus in $\Ocal(\sqrt{w}D)$. 
\end{abstract}

\section{Introduction}
The present study concerns  collaborative tree exploration (CTE), a problem introduced in the field of distributed computing by \citep{fraigniaud2004collective}. Its goal is for a team of agents or robots, initially located at the root, to explore an unweighted tree, going through all edges as quickly as possible before returning to the root. At all rounds, each robot moves along one edge to reach a neighboring node. When a robot attains the endpoint of an unknown edge, the existence of that edge
is revealed to the team. 
Following the centralized full-communication setting, we assume that robots can communicate and compute at no cost. They thus share at all times a common map of the sub-tree that has already been explored and of the discovered edges that have not yet been traversed.

Another seemingly unrelated problem is layered graph traversal (LGT). It was introduced in the literature on online algorithms by \cite{papadimitriou1991shortest,fiat1998competitive}. We describe the problem for the special case of trees, which was extensively studied \cite{fiat1998competitive, ramesh1993traversing, bubeck2022shortest}. The goal is for a single agent, initially located at the origin of an unknown tree with weighted edge lengths, to reach some other node called the target. The tree is revealed iteratively: at step $i$ the agent must move to a node in the $i$-th layer, i.e. at combinatorial depth $i$, using only the edges revealed in preceding layers. The cost of a randomized algorithm for layered tree traversal is the expected distance travelled by the searcher to attain the target, located in the last layer. 

\paragraph{Main results.} In this paper we present an algorithm performing collective tree exploration (CTE) with $k$ robots in $2n/k+\Ocal(kD)$ synchronous rounds for any tree with $n$ nodes and depth $D$. This algorithm, when used by $\sqrt{k}$ robots, thus achieves a competitive ratio of order $\Ocal(\sqrt{k})$. The algorithm can also be adapted to an asynchronous setting (ACTE) defined in Section \ref{sec:reductions}. 

Our analysis relies on a two-player game, that we call the `continuous tree-mining game' (CTM). In this game, the adversary controls the continuous evolution of a tree while the player controls the position of $k$ `miners' located at its leaves. The game differs from the `evolving tree game' of \cite{bubeck2022shortest} in that the player may block the extension of a leaf of the tree by attributing it a single miner. We show that it is possible for the player of this game to get all miners to reach depth $D$ with a total movement cost of at most $\Ocal(k^2D)$.

Another consequence of our analysis is a sequence of algorithms $\{\Acal_k\}_{k\in \Nbb}$ for layered tree traversal (LTT) which satisfy for any $k\in \Nbb$ that the expected runtime of algorithm $\Acal_k$ on a tree $T$ of depth $D$ and length $L$ is bounded by $2L/k+\Ocal(kD)$. We highlight for the first time a correspondence between collective tree exploration and layered graph traversal, which have until now been introduced and studied by different communities.

\paragraph{Background on Collective Tree Exploration.} The problem of collective tree exploration has a rich history in the field of distributed algorithms and robotics. It was introduced by \cite{fraigniaud2004collective} along with two communication models. A centralized `complete communication' model, in which communications are unrestricted, which we study in this paper ; and a distributed `write-read communication' model in which agents communicate through whiteboards located at all nodes. A collaborative exploration algorithm is said to be order $c(k)-$competitive if its runtime on a tree with $n$ nodes and of depth $D$ is bounded by $\Ocal\left(c(k)\left(\frac{n}{k}+D\right)\right)$ (see \cite{fraigniaud2006collective}). A competitive ratio of $\Ocal(k)$ is thus trivially achieved by a single depth-first search. \cite{fraigniaud2006collective} proposed a $\Ocal(k/\log(k))$-competitive algorithm which can be implemented in the distributed communication model, and thus also in the complete communication model. They suggested that a constant ratio could be achieved in the complete communication model (see initial version \cite{fraigniaud2004collective}). This conjecture was disproved by \cite{dynia2007robots} who showed that the competitive ratio of any deterministic algorithm is at least in $\Omega(\log(k)/\log\log(k))$. 
Many works followed, tackling diverse questions such as: quasi-linear algorithms \cite{ortolf2014recursive}, power constraints \cite{dynia2006power}, and the case of many explorers $k\gg n$ \cite{dereniowski2015fast}. 
A new type of competitive analysis was proposed by \cite{brass2011multirobot}, with a guarantee of the form $2n/k+\Ocal((D+k)^k)$, where $2n/k$ is a lower-bound on the time required by the robots to traverse all edges and return to the origin and where we can thus call the quantity $\Ocal((D+k)^k)$ the `competitive overhead' or `penalty'. 
This quantity was improved to $\Ocal(\log(k)D^2)$ by a simple algorithm combining breadth-first search and depth-first search \cite{cosson2023breadth}. The approach was then recently used by  \cite{cosson2023breaking} to slightly improve the competitive ratio of collective tree exploration in the complete communication model to order $\Ocal(k/\exp(\sqrt{\ln 2\ln k}))$, using a linear in-$D$ competitive overhead of $\Ocal(k^{\log_2(k)-1}D)$. These two recent analyses were performed in the complete communication model for an asynchronous extension of collective tree exploration (ACTE). The competitive ratio presented in this paper, in $\Ocal(\sqrt{k})$, is the first to display a polynomial improvement over the aforementioned $\Ocal(k)$ ratio of depth-first search. The present discussion is summarized in Table \ref{table:1}.

\renewcommand{\arraystretch}{1.3}
\setlength{\tabcolsep}{20pt}
\begin{table}[h!]
\centering
\begin{tabular}{||c | c c||} 
 \hline
  $\Ocal(\cdot)$& Competitive Ratio $c(\cdot)$ & Competitive Overhead $f(\cdot,\cdot)$\\
  Runtime  & $c(k)(\frac{n}{k}+D)$ & $\frac{2n}{k}+f(k,D)$ \\[0.5ex] 
 \hline\hline
 \cite{fraigniaud2004collective} & $k/\ln k$ & - \\
 \cite{brass2011multirobot} & - & $(D+k)^k$\\
 \cite{cosson2023breadth} & - & $\ln kD^2$\\
 \cite{cosson2023breaking} & $k/\exp(\sqrt{\ln 2 \ln k})$ & $k^{\log_2 k-1}D$\\ 
 This work & $\sqrt{k}$ & $kD$\\[1ex] 
 \hline
\end{tabular}
\caption{Previous work on competitive analysis of collaborative tree exploration (CTE), in the complete communication model. All results hold up to a multiplicative constant.}
\label{table:1}
\end{table}

\paragraph{Background on Layered Graph Traversal.} The problem has a rich history in the field of online algorithms. It was first described in a paper by Papadimitriou and Yannanakis \cite{papadimitriou1991shortest} titled `Shortest path without a map' analyzing the case of layered graph of width $w=2$, where the width is defined as the maximum cardinality of a layer. Independently the problem was introduced by \cite{chrobak1991server} under the denomination `Metrical Service System', highlighting its connection to so-called `Metrical Task Systems' \cite{borodin1992optimal}.
The equivalence, between both settings, was noticed by \cite{fiat1998competitive}. They also introduced the denomination Layered Graph Traversal (LGT) and observed that its competitive analysis can be reduced to the special case of Layered Tree Traversal (LTT). For arbitrary width $w$ and depth $D$, they proposed a deterministic algorithm with cost bounded by $\Ocal(9^wD)$, i.e. of competitive ratio $\Ocal(9^w)$. 
The quantity was later improved to $\Ocal(w2^w)$ by \cite{burley1996traversing}, nearly matching the lower-bound in $\Omega(2^w)$ \cite{fiat1998competitive}. 
Using a randomized algorithm, \cite{bubeck2022shortest} obtained a $\Ocal(w^2)$ competitive ratio, nearly matching the randomized lower bound in $\Omega(w^2/\log(w))$ \cite{ramesh1993traversing}. The approach of \cite{bubeck2022shortest} is to use a two-player game, that they call the `evolving tree game', which shares some similarities with the `continuous tree-mining game' (CTM) presented in this paper, although applications are totally different. In particular, contrarily to the aforementioned guarantees which are all of the form $\Ocal(c(w)D)$, our bound in $\Ocal\left(2L/k+kD\right)$ depends on the sum of all edge lengths $L$ and on the depth $D$ of the tree but not on the width $w$.

\paragraph{Notations and definitions.} The following definitions are used throughout the paper. A \textit{tree} $T=(V,E)$ is a connected acyclic graph. One specific node, called the root, is denoted $r\in V$. Every other node $u \in V\setminus \{r\}$ has a unique parent denoted by $p(u)$. For two nodes $u,v \in V$ we say that $u$ is a descendant of $v$ or equivalently that $v$ is an ancestor of $u$ and we denote by $u\preceq v$ if $v$ can be obtained from $u$ by iterating the parent function $p(\cdot)$. For any two nodes $u,v\in V$ we denote by $\LCA(u,v)$ the lowest common ancestor of $u$ and $v$. We also denote by $u\rightarrow v$ (resp. $u\leftrightarrow v$) the sequence of nodes in the shortest path from $u$ to $v$, excluding $v$ (resp. including $v$). A leaf of $T$ is a node $\ell \in V$ which has no descendant. The set of all leaves of $T$ is denoted by $\Lcal(T)$. We say that a tree is \textit{simple} if no node has degree $2$, except possibly for the root. 

A \textit{weighted} tree is a tree in which edges have a positive length, i.e. $E \subset V\times V \times \Rbb^+ $. An unweighted tree can be seen as a weighted tree where all edge lengths are equal to $1$. For $u\in V \setminus\{r\}$, the length of edge $(u,p(u))$ is denoted $d_u \in \Rbb^+$. For any two nodes $u,v \in V$ we denote by $d(u,v)$ the distance from $u$ to $v$, which can be defined by $d(u,v) = \sum_{w\in u\rightarrow \LCA(u,v)}d_w + \sum_{w \in v \rightarrow \LCA(u,v)}d_w$.

For some integer $k\geq 2$, a \textit{discrete configuration} on a tree $T$ is a collection $\bx\in \Nbb^{\Lcal(T)}$ satisfying $\sum_{\ell\in \Lcal(T)}x_\ell = k$. It can be extended to a collection $\bx \in \Nbb^V$ by setting $\forall u\in V: x_u = \sum_{\ell \preceq u}x_\ell$. The set of all discrete configurations is denoted $\Xcal(T)$. A \textit{fractional configuration} on $T$ is a collection $\by\in \Rbb_+^{\Lcal(T)}$ satisfying $\sum_{\ell\in \Lcal(T)}y_\ell = k$. It can  be extended to a collection $\by\in \Rbb_+^{V}$ by setting $\forall u\in V: y_u = \sum_{\ell\preceq u}y_\ell$. The set of all fractional configurations is denoted $\Ycal(T)$. For any two configurations (discrete or fractional) $\bx$ and $\bx'$, we define the optimal transport cost 
$\OT_T(\bx,\bx') = \sum_{u\in V}d_u|x_u-x'_u|.$

\paragraph{Structure of the paper.} The paper is organized as follows. In Section \ref{sec:analysis} we define and analyze the continuous tree-mining game (CTM). In Section \ref{sec:reductions} we present the reductions that allow to relate it to collective tree exploration (CTE) and layered tree traversal (LTT). In Section \ref{sec:applications} we then apply the results of 
Section \ref{sec:analysis} to obtain new guarantees for both problems. 

\section{Analysis of the continuous tree-mining game}\label{sec:analysis}
In this section, we introduce and analyze a two-player game that we call the continuous tree-mining game (CTM). We first define the game in Section~\ref{sec:game-def}, then we present an algorithm for the player of the game in Section \ref{sec:algorithm}, finally we provide an analysis of the player's algorithm in Section \ref{sec:compet-analysis}. The continuous tree-mining game is tightly connected to the problems of collective tree exploration (CTE) and of layered graph traversal (LGT), as we shall see in Section \ref{sec:reductions}.
\subsection{The continuous tree-mining game}\label{sec:game-def}
The state of the game is defined at any continuous time $t$. It consists of a \textit{simple weighted tree} $T(t)$, the evolution of which is controlled by the adversary, and of a \textit{discrete configuration} $\bx(t)$ over $T(t)$, the evolution of which is controlled by the player. We now precise the actions available to the player and the adversary at each instant.

\paragraph{Adversary.} The adversary can do three things: kill a leaf, give some children to a leaf, or elongate the edge leading to a leaf. The first two operations occur instantaneously, while the other is performed continuously over time. We now detail all three operations.

\textit{Leaf edge elongation:} Between discrete changes to the tree by the adversary, or discrete moves by the player, at any given continuous time $t$ the adversary distinguishes one leaf $\ell(t)$ and lets the length $d_{\ell(t)}$ of the corresponding edge increase at unit rate. The adversary is only allowed to choose for $\ell(t)$ a leaf with more than one robot, i.e. such that $x_{\ell(t)}(t)\geq 2$.

\textit{Forking at a leaf:} At discrete time points the adversary can choose  some leaf $\ell$ hosting a number $x_{\ell}\geq 3$ of robots, and endow this leaf with some number $m\in\{2,\ldots,x_\ell-1\}$ of children. Denoting by $\ell_1,\ldots,\ell_m$ these children, the newly created edges of the form $(\ell,\ell(i))$, $i\in[m]$, are initialised with some length $\delta$, where $0<\delta\leq 1$ is a quantity chosen by the player at the time of the fork.

\textit{Killing a leaf:} At discrete time points the adversary can choose to kill some leaf $\ell$. The corresponding edge $(\ell,p(\ell))$ is then also suppressed from the tree. In case node $p(\ell)$ is distinct from the root $r$, and had only one child $u$ besides $\ell$, we then merge the two edges $(u,p(\ell))$ and $(p(\ell),p(p(\ell)))$ into a single edge $(u,p(p(\ell)))$ (suppressing node $p(\ell)$) and endow this new edge with length $d_{p(\ell)}+d_u$, preserving the distance between $p(p(\ell))$ and $u$ in the new tree. At all times, the tree thus remains \textit{simple}.

\paragraph{Player.} At any instant the player can move robots from one leaf to another one, by changing the configuration $\bx$. The cost of going from configuration $\bx$ to $\bx'$ is equal to $\OT_T(\bx,\bx')$. When the adversary kills a leaf $\ell$, the player is forced to move the corresponding $x_\ell$ robots to other leaves that are still alive and to pay the associated cost. When the adversary forks at some leaf $\ell$, endowing it with $m$ children, the player chooses the length $\delta<1$ of the fork and must assign its $x_\ell$ robots to these newly created $m$ leaves, paying the associated cost. Between two discrete re-allocations of robots, when the adversary elongates the leaf $\ell(t)$, the player continuously incurs cost at rate $x_{\ell(t)}$.

\paragraph{Goal of the player.} A strategy for the player is a continuous-time algorithm which determines the response of the player to any modification of the tree by the adversary. It is formally defined as a function $\bx' = \Acal(T,\bx,T')$ which returns a new configuration $\bx'\in \Xcal(T')$ given the previous state of the game $\bx\in \Xcal(T)$ and a modified tree $T'$. Note that taking $T'=T$ allows to account for continuous elongation. We say that a strategy is $f(k,D)$-bounded, for some real-valued function $f(\cdot,\cdot)$, if it is such that no matter how the adversary plays, the cost incurred by the player is always less than $f(k,D)$, where $D$ denotes the depth of the highest leaf. We will be particularly interested in the case where $f$ is linear in $D$.

The interest of the continuous tree-mining game lies in the following reduction.
\begin{theorem}[\cref{th:rcte,th:reduction-lgt,th:tm-ctm,prop:fractional,th:tm-acte} in Section \ref{sec:reductions}]\label{th:recall}
For any $f(k,D)$-bounded strategy for the player, there is a collective tree exploration (CTE) algorithm $\Acal$ such that for any unweighted tree $T$ with $n$ nodes and depth $D$, 
$$\Runtime(\Acal,T) \leq \frac{2n+f(k,D)}{k}+D+1.$$
Also, there exists a collection of layered tree traversal (LTT) randomized algorithms $\{\Acal_k\}_{k\in \Nbb}$ satisfying for any layered tree $T$ of length $L$ and depth $D$,
$$\Ebb(\Cost(\Acal_k,T))\leq \frac{2L+f(k,D)}{k}+1.$$
\end{theorem}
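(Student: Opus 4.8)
Since \cref{th:recall} is only a restatement of the reductions proved in \cref{sec:reductions}, the plan is to build two chains of reductions, one terminating at collective tree exploration (CTE) and one at layered tree traversal (LTT). In both cases an $f(k,D)$-bounded CTM player strategy $\Acal$ is turned into an algorithm whose cost splits into a \emph{productive} term — the unavoidable parallel cost of traversing the edges, which contributes $2n/k$ for CTE and $2L/k$ for LTT — and an \emph{overhead} term that is, by construction, exactly the cost paid by the player in an appropriately generated instance of the game, hence at most $f(k,D)$, and which is amortised over the $k$ robots (respectively the $k$ units of probability mass). The game analysis of \cref{sec:analysis} is used only as a black box.

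\textbf{Reduction to CTE.} I would first pass through the discrete tree-mining game TM of \cite{cosson2023breaking}: \cref{th:tm-ctm} shows that every TM move (an integer-length elongation, a unit-length fork, a kill) is simulated by a short sequence of CTM moves, so an $f(k,D)$-bounded CTM strategy induces an $f(k,D)$-bounded TM strategy. Then I would run CTE in synchronous rounds while maintaining the invariant that the ``frontier forest'' of the subtrees not yet fully explored, together with the positions of the $k$ robots at its leaves, is the state of a TM instance: a group of robots advancing one edge deeper into its subtree is an elongation, reaching a branching node is a fork (the strategy dictates how to split the group among the new children), and exhausting a subtree is a kill (the strategy dictates where the freed robots go, and we walk them there, paying the transport cost). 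The round count is then at most $\lceil 2(n-1)/k\rceil$ for the productive traversals — every tree edge is crossed at most twice — plus the player's cost divided by $k$ for the reconfigurations, plus an additive $D+1$ absorbing the ceiling, the rounds in which fewer than $k$ robots move productively, and the tail in which a lone robot completes a deep branch by ordinary DFS. This is \cref{th:rcte}, possibly stated first for the asynchronous variant ACTE via \cref{th:tm-acte} and then specialised.

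\textbf{Reduction to LTT.} Here I would first invoke \cref{prop:fractional} to replace $\Acal$ by a \emph{fractional} CTM strategy maintaining $\by$ with $\sum_\ell y_\ell = k$ and transport cost still at most $f(k,D)$; this is legitimate because the player's cost functional is linear in the configuration, so the discrete strategy can be run independently on each infinitesimal unit of mass. Dividing $\by$ by $k$ yields a probability distribution over the leaves of the evolving tree. I would then note that the revelation of a layered tree $T$ of length $L$ and depth $D$ is itself a CTM adversary — moving from layer $i$ to layer $i+1$ is a batch of forks and kills at the current frontier, and the edge lengths appear as elongations — so the frontier process of $T$ is a CTM tree of total edge length $L$ and depth $D$. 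A randomized LTT algorithm that keeps its position distributed as $\by(t)/k$ has expected travelled distance equal to $1/k$ times the total mass-weighted distance moved in the game; bounding the latter by $2L + f(k,D)$ — the $2L$ being the cost the game forces on the miners to cover the $L$ units of edge length (crossed at most twice, as in CTE), the $f(k,D)$ the player's own movement — and adding a $+1$ for rounding the continuous distribution onto an actual frontier node of the last layer gives the family $\{\Acal_k\}$ of \cref{th:reduction-lgt}.

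\textbf{Main obstacle.} The delicate point, in both chains, is to encode the exploration/traversal process as a CTM adversary \emph{without ever violating the structural constraints of the game}: elongation requires at least two robots on the edge, forking at least three, the fork length must be at most $1$, and after each kill the tree must stay simple. Robots — or mass — too scarce to meet these thresholds cannot be delegated to the game and must be handled by hand, e.g. a singleton robot simply runs DFS on its own subtree; one must then check that the cost of all such exceptional moves, together with the rounding from a continuous to a discrete configuration (and from fractional to integral robot counts), is genuinely of lower order, i.e. fits inside the additive $D+1$ (respectively $+1$) and inflates neither the $2n/k$ (respectively $2L/k$) nor the $f(k,D)/k$ term. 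Carrying out this partition of the cost into productive, overhead, and negligible parts — and the factor-$1/k$ amortisation — is where the real work of \cref{sec:reductions} lies.
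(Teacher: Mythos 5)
Your chain for CTE is essentially the paper's: CTM $\to$ TM (Proposition \ref{th:tm-ctm}) $\to$ ACTE (Proposition \ref{th:tm-acte}, giving $\Moves \le 2n + f(k,D)$) $\to$ CTE (Proposition \ref{th:rcte}, dividing by $k$, rounding up, and adding $D$ to return to the root). That part is fine, and you are right that the structural constraints of CTM (elongation needs $x_\ell\ge 2$, forking needs $x_\ell\ge 3$) are exactly matched by TM's rule of giving $x_\ell - 1$ children, which is what makes Proposition \ref{th:tm-ctm} work cleanly.

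The LTT chain, however, has a genuine gap, and it stems from a misreading of Proposition \ref{prop:fractional}. That proposition does not ``fractionalise a CTM strategy''; it converts a \emph{fractional LTT strategy} (a rule mapping a probability distribution over one layer to a distribution over the next, paying optimal-transport cost) into a \emph{randomized LTT algorithm} of the same expected cost, by the standard online-coupling argument. You use it in the opposite direction and apply it to the wrong object. Once that prop is unavailable to you in the role you assign it, your plan of encoding the layer-by-layer revelation directly as a CTM adversary runs squarely into the constraints you yourself flag as ``the main obstacle'' (fractional mass below the elongation/fork thresholds), and you give no resolution. Your accounting of the $2L$ term is also not how it arises: CTM's cost functional does not contain a separate ``productive'' $2L$ piece bounded outside $f(k,D)$; the $2L$ comes from the ACTE bound $\Moves(\Bcal,k,T')\le 2n' + f(k,D)$ where $n' \approx L$ is the node count of $T'$, the layered tree with its length-$0$ edges contracted.

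The paper's route, and the missing idea here, is Proposition \ref{th:reduction-lgt}: take the ACTE algorithm $\Bcal$ already produced by the CTE chain and \emph{run it on $T'$}, revealing the layered tree to $\Bcal$ as exploration feedback. At the moment all $k$ robots have reached layer $i$ (and none has yet moved from a layer-$i$ node), read off the empirical distribution of robots divided by $k$; this is a fractional LTT strategy whose per-step transport cost is at most $1/k$ times the number of ACTE moves in that phase, hence at most $\frac{1}{k}\Moves(\Bcal,k,T') \le \frac{2L + f(k,D)}{k} + \Ocal(1/k)$ overall. Only \emph{then} does Proposition \ref{prop:fractional} enter, converting this fractional LTT strategy into the randomized algorithm $\Acal_k$ with the same expected cost. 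Going through ACTE is what makes the integrality constraints of the game vanish — the robots are honest integers throughout — and what produces the $2L$ term for free. Your proposal replaces this with a direct CTM-on-layered-tree reduction that, as written, cannot be completed.
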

The goal of the rest of this section is thus to prove the following theorem.
\begin{theorem}[Proposition \ref{prop:potential-arguement} in Section \ref{sec:analysis}]\label{th:main}
    There exists a $f(k,D)$-bounded strategy for the continuous tree-mining game (CTM), with $f(k,D) = \Ocal(k^2D)$.
\end{theorem}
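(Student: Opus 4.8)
The plan is to design an explicit strategy for the player and track a carefully chosen potential function $\Phi(t)$ that dominates the accumulated cost up to additive terms linear in $D$. The natural candidate for the player's strategy is to keep the configuration $\bx(t)$ as close as possible to the ``uniform'' fractional configuration on the current leaves: roughly, each of the (at most $k/2$) leaves hosting $\geq 2$ robots should carry the same number of robots, and a leaf should be ``blocked'' (given a single robot, so the adversary can no longer elongate it) once enough descendants have been created below it. Concretely I would let the player respond to a fork at leaf $\ell$ by splitting the $x_\ell$ robots as evenly as possible among the $m$ new children, choosing the fork length $\delta$ small enough (e.g. $\delta = 1/k$ or even smaller, depending on the remaining depth budget) that the transport cost of this reallocation, which is $\Ocal(x_\ell \cdot \delta)$ per child, telescopes to $\Ocal(k)$ overall per unit of depth; and respond to a kill by redistributing the orphaned robots to the nearest alive leaves.

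The key steps, in order: (1) Fix the player's rule and verify it always returns a valid discrete configuration in $\Xcal(T')$ — this requires checking that the even-split rounding never forces a leaf below $1$ robot and that leaves with a single robot are never elongated (which is exactly the adversary's constraint $x_{\ell(t)} \geq 2$, so blocked leaves are frozen). (2) Define the potential $\Phi$. I expect it to have the shape $\Phi = \sum_{u \in V} c_u \, d_u$ where the weight $c_u$ measures how ``overloaded'' the subtree at $u$ is relative to uniform — something like $c_u = x_u$ for unblocked branches, so that $\Phi$ is comparable to $k$ times the weighted size of the active part of the tree, which is itself $\Ocal(kD)$ since there are $\Ocal(k)$ active leaves each at depth $\leq D$. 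A cleaner alternative: let $\Phi$ be (a constant times) the total edge length of the subtree spanned by the $\leq k$ currently-occupied leaves, weighted by occupancy. (3) Show the drift inequality: during elongation the cost accrues at rate $x_{\ell(t)} \leq k$ while $\Phi$ increases at a controlled rate (or the elongation is ``charged'' to depth progress); at a fork, the player's transport cost is $\Ocal(k\delta)$ and $\Phi$ increases by at most $\Ocal(k\delta)\cdot(\text{something bounded})$; at a kill, the player's transport cost is bounded by the length of the killed pendant structure, which is released from $\Phi$, so $\Phi$ strictly decreases by at least the cost paid. (4) Sum the drift over the whole game: $\text{total cost} \leq \Phi(\text{end}) - \Phi(\text{start}) + (\text{charges to depth}) = \Ocal(k^2 D)$, where the factor $k^2$ arises because there are $\Ocal(k)$ active leaves, each contributing an $\Ocal(kD)$ term to the potential budget (the blocking mechanism is what caps the number of simultaneously-active elongating leaves at $\Ocal(k)$, and each can traverse depth $\Ocal(D)$ at rate-$k$ cost).

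The main obstacle I anticipate is controlling the cost of reallocations after a kill: when a deep leaf is killed, its robots must travel back up and over to alive leaves, potentially a distance $\Theta(D)$, and a naive adversary could try to force many such long-range relocations. The resolution must be that each kill permanently removes the corresponding pendant edges from the tree, and those edge lengths were already ``paid into'' $\Phi$ (or into the depth budget) by the earlier elongation that built them; so the redistribution cost is amortized against the decrease of $\Phi$, and the adversary cannot recharge the same edge length twice without re-forking and re-elongating, which re-incurs its own bounded cost. Making this amortization rigorous — i.e. exhibiting a potential whose decrease at a kill is provably at least the transport cost the player is forced to pay, simultaneously with the potential increasing slowly enough at forks and elongations — is the crux; this is presumably what Proposition~\ref{prop:potential-arguement} establishes, and where the $\Ocal(k^2 D)$ (rather than, say, $\Ocal(kD)$) bound becomes necessary.
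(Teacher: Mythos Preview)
You have the right high-level shape (potential-function argument, amortize kills against released potential) and you correctly flag kill amortization as the crux, but the concrete choices you propose do not close. With a linear potential $\Phi = \sum_u d_u x_u$ (which equals $\sum_\ell x_\ell \cdot \mathrm{depth}(\ell)$), a kill of leaf $\ell$ followed by moving its $x_\ell$ robots to a sibling leaf $\ell'$ costs $x_\ell \, d(\ell,\ell')$, while, writing $v=\LCA(\ell,\ell')$, the net change in $\Phi$ is $x_\ell\bigl(d(v,\ell') - d(\ell,v)\bigr)$; hence the cost paid plus $\Delta\Phi$ equals $2 x_\ell\, d(v,\ell')$, which the adversary can make arbitrarily large by arranging that the nearest alive leaf sits far below the branching point. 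Nothing in the deleted pendant edge pays for this. Your ``weighted-by-occupancy'' alternative has the same defect: moving robots into $\ell'$ increases the potential along the path $\ell'\to v$, and there is no reason that increase is dominated by the release of the killed edge. So the amortization you hoped for in step~(3) fails precisely at kills.

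What the paper actually does is structurally different from your sketch on three points. First, the potential is $\Psi(T,\bx)=\sum_u d_u\phi(x_u)$ with a \emph{strongly convex} $\phi(x)=ax+bx^2$ and $a=\Theta(k)$; second, the player's strategy is not a hand-designed even-split rule but is \emph{defined} as the $\argmin$ of $\Psi(T',\bx')+\OT(\bx,\bx')$; third --- and this is the missing idea --- the analysis introduces the \emph{fractional} minimizer $\by\in\Ycal(T)$ of $\Psi(T,\cdot)$ and proves the master inequality
\[
\Cost(t)+\Psi(T(t),\bx(t))\;\le\;\gamma\,\Psi(T(t),\by(t)),
\]
with $\Psi(\by(t))\le \phi(k)D=\Ocal(k^2D)$. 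The budget is $\Psi(\by)$, not $\Psi(\bx)$: at a kill, $\Psi(\by)$ \emph{increases} (the minimization becomes more constrained) by at least $b\,d_\ell y_\ell^2$, and the player's cost plus the jump in $\Psi(\bx)$ is shown to be at most $\gamma$ times this gain. The large linear coefficient $a=\Theta(k)$ is what forces the fractional optimum to stay balanced enough that $x_\ell\le y_\ell+3/2$ and $y_\ell\ge 1/2$ throughout, making every elongation and every kill chargeable against growth of $\Psi(\by)$. None of these ingredients --- the strongly convex $\phi$, the potential-minimizing dynamics, or the comparison to the fractional optimum --- appear in your outline, and without them the kill amortization does not go through.
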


\subsection{A potential-based algorithm}\label{sec:algorithm}
We assign to any configuration $\bx \in \Xcal(T)$ a potential $\Psi(T,\bx)$ defined by
\begin{equation}
\Psi(T,\bx):=\sum_{u\in V\setminus \{r\}}d_u \phi(x_u),
\end{equation}
where $\phi$ is a strongly convex function to be determined later. We then define the strategy of the player by the following equation, 
\begin{equation}\label{eq:dynamics}
    \Acal(T,\bx,T') = \argmin_{\bx'\in \Xcal(T')} \Psi(T',\bx') + \OT_{T'}(\bx,\bx'),
\end{equation}
in which ties are always broken in favor of any $\bx'\neq \bx$. 
Note that Algorithm \eqref{eq:dynamics} thus enforces the constraint that at all times $t$ and for any configuration $\bx'\neq \bx(t)$,
$\Psi(\bx(t))-\Psi(\bx') < \OT_{T(t)}(\bx(t),\bx').$
We start by establishing some desirable properties on the dynamics of this algorithm.

\begin{proposition}[Dynamics of $\bx(t)$ following \eqref{eq:dynamics}]\label{prop_dynamics_x} While the adversary elongates a leaf, the moves of the player are all from the elongated leaf to other leaves of the tree, and no two robots are moved simultaneously. When the adversary deletes a leaf, all moves of the player are from the deleted leaf to other leaves. When the adversary forks a leaf $l$, with $m\leq x_l-1$ children denoted $\{\ell_1,\dots,\ell_m\}$, there is a choice of a small real $\delta>0$ such that the new configuration $\bx'$ satisfies for all previously existing node $u\not\in\{\ell_1,\dots,\ell_m\}:x_u = x'_u$ and for all newly created leaves $\ell\in\{\ell_1,\dots,\ell_m\}:x_\ell \in \{\floor{x_l/m},\ceil{x_l/m}\}$. At all times, the configuration of the game $\bx$ satisfies $\forall \ell\in \Lcal(T): x_\ell\geq 1$.
\end{proposition}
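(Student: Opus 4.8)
The plan is to read \eqref{eq:dynamics} as a lazy process: between adversary events $\bx(t)$ is constant, and at an event producing the tree $T'$ it jumps from the current configuration $\bx$ --- which, by the stability maintained at the previous event, is a minimizer of $F_T(\bx'):=\Psi(T,\bx')+\OT_T(\bx,\bx')$ over $\Xcal(T)$ --- to a minimizer $\bx^\star$ of $F_{T'}(\bx'):=\Psi(T',\bx')+\OT_{T'}(\bx,\bx')$ over $\Xcal(T')$, ties broken toward $\bx^\star\neq\bx$. The structural engine is that, for fixed $\bx$, the map $\bx'\mapsto F_{T'}(\bx')$ is laminar convex on $\Xcal(T')$: the subtrees of $T'$ form a laminar family, one has $x'_u=\sum_{\ell\preceq u}x'_\ell$, and $F_{T'}(\bx')=\sum_u d_u(\phi(x'_u)+|x'_u-x_u|)$ is a sum of univariate convex functions of these laminar sums. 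I would use the two standard consequences: (i) $\bx'$ minimizes $F_{T'}$ over $\Xcal(T')$ if and only if no single-robot move $\bx'\mapsto\bx'-e_a+e_b$ strictly decreases $F_{T'}$; and (ii) the set of minimizers of $F_{T'}$ is connected under single-robot moves. I would also invoke a genericity reduction: after an arbitrarily small generic perturbation of the positive edge lengths, the functions $\bx'\mapsto F_T(\bx')-F_T(\bx)$, which are affine in $d_{\ell(t)}$ during an elongation, have pairwise distinct zeros, so at every instant at most one configuration besides $\bx(t)$ attains the minimum; the stated conclusions for general lengths then follow by a limiting argument.

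\textbf{Elongation.} During the elongation of $\ell(t)$ only $d_{\ell(t)}$ varies, and since a leaf lies at the end of any path through it, a single-robot move $\bx\mapsto\bx-e_a+e_b$ changes $F_T$ by an amount whose dependence on $d_{\ell(t)}$ occurs only when $\ell(t)\in\{a,b\}$: the coefficient of $d_{\ell(t)}$ in this change equals $1-(\phi(x_{\ell(t)})-\phi(x_{\ell(t)}-1))$ if $a=\ell(t)$, equals $1+(\phi(x_{\ell(t)}+1)-\phi(x_{\ell(t)}))$ if $b=\ell(t)$, and equals $0$ otherwise. Taking $\phi$ nondecreasing, the latter two are $>0$ and $=0$, so as $d_{\ell(t)}$ grows past the value at which $\bx$ was the unique minimizer, the only single-robot moves whose $F_T$-change can decrease --- hence the only moves that can reach parity and strip $\bx$ of minimality --- are those taking a robot off $\ell(t)$. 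At the first such instant $t_1$, genericity makes the minimizer set equal to $\{\bx,\bx^\star\}$, and consequence (ii) forces $\bx^\star$ to differ from $\bx$ by a single move, which by the preceding dichotomy takes a robot off $\ell(t)$. Finally $\bx^\star$ is again a minimizer of $\Psi(\cdot)+\OT(\bx^\star,\cdot)$ by the triangle inequality, hence momentarily the unique minimizer, so the argument repeats and yields single-robot moves off $\ell(t)$ at distinct times.

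\textbf{Killing and forking.} For a kill at $\ell$: the merge operation preserves all pairwise distances, so the minimization on $T'$ is the one on $T$ with $\ell$ deleted and its $x_\ell$ robots declared homeless; using consequence (ii) and the stability of $\bx$ on $T$, I would show that the minimizer is reached from $\bx$ by reallocating precisely those $x_\ell$ robots among the surviving leaves, leaving every other leaf load untouched --- any connecting single-robot move that disturbed another leaf would, after uncrossing, descend from a move that was already improving for $\bx$ on $T$, a contradiction. For a fork of $l$ into $\ell_1,\dots,\ell_m$ with new edge lengths $\delta$: collapsing the $\ell_i$ back into $l$ identifies $F_{T'}(\bx')$ with $F_T(\hat{\bx}')+\delta(\hat{x}'_l+\sum_i\phi(x'_{\ell_i}))$, where $\hat{\bx}'\in\Xcal(T)$ is the collapse of $\bx'$; since $F_T(\hat{\bx}')-F_T(\bx)$ is bounded below by a positive constant over the finite set $\{\hat{\bx}'\neq\bx\}$ while the $\delta$-term is $\Ocal(\delta)$, the player may pick $\delta$ small enough that the minimizer has $\hat{\bx}'=\bx$ --- i.e.\ every old load stays put and only $l$'s $x_l$ robots get split among the $\ell_i$ --- and then minimizing $\sum_i\phi(x'_{\ell_i})$ subject to $\sum_i x'_{\ell_i}=x_l$, by convexity of $\phi$, gives $x'_{\ell_i}\in\{\floor{x_l/m},\ceil{x_l/m}\}$.

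\textbf{The load invariant, and the main difficulty.} The bound $x_\ell\ge1$ for every leaf then follows by induction over events from the three preceding paragraphs: an elongation removes robots only from $\ell(t)$, which carries $x_{\ell(t)}\ge2$, and only increases loads that were already $\ge1$; a kill deletes its own leaf and only increases other loads; a fork splits $x_l\ge3$ robots into $m\le x_l-1$ groups each of size $\ge\floor{x_l/m}\ge1$; the base case is the root carrying all $k$ robots. The step I expect to be most delicate is the structural backbone: establishing the laminar convexity of $F_{T'}$ together with the exact single-robot-move characterizations (i)--(ii) of its minimizers over the simplex $\Xcal(T')$, and setting up the genericity reduction so that the instant ``$\bx$ stops being the unique minimizer'' coincides with the instant ``one single-robot move reaches parity'', which is what rules out a cascade of simultaneous moves. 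The small-$\delta$ analysis for forks and the uncrossing argument for kills are comparatively routine.
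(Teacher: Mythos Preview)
Your approach is largely parallel to the paper's --- the same case split (elongation, deletion, fork) and the same fork argument via small $\delta$ --- but the central technical tool is different. The paper proves one concrete inequality (its Lemma~\ref{lemma:tension-decomposition}): for any optimal transport plan $\ell_1\to\ell_1',\dots,\ell_h\to\ell_h'$ from $\bx$ to $\bx'$, the potential drop satisfies $\tau(\bx\to\bx')\le\sum_i\tau_\bx(\ell_i\to\ell_i')$, with \emph{strict} inequality whenever two paths in the plan overlap; the strictness comes from the strict convexity of $\phi$. This single lemma then handles all three cases directly, with no perturbation or limiting step. Your claims (i)--(ii) are essentially the discrete-convexity repackaging of the non-strict half of that lemma, and the strict-overlap clause is precisely what you replace by the genericity reduction.

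That substitution is where a genuine gap opens, specifically for the clause ``no two robots are moved simultaneously'' during elongation. Under generic edge lengths your argument is fine: the minimizer set at the first critical time has two elements, and (ii) forces them to be adjacent. But the limiting step does not transfer this conclusion. As lengths approach a non-generic configuration, several single-move times may coalesce; at the degenerate instant the minimizer set can contain configurations two or more moves away from $\bx$, and nothing in your setup prevents the $\argmin$ rule from landing on one of those. The paper dispatches exactly this case via the strict half of its lemma: if $h\ge2$ and all moves emanate from the elongated leaf $l$, the transport paths share the edge $(l,p(l))$, the inequality becomes strict, and one gets $\sum_i d(\ell_i,\ell_i')<\sum_i\tau_\bx(\ell_i\to\ell_i')\le\sum_i d(\ell_i,\ell_i')$, a contradiction. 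To close the gap you would either need to prove that the minimizer set at the critical time is not merely connected but consists \emph{only} of single moves off $l$ (which amounts to reproving the strict-overlap clause), or to refine the tie-breaking in \eqref{eq:dynamics} to always select an adjacent minimizer. Either way, the work you flagged as ``most delicate'' is indeed the crux, and carrying it out rigorously will reconstruct something equivalent to Lemma~\ref{lemma:tension-decomposition}. Your deletion sketch (``uncrossing'') is also thinner than the paper's argument, which again leans on the tension decomposition to rule out moves not originating at the killed leaf.
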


The proof of this theorem relies on the notion of `tension' between configurations. For any two configurations $\bx$ and $\bx'$ we call the tension from $\bx$ to $\bx'$ and we denote by $\tau(\bx\rightarrow \bx')$ the decrease in potential obtained when going from configuration $\bx$ to configuration $\bx'$, i.e.  $\tau(\bx\rightarrow \bx') =\Psi(\bx)- \Psi(\bx')$. For a configuration $\bx$ and two leaves $\ell,\ell'$ we call the tension from $\ell$ to $\ell'$ in $\bx$ and denote by $\tau_\bx(\ell\rightarrow \ell')$ the decrease in potential obtained by displacing a robot from $\ell$ to $\ell'$ in configuration $\bx$, i.e. $\tau_\bx(\ell\rightarrow\ell') = \Psi(\bx) - \Psi(\bx+\be_{\ell'}-\be_{\ell})$. Note that this quantity is only defined if $x_\ell\geq 1$, which will always be the case as stated in Proposition \ref{prop_dynamics_x}. The structure imposed on the potential $\Psi$ then leads to the following lemma, which essentially says that atomic moves (where only one robot moves at a time) are favored over simultaneous moves by Algorithm \eqref{eq:dynamics}. 

\begin{lemma}\label{lemma:tension-decomposition}
    Consider configurations $\bx$ and $\bx'$ such that $||\bx-\bx'||_1 = 2h$ (where $||\bx-\bx'||_1 = \sum_{\ell\in \Lcal(T)}|x_\ell-y_\ell|$ is always even) and consider $\ell_1\rightarrow \ell_1',  \dots,  \ell_h\rightarrow \ell_h'$ an optimal transport plan going from $\bx$ to $\bx'$. The following inequality is always satisfied, 
    $$\tau(\bx\rightarrow \bx') \leq \tau_\bx(\ell_1 \rightarrow \ell_1')+\dots+\tau_\bx(\ell_h\rightarrow \ell_h').$$
    Furthermore, this inequality is strict if there are overlaps in the transport plan, i.e. a pair $i,j$ such that the shortest paths $\ell(i)\rightarrow\ell'(i)$ and $\ell_j \rightarrow\ell_j'$ have an intersection of positive length. 
\end{lemma}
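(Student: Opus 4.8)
The plan is to expand both sides node by node and reduce the statement to a one-dimensional convexity fact, the only real subtlety being a structural property of optimal transport plans on a tree. First I would write, using $\Psi(\bx)=\sum_{u\in V\setminus\{r\}}d_u\phi(x_u)$,
\[
\tau(\bx\to\bx')=\sum_{u}d_u\big(\phi(x_u)-\phi(x'_u)\big),\qquad
\tau_\bx(\ell_i\to\ell_i')=\sum_{u}d_u\big(\phi(x_u)-\phi(x_u+\Delta_u^i)\big),
\]
where $\Delta_u^i:=\1[\ell_i'\preceq u]-\1[\ell_i\preceq u]\in\{-1,0,1\}$ records the effect of the $i$-th move of the plan on the count at $u$: it equals $+1$ exactly on the path $\ell_i'\to\LCA(\ell_i,\ell_i')$, equals $-1$ exactly on $\ell_i\to\LCA(\ell_i,\ell_i')$, and is $0$ elsewhere. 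Summing the configuration updates gives $x'_u=x_u+s_u$ with $s_u:=\sum_{i=1}^h\Delta_u^i$. Hence it suffices to prove, for each node $u$, the scalar inequality $\phi(x_u)-\phi(x_u+s_u)\le\sum_{i=1}^h\big(\phi(x_u)-\phi(x_u+\Delta_u^i)\big)$, with the global strictness to be traced back to at least one node.

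The step I expect to be the main obstacle — in the sense of being the one to state carefully — is that the optimality of the transport plan forbids cancellation across any edge, which is exactly what makes the scalar inequality true. The cost of the plan is $\sum_u d_u\cdot\#\{i:\text{edge }(u,p(u))\text{ lies on the path }\ell_i\to\ell_i'\}$, whereas $\OT_T(\bx,\bx')=\sum_u d_u|x_u-x'_u|=\sum_u d_u|s_u|$ and $|s_u|$ is the absolute value of the net flow across that edge. Since the number of crossings is always at least the net flow and every edge length is positive, optimality forces, for every $u$, the number of crossings to equal $|s_u|$; equivalently, all indices $i$ with $\Delta_u^i\neq0$ share a common sign, and there are exactly $|s_u|$ of them. (Along the way note that $\Delta_u^i=-1$ forces $\ell_i\preceq u$, so $x_u\ge x_{\ell_i}\ge1$ and every term above is well defined, which also justifies that each $\tau_\bx(\ell_i\to\ell_i')$ makes sense.)

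Granting this, fix $u$ and assume $s_u\ge0$ (the case $s_u\le0$ being symmetric with $-1$ steps). Then exactly $s_u$ of the $\Delta_u^i$ equal $+1$ and the remaining ones vanish, so the right-hand side is $s_u\big(\phi(x_u)-\phi(x_u+1)\big)$ and the desired inequality reads $\phi(x_u+s_u)-\phi(x_u)\ge s_u\big(\phi(x_u+1)-\phi(x_u)\big)$; this follows by telescoping $\phi(x_u+s_u)-\phi(x_u)=\sum_{j=1}^{s_u}\big(\phi(x_u+j)-\phi(x_u+j-1)\big)$ and using that the consecutive increments of a convex $\phi$ are non-decreasing. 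Summing over all $u$ yields the claimed inequality. For the refinement, an overlap of positive length between the paths $\ell_i\to\ell_i'$ and $\ell_j\to\ell_j'$ means some edge $(u,p(u))$ lies on both; by the no-cancellation property the corresponding $\Delta_u^i$ and $\Delta_u^j$ are equal and nonzero, so $|s_u|\ge2$, and the telescoping bound at $u$ is then \emph{strict} because strong convexity makes at least one increment $\phi(x_u+j)-\phi(x_u+j-1)$ with $j\ge2$ strictly larger than $\phi(x_u+1)-\phi(x_u)$. Since every other node still contributes a valid $\le$, the total inequality is strict.
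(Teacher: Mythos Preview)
Your argument is correct. You unfold both sides node by node, use optimality of the plan to rule out opposite-sign crossings on any edge, and then reduce each node's contribution to the elementary convexity fact that consecutive increments of $\phi$ are monotone; the strict case follows cleanly from $|s_u|\ge 2$ at an overlapped edge together with strict convexity.

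The paper proceeds differently: it argues by induction on $h$, peeling off one move $\ell\to\ell'$ and setting $\bx''=\bx'-\be_{\ell'}+\be_{\ell}$. On the path $\ell\leftrightarrow\ell'$ it uses the segment inclusion $[x_u+z_u,\,x'_u-z_u]\subset[x_u,x'_u]$ (valid because optimal plans never send robots in opposite directions across an edge) and the chord inequality for convex $\phi$ to obtain the intermediate bound $\tau(\bx\to\bx')\le\tau(\bx\to\bx'')+\tau_\bx(\ell\to\ell')$, then applies the induction hypothesis to $\tau(\bx\to\bx'')$. Your direct decomposition is shorter and avoids the induction scaffolding, but the paper's route isolates that intermediate inequality as a byproduct, and this is reused verbatim later in the analysis of leaf deletions (Lemma~\ref{lemma:deletion-tension}). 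If you keep your proof, you may want to record that one-step bound separately, since it does not fall out of the node-by-node summation without redoing a small piece of the argument. One very minor remark: your aside that $\tau_\bx(\ell_i\to\ell_i')$ is well defined is best justified by noting that each $\ell_i$ is a source of the plan, so $x_{\ell_i}>x'_{\ell_i}\ge 0$ and hence $x_{\ell_i}\ge 1$; the detour through $x_u$ is not needed there.
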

\begin{proof}
    We will show the property by induction on $h$. We observe that the property is true for $h=1$. We now assume that the property is true for some $h\geq 1$ and aim to show it at $h+1$. We consider two leaves $\ell \rightarrow \ell'$ of the transport plan from $\bx$ to $\bx'$ and we define $\bx'' = \bx'-\bz$ with $\bz=\be_{\ell'}-\be_{\ell}$, the configuration where the move $\ell\rightarrow\ell'$ did not take place. Observe that $||\bx-\bx''||_1 = 2h$ will later enable us to apply the induction hypothesis to $\tau(\bx\rightarrow\bx'')$. We decompose as follows, 
    $$\Psi(\bx)-\Psi(\bx') = \sum_{u\not\in \ell\leftrightarrow \ell'}d_u(\phi(x_u)-\phi(x'_u))+ \sum_{u\in \ell\leftrightarrow \ell'}d_u(\phi(x_u)-\phi(x'_u)).$$
In the first sum we have $\forall u\not\in \ell\leftrightarrow \ell': x_u'=x_u''$ because the configuration value at these nodes is not affected by moves of the form $\ell\rightarrow\ell'$. 
In the second sum, we observe that by optimality of the transport map, the sign of $x_u'-x_u$ is always the same as the sign of $z_u$. This is because in an optimal transport plan, no two robots are transported on the same edge in opposite directions. Therefore we have the following inclusion on segments $[x_u+z_u,x_u'-z_u]\subset [x_u,x_u']$, which by convexity of $\phi$ implies that the chord of the inner segment is below the chord of the outer segment, $\phi(x_u+z_u)+\phi(x_u'-z_u)\leq \phi(x_u)+\phi(x_u')$. Rearranging the terms, we get the following identity, $\phi(x_u)-\phi(x_u') \leq \phi(x_u)-\phi(x_u'')+\phi(x_u)-\phi(x_u+z_u)$. Combining these observations yields, 
    $$\Psi(\bx)-\Psi(\bx') \leq \sum_{u\in V}d_u(\phi(x_u)-\phi(x_u''))+ \sum_{u\in \ell\leftrightarrow \ell'}d_u(\phi(x_u)-\phi(x_u+z_u)),$$
    where we recognize the first term to be $\tau(\bx\rightarrow \bx'')$ and the second term to be $\tau_\bx(\ell\rightarrow \ell')$. We conclude by applying the induction hypothesis. 

    By strict convexity of $\phi$, it is a direct observation that the inequality above is strict as soon as the inclusion $[x_u+z_u,x_u'-z_u]\subset [z_u,z_u']$ is strict, i.e. when there are overlaps in the transport plan. 
\end{proof}
The rest of the proof of Proposition \ref{prop_dynamics_x} is decomposed into \cref{lemma:elongation-tension,lemma:deletion-tension,lemma:fork-tension,lemma:minoring-tension}, in Appendix \ref{ap:control_of_x}. These lemmas each treat separately the cases of leaf elongation, leaf deletion and leaf fork, all of them relying on Lemma \ref{lemma:tension-decomposition}.

\subsection{Competitive analysis by potential function}\label{sec:compet-analysis}

For a given tree $T$, we consider the optimal \textit{fractional} configuration $\by=\{y_\ell\}_{\ell\in \cL(T)} \in \Ycal(T)$ defined as the solution of the following convex optimization problem,
\begin{equation}\label{eq:opt-problem}
\begin{array}{ll}
\min &\Psi(T,\by)=\sum_{u\in V\setminus \{r\}}d_u \phi(y_u)\\
\hbox{over}&\by\in \Ycal(T),
\end{array}
\end{equation}
where we further assume that $\phi$ is twice differentiable, with $\phi''>0$ and $\phi'> 0$.

The goal of this section is for some suitable constant $\gamma$, and some choice of function $\phi$, to prove that the following equation is verified at all times, irrespective of the actions taken by the adversary,
\begin{equation}\label{eq:master_inequality}
\Cost(t)+\Psi(T(t),\bx(t))\le \gamma \Psi(T(t),\by(t)).
\end{equation}
This readily leads to conclude that the strategy is $f(k,D)$-bounded, with $f(k,D) = \gamma\phi(k)D$. This is because $\gamma \phi(k)D$ is a simple upper bound of \eqref{eq:opt-problem}. The rest of this section is dedicated to sowing \eqref{eq:master_inequality} for a quadratic $\phi(\cdot)$ and a constant $\gamma$, thereby proving Theorem \ref{th:main}. We thus analyze the behaviour of $\by$, starting with Proposition~\ref{proposition: static} which provides a characterization of $\by$. We then study the dynamics of $\by$ over the course of the game (elongations, deletions, forks) in Proposition~\ref{proposition:dynamics}. After, we prove some relations between $\bx$ and $\by$, for some quadratic function $\phi$ in Proposition \ref{prop:1}. Finally, we conclude with the proof of \eqref{eq:master_inequality} in Proposition~\ref{prop:potential-arguement}. 

\begin{proposition}[Characterization of $\by$]\label{proposition: static}
For any tree $T$ of the game, Equation \eqref{eq:opt-problem} has a unique solution. Furthermore, $\by\in \Ycal(T)$ is the solution of \eqref{eq:opt-problem} if and only if there exists $\lambda\in \Rbb$ s.t.
\begin{equation}\label{eq:lagrange}
 \forall \ell\in\cL(T), \sum_{u\in \ell \rightarrow r}d_{u}\phi'(y_u)\ge \lambda, \quad \text{and} \quad y_\ell>0\Rightarrow \sum_{u\in \ell \rightarrow r}d_{u}\phi'(y_u) = \lambda.   
\end{equation}
Consequently, given any two leaves, $\ell,\ell'$ for which $y_\ell,y_\ell'>0$ we have
\begin{equation}\label{eq:equilibrium}
    \sum_{u\in \ell \rightarrow \LCA(\ell,\ell')}d_u\phi'(y_u) = \sum_{u\in \ell' \rightarrow \LCA(\ell,\ell')}d_u\phi'(y_u).
\end{equation}

\end{proposition}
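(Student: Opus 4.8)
The plan is to treat \eqref{eq:opt-problem} as a constrained convex minimization problem and write down its KKT conditions. First I would observe that $\Psi(T,\cdot)$, as a function of the leaf-variables $\{y_\ell\}_{\ell\in\cL(T)}$, is strictly convex: writing $y_u=\sum_{\ell\preceq u}y_\ell$ makes each $y_u$ a fixed nonnegative linear functional of the leaf vector, and since $\phi''>0$ the map $y_u\mapsto d_u\phi(y_u)$ is strictly convex in $y_u$; summing over $u\in V\setminus\{r\}$ and using that the collection of functionals $\{y_\ell\}_{\ell\in\cL}$ together with the edge-to-leaf incidences spans enough directions (in a simple tree every non-root node $u$ has $y_u$ depending on at least one leaf, and the leaves themselves appear) gives strict convexity on the affine slice $\sum_\ell y_\ell=k$. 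Combined with the fact that the feasible set $\Ycal(T)=\{\by\ge 0:\sum_\ell y_\ell=k\}$ is compact and nonempty, this yields existence and uniqueness of the minimizer. I would state strict convexity carefully since it is what both uniqueness and \eqref{eq:equilibrium} ultimately rest on; the one subtlety is that two distinct leaf vectors could in principle induce the same $\{y_u\}_{u\ne r}$, but in a simple tree distinct leaves are separated by their pendant edges, so the map $\by\mapsto(y_u)_{u\in V\setminus\{r\}}$ is injective and strict convexity transfers.

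Next I would form the Lagrangian $\mathcal{L}(\by,\lambda,\mu)=\Psi(T,\by)-\lambda\big(\sum_{\ell}y_\ell-k\big)-\sum_{\ell}\mu_\ell y_\ell$ with $\mu_\ell\ge 0$. Differentiating $\Psi$ with respect to $y_\ell$ and using $\partial y_u/\partial y_\ell=\one[\ell\preceq u]$ gives $\partial\Psi/\partial y_\ell=\sum_{u\in\ell\rightarrow r}d_u\phi'(y_u)$ (the sum is over the ancestors of $\ell$ up to but not including $r$, matching the notation $\ell\rightarrow r$). Stationarity then reads $\sum_{u\in\ell\rightarrow r}d_u\phi'(y_u)=\lambda+\mu_\ell$ for every leaf $\ell$, and complementary slackness is $\mu_\ell y_\ell=0$. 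Since the objective is convex, the constraints affine, and Slater's condition trivially holds (e.g. put mass $k/|\cL|$ on every leaf), KKT is necessary and sufficient for optimality. Eliminating $\mu_\ell\ge 0$ gives exactly \eqref{eq:lagrange}: $\sum_{u\in\ell\rightarrow r}d_u\phi'(y_u)\ge\lambda$ always, with equality whenever $y_\ell>0$.

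Finally, \eqref{eq:equilibrium} follows by a short manipulation: if $y_\ell,y_{\ell'}>0$ then both satisfy the equality in \eqref{eq:lagrange} with the same $\lambda$, so $\sum_{u\in\ell\rightarrow r}d_u\phi'(y_u)=\sum_{u\in\ell'\rightarrow r}d_u\phi'(y_u)$; subtracting from both sides the common tail $\sum_{u\in \LCA(\ell,\ell')\rightarrow r}d_u\phi'(y_u)$ (the path from the lowest common ancestor up to the root is shared by both leaves) leaves $\sum_{u\in\ell\rightarrow \LCA(\ell,\ell')}d_u\phi'(y_u)=\sum_{u\in\ell'\rightarrow \LCA(\ell,\ell')}d_u\phi'(y_u)$, as claimed.

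The only step requiring genuine care is the strict convexity claim and the attendant injectivity of $\by\mapsto(y_u)_{u\ne r}$; everything else is a routine application of convex duality. I would make sure to invoke $\phi''>0$ and $\phi'>0$ explicitly — the former for uniqueness/strict convexity, the latter so that the stationarity identity is stated in a clean monotone form (and so that, in later propositions, $\phi'(y_u)$ can serve as a per-edge "marginal cost"). I do not expect any obstruction beyond bookkeeping with the ancestor-path notation $\ell\rightarrow r$ versus $\ell\leftrightarrow r$.
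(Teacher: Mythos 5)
Your proof is correct and the KKT part is essentially what the paper does: form the Lagrangian, derive stationarity $\sum_{u\in\ell\rightarrow r}d_u\phi'(y_u)=\lambda+\mu_\ell$ and complementary slackness $\mu_\ell y_\ell=0$, observe that convexity plus affine constraints make KKT necessary and sufficient, then subtract the shared tail $\sum_{u\in\LCA(\ell,\ell')\rightarrow r}d_u\phi'(y_u)$ to obtain \eqref{eq:equilibrium}.

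Where you diverge is in the justification of strict convexity and hence uniqueness. You argue directly: each leaf $\ell$ is itself a node in $V\setminus\{r\}$, so the term $d_\ell\phi(y_\ell)$ appears explicitly in $\Psi$ and, since $\phi''>0$ and $d_\ell>0$, the sum of these leaf-indexed terms is already strictly convex in the leaf vector $\by$; the remaining terms (indexed by internal nodes) are convex, so $\Psi(T,\cdot)$ is strictly convex on $\Ycal(T)$. This is an elementary observation and is enough for the proposition. The paper instead computes the full Hessian $\nabla^2_\by\Psi(T,\by)=\bigl(\sum_{u\succeq\LCA(\ell,\ell')}d_u\phi''(y_u)\bigr)_{\ell,\ell'}$, identifies it as an ultrametric matrix, and invokes a theorem stating that an ultrametric matrix with no two equal rows is positive definite. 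The paper's route is heavier, but it deliberately introduces the ultrametric-matrix machinery that is then reused in Proposition \ref{proposition:dynamics} (and Lemma \ref{lemma:ultrametric_equation}) to control the sign of the derivative $\dot{\by}$; your shortcut buys nothing there, so the paper's investment pays off downstream even though it is overkill for uniqueness alone. You are also right that $\phi'>0$ plays no role in this particular proposition and is reserved for the dynamics analysis. No gap in your argument.
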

\begin{proof} We start by justifying the existence and uniqueness of $\by$. We note that for any tree $T$, the function $\by\rightarrow \Psi(T,\by)$ is twice differentiable and compute its Hessian,
\begin{equation}\label{eq:hessian}
    \nabla^2_\by \Psi(T,\by) = \left(
    \sum_{u\succeq \LCA(\ell,\ell')}d_u\phi''(y_u)
\right)_{\ell,\ell' \in \Lcal(T)}
\end{equation}
which is an ultrametric matrix (see \cite[Definition 3.2, p. 58]{dellacherie2014inverse}), because $\phi''> 0$. We recall that an ultrametric matrix is always semidefinite positive because it can be seen as the covariance matrix of a Brownian motion on a tree (see e.g. \cite{sturmfels2019brownian}). Furthermore, it is non-singular if no two rows are equal \cite[Th 3.5 (ii)]{dellacherie2014inverse}. This applies here since $\forall u\in V: d_u>0$. Thus $\by \rightarrow\Psi(T,\by)$ is strictly convex and its minimum on the compact and convex set $\Ycal(T)$ exists and is unique.

We now give a characterization of $\by$ by applying the KKT conditions. Let $\mu=(\mu_\ell)_{\ell\in \cL(T)}$ denote the vector of non-negative Kuhn-Tucker multipliers associated with inequality constraints $y_\ell \geq  0$, and $\lambda\in \Rbb$ the multiplier associated with equality constraint $\sum_{\ell\in\cL(T)}y_\ell=k$. We can characterize $\by$ by forming the Lagrangian 
$$
L(\bz;(\lambda,\mu)):=\sum_{u\in V(T)\setminus\{r\}}d_u\phi(z_u)-\sum_{\ell\in\cL(T)}\mu_\ell z_\ell+\lambda\left[k-\sum_{\ell\in \cL(T)}z_\ell\right].
$$
Optimality of $\by$ is  characterized by the existence multipliers $\lambda$, $\mu$ such that the stationarity conditions together with complementary conditions are satisfied, 
$$
\forall \ell\in \cL(T), \sum_{u\in \ell \rightarrow r}d_{u}\phi'(y_u)-\mu_\ell-\lambda=0,
\quad\text{and}
\quad
\forall \ell\in\cL(T),\; \mu_\ell y_\ell=0.
$$
This is readily seen to be equivalent to the conditions stated in the Lemma.
\end{proof}

\begin{proposition}[Dynamics of $\by(t)$ following \eqref{eq:opt-problem}]\label{proposition:dynamics}
Consider some time interval $I=[t_1,t_2]$, in which all leaves $\ell$ satisfy $y_\ell>0$ and for which the only edge length to increase is $d_l$, then $y_l$ decreases and for any leaf $\ell\neq l$, $y_\ell$ increases. Furthermore, $\by$ is derivable and its derivative $\dot{\by}$ satisfies
\begin{equation}\label{eq:controlled_decay}
\dot{y}_l\ge - \frac{1}{d_l}\frac{\max \phi'}{\min\phi''}.
\end{equation}
Consider a leaf $l$ that undergoes a discrete step (fork or deletion), then the optimal configuration $y_\ell$ of any leaf $\ell\neq l$ is increased.  
Furthermore, if $l$ undergoes a fork of length $\delta$ with $m$ children, one has 
\begin{equation}\label{eq:controlled_fok}
y_l'-y_l\geq -\frac{\delta}{d_l}\frac{\max \phi'}{\min\phi''},
\end{equation}
where $\by'$ denotes the optimal configuration right after the fork took place, in the tree in which $l$ is now the parent of $m$ children at distance $\delta$ and all attributed configuration weight $y_l'/m$. 
\end{proposition}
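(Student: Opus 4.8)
The plan is to treat elongation, forking, and deletion uniformly. In each case the new optimal configuration $\by'$ is the minimizer of \eqref{eq:opt-problem} for an objective obtained from $\Psi$ by a convex modification concentrated at the leaf $l$: elongating $(l,p(l))$ by $\epsilon$ replaces $\Psi$ by $\Psi+\epsilon\,\phi(y_l)$ (the only edge length that changes is $d_l$); deleting $l$ imposes the constraint $y_l=0$ (and one checks that the prescribed edge-merge is designed precisely so that $\Psi$ restricted to $\{y_l=0\}$ agrees with the potential on the reduced tree up to an additive constant); and forking $l$ into $m$ children of length $\delta$ leads — after noting, by invariance of the objective under permuting the children and uniqueness of the minimizer, that the optimum splits the mass at $l$ equally, i.e.\ $y'_{\ell_i}=y_l'/m$, so that the $m$ new edges contribute $m\delta\,\phi(y_l'/m)$ — to the modified objective $\Psi+m\delta\,\phi(y_l/m)$ on configurations of the pre-fork tree. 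In all three cases a two-line comparison forces the mass at $l$ to weakly decrease: writing $P\ge 0$ for the added convex term (or $+\infty$ off $\{y_l=0\}$ for deletion), $\Psi(\by)\le\Psi(\by')$ since $\by$ minimizes the unmodified objective, and $\Psi(\by')+P(\by')\le\Psi(\by)+P(\by)$ since $\by'$ minimizes the modified one, whence $P(\by')\le P(\by)$ and, $P$ being increasing in $y_l$, $y_l'\le y_l$.

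For differentiability and the signs of the coordinates, observe that on the interval $I$ all leaves are active, so the optimum is cut out by the smooth stationarity system $\sum_{u\in\ell\to r}d_u\phi'(y_u)=\lambda$ ($\ell\in\Lcal(T)$) together with $\sum_\ell y_\ell=k$, whose Jacobian in $(\by,\lambda)$ is the bordered matrix $\left(\begin{smallmatrix}H&-\oneb\\ \oneb^\top&0\end{smallmatrix}\right)$ with $H=\nabla_\by^2\Psi\succ 0$ from \eqref{eq:hessian}; this matrix is nonsingular, so the implicit function theorem makes $\by$ continuously differentiable in the perturbation parameter ($d_l$ for elongation, $\delta$ for the fork). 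In either case only the $\ell=l$ equation depends explicitly on the parameter, with parameter-derivative $\phi'(y_l)$ (resp.\ $\phi'(y_l'/m)$ for the fork); implicit differentiation therefore gives, for the $\ell=l$ coordinate, $\dot y_l=-\phi'(y_l)\bigl(\be_l^\top H^{-1}\be_l-(\be_l^\top H^{-1}\oneb)^2/(\oneb^\top H^{-1}\oneb)\bigr)\le 0$ (the bracket is nonnegative by Cauchy--Schwarz for the inner product $x\mapsto x^\top H^{-1}x$, which re-proves that $y_l$ decreases), hence $\dot y_l\ge-\phi'(y_l)\,\be_l^\top H^{-1}\be_l$; and for $\ell\ne l$, $\dot y_\ell=\phi'(y_l)\bigl(\tfrac{\be_l^\top H^{-1}\oneb}{\oneb^\top H^{-1}\oneb}(H^{-1}\oneb)_\ell-(H^{-1})_{\ell l}\bigr)\ge 0$, because the inverse of the ultrametric matrix $H$ — for the fork, of $H$ with its $(l,l)$ entry increased, which is again ultrametric — is a row-diagonally-dominant M-matrix, so $(H^{-1}\oneb)_\ell\ge 0$, $(H^{-1})_{\ell l}\le 0$, and $\be_l^\top H^{-1}\oneb=(H^{-1}\oneb)_l\ge 0$, while $\oneb^\top H^{-1}\oneb>0$. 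Integrating $\dot y_\ell\ge 0$ in $\delta$ over $[0,\delta]$ gives the fork monotonicity (at $\delta=0^+$ the optimum is the pre-fork $\by$), and the deletion case follows by letting $d_l\to\infty$, where $y_l\to 0$ and the optimum converges — by strict convexity and compactness — to the post-deletion optimum, so the monotonicity in $d_l$ passes to the limit.

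It remains to bound $\be_l^\top H^{-1}\be_l\le 1/(d_l\phi''(y_l))$, which then yields $\dot y_l\ge-\phi'(y_l)/(d_l\phi''(y_l))\ge-\tfrac1{d_l}\tfrac{\max\phi'}{\min\phi''}$, i.e.\ \eqref{eq:controlled_decay}. For this, write $H=d_l\phi''(y_l)\,\be_l\be_l^\top+\hat H$, where $\hat H$ is $H$ with $d_l\phi''(y_l)$ removed from its $(l,l)$ entry; $\hat H$ is still an ultrametric matrix — equivalently, the covariance matrix of a Brownian motion on the tree with the edge $(l,p(l))$ deleted — hence positive semidefinite. Then $x^\top Hx\ge d_l\phi''(y_l)\,x_l^2$ for all $x$, and combining this with the variational identity $\be_l^\top H^{-1}\be_l=\max_{x:\,x_l\ne 0}x_l^2/(x^\top Hx)$ gives the bound. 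For the fork the same computation applies with $H$ replaced by $\hat H+\bigl(d_l\phi''(y_l')+\tfrac{\delta}{m}\phi''(y_l'/m)\bigr)\be_l\be_l^\top$, so the relevant $(l,l)$ entry of the inverse is again at most $1/(d_l\phi''(y_l'))$ and $\mathrm dy_l'/\mathrm d\delta\ge-\max\phi'/(d_l\min\phi'')$; integrating over $[0,\delta]$ with $y_l'(0^+)=y_l$ yields \eqref{eq:controlled_fok}. The step I expect to be the main obstacle is the reformulation of the first paragraph — getting the deletion edge-merge and the fork symmetrization to identify $\by'$ cleanly with the minimizer of an explicit perturbed problem on a fixed configuration space, and making the $d_l\to\infty$ limit for deletion rigorous; once that is in place, the differentiation and the matrix estimate are routine given the Hessian computation of Proposition~\ref{proposition: static}.
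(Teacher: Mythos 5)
Your proof is correct and reaches all the claims of the proposition, but it differs from the paper's proof in one substantive way. The paper obtains the quantitative lower bound on $\dot{y}_l$ by picking a sibling leaf $\ell$ descending from $p(l)$, differentiating the stationarity identity $\sum_{u\in \ell\to p(l)}d_u\phi'(y_u)=d_l\phi'(y_l)$ in time, and using the already-established sign information $\dot{y}_u\ge 0$ for nodes off the branch of $l$ to isolate $\dot{y}_l$. You instead bound the coefficient $\be_l^\top H^{-1}\be_l$ directly, via the rank-one split $H=d_l\phi''(y_l)\,\be_l\be_l^\top+\hat H$ with $\hat H\succeq 0$ and the variational identity $\be_l^\top H^{-1}\be_l=\max_{x}\,x_l^2/(x^\top H x)$, which gives the same estimate $\dot{y}_l\ge -\phi'(y_l)/(d_l\phi''(y_l))$ without ever invoking the KKT condition along a second branch. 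Your route is more self-contained and algebraic: it avoids the small case analysis needed to guarantee a sibling leaf exists (which the paper's argument technically requires), and the Cauchy--Schwarz step simultaneously re-derives $\dot{y}_l\le 0$; the paper's route is lighter on matrix analysis and leans on the structural equilibrium condition \eqref{eq:equilibrium}. The sign argument for $\dot{y}_\ell\ge 0$ ($\ell\ne l$) via M-matrix properties of $H^{-1}$ is the same in both (it is exactly the paper's Lemma~\ref{lemma:ultrametric_equation}, which you unfold explicitly through the bordered system). Your opening paragraph, which casts elongation, fork, and deletion uniformly as adding a convex penalty $P$ increasing in $y_l$ and reads off $y_l'\le y_l$ from the two-sided optimality comparison $P(\by')\le P(\by)$, is a clean addition not present in the paper, though it only gives the weaker conclusion on $y_l$ and you still need the ODE/M-matrix machinery for the coordinatewise monotonicity of the other leaves.
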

\begin{remark} The assumption that $y_\ell>0$ is not required in practice, but it simplifies the argument and it is always verified for the given Algorithm \eqref{eq:dynamics} of the player, as we shall later see.
\end{remark}
\begin{proof} We treat separately the cases of elongation, deletion and fork. 

\paragraph{Leaf elongation.}  Consider the leaf $l$ which is being elongated during interval $[t_1,t_2]$, all other edges being left unchanged. Consider some $t\in [t_1,t_2)$ 
 satisfying $\by(t)>0$ and some small $dt>0$. The derivability of $\by(\cdot)$ at $t$ can be deduced from straightforward arguments using the inverse function theorem and the smooth evolution of the convex potential. We thus focus here on the sign of derivatives $\dot{\by}(t)$. We have by the characterization of $\by(t+dt)$ and of $\by(t)$ given in \eqref{eq:lagrange} that $\nabla_\by \Psi(T(t+dt),\by(t+dt)) - \nabla_\by \Psi(T(t),\by(t))\in \text{Vect}(\oneb)$. Using the identity $\nabla_\by \Psi(T(t+dt),\by(t+dt)) = \nabla_\by \Psi(T(t),\by(t+dt))+\phi'(y_l(t+dt))\be_l dt$ and a Taylor expansion, we obtain, 
$$\nabla^2_{\by} \Psi(T(t),\by(t))(\by(t+dt)-\by(t))+\phi'(y_l(t+dt))\be_ldt + o(dt)\in \text{Vect}(\oneb),$$
which in the limit $dt\rightarrow 0$ gives, 
\begin{equation}\label{eq:differential}
    U(t) \dot{\by}(t) \in -\phi'(y_l(t))\be_l+\text{Vect}(\one)
\end{equation}
where for shorthand we denote $U(t) = \nabla^2_{\by} \Psi(T(t),\by(t))$, the ultra-metric defined in \eqref{eq:hessian}.
Lemma \ref{lemma:ultrametric_equation} below applied to \eqref{eq:differential} then allows to conclude that $\dot{y}_l<0$ and $\dot{y}_\ell\geq 0$ for all $\ell\neq l$.  
\begin{lemma}\label{lemma:ultrametric_equation}
    Assume $U\in \Rbb^{n\times n}$ a positive definite ultrametric matrix satisfies
    \begin{equation}\label{eq:ultrametric-eq}
        U \bz = -\mu\be_l+\lambda \oneb
    \end{equation}
for constants $\lambda \in \Rbb$, $\mu>0$, for some index $l\in [n]$ and some zero sum vector $\bz \in \Rbb^{n}$, i.e. such that $\bz^T \oneb=0$. It is then the case that $\lambda\geq 0$ and that $\forall \ell\in [n]\setminus\{l\}: z_\ell\geq 0$.
\end{lemma}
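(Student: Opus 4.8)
The plan is to exploit the structure of positive-definite ultrametric matrices, namely that their inverses are (up to the standard characterization) diagonally dominant M-matrices — i.e. $U^{-1}$ has nonnegative diagonal, nonpositive off-diagonal entries, and nonnegative row sums. This is exactly the content of the theory of inverse M-matrices / ultrametric matrices cited earlier in the paper (\cite{dellacherie2014inverse}, and the probabilistic interpretation via Brownian motion on a tree). Since $U$ is positive definite it is invertible, and I would write $\bz = U^{-1}(-\mu \be_l + \lambda\oneb) = -\mu\, U^{-1}\be_l + \lambda\, U^{-1}\oneb$.

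First I would pin down $\lambda$. Taking the inner product of \eqref{eq:ultrametric-eq} with $\bz$ and using $\bz^T\oneb = 0$ gives $\bz^T U\bz = -\mu\, z_l$. Since $U \succ 0$ and $\mu>0$, this forces $z_l \le 0$ (strictly, $z_l<0$, unless $\bz=0$). Next, to get the sign of $\lambda$: dot \eqref{eq:ultrametric-eq} with $U^{-1}\oneb$, or more simply observe $\lambda = \frac{1}{n}\oneb^T(-\mu\be_l+\lambda\oneb)\cdot(\text{something})$ — cleaner is to use $\oneb^T\bz = 0$: applying $\oneb^T U^{-1}$ to both sides is awkward, so instead I would argue via the M-matrix structure of $M := U^{-1}$. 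Write $\bz = M(-\mu\be_l+\lambda\oneb)$, so $z_j = -\mu M_{jl} + \lambda (M\oneb)_j$ where $r_j := (M\oneb)_j \ge 0$ is the $j$-th row sum of $M$. For $j\neq l$ the off-diagonal entry $M_{jl}\le 0$, so $-\mu M_{jl}\ge 0$; hence if $\lambda \ge 0$ we immediately get $z_j \ge 0$ for all $j\neq l$, which is the second claim. It remains to show $\lambda\ge 0$. Summing $z_j$ over all $j$ and using $\sum_j z_j = 0$: $0 = -\mu\sum_j M_{jl} + \lambda\sum_j r_j = -\mu\, c_l + \lambda\, R$, where $c_l = (\oneb^T M)_l$ is the $l$-th column sum and $R = \sum_j r_j = \oneb^T M\oneb > 0$ (since $M\succ 0$). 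By symmetry of $M$ (as $U$ is symmetric), $c_l = r_l \ge 0$. Therefore $\lambda R = \mu\, c_l \ge 0$, giving $\lambda \ge 0$, and we are done.

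The main obstacle — really the only nontrivial input — is justifying that $M = U^{-1}$ is an M-matrix with nonnegative row sums, i.e. that the off-diagonal entries of $U^{-1}$ are $\le 0$ and the row sums are $\ge 0$. This is precisely the inverse-ultrametric-matrix theorem of \cite{dellacherie2014inverse} (Theorem 3.5 or its neighborhood), which the paper has already invoked for the non-singularity statement; I would cite the same source for the sign pattern of the inverse. One subtlety to handle carefully: the matrix $U$ in \eqref{eq:hessian} is ultrametric in the sense of \cite{dellacherie2014inverse} only after checking the diagonal-dominance-type condition $U_{jj}\ge \max_{i\neq j}U_{ij}$, which holds here because $U_{jj} = \sum_{u\succeq \ell}d_u\phi''(y_u) \ge \sum_{u\succeq \LCA(\ell,\ell')}d_u\phi''(y_u) = U_{\ell\ell'}$ as all $d_u\phi''(y_u)>0$; I would note this in passing. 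Everything else is bookkeeping with signs.
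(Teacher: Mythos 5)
Your proof is correct and takes essentially the same route as the paper's: set $M = U^{-1}$, invoke the inverse-ultrametric M-matrix structure from \cite{dellacherie2014inverse} (nonnegative diagonal, nonpositive off-diagonals, nonnegative row and column sums), obtain $\lambda\ge 0$ by combining $\oneb^T\bz=0$ with the nonnegativity of $\oneb^T M\be_l$ and $\oneb^T M\oneb>0$, and then read off $z_\ell = -\mu M_{\ell l}+\lambda(M\oneb)_\ell \ge 0$ for $\ell\ne l$. The opening digression via $\bz^T U\bz=-\mu z_l$ and the aside on checking the ultrametric condition for the Hessian of \eqref{eq:hessian} are harmless but not needed for the lemma as stated, which already assumes $U$ is a positive definite ultrametric matrix.
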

\begin{proof}We denote by $M$ the inverse of the positive-definite ultrametric $U$, i.e. $M=U^{-1}$, which satisfies the following properties, \cite[Th. 3.5 (i)]{dellacherie2014inverse} (a) the diagonal elements of $M$ are non-negative (b) the off-diagonal elements of $M$ are non-positive (c) the sum of elements of $M$ across a row (or column) is non-negative. 

By multiplying \eqref{eq:ultrametric-eq} on the left by $\oneb^T M$ and using that $\one^T\bz =0$, we get $\mu\one^T M \be_l = \lambda \one^T M \one$. 
By property (c) above,  $\one^T M \be_l\geq 0$ and since $\mu\geq 0$ we get $\lambda\geq 0$. 
We then have $z = -\mu M e_l+\lambda M\one$, thus for $\ell \neq l$ we get $z_\ell = -\mu\be_\ell^T M \be_l + \lambda \be_\ell^T M \one$. Since $\be_\ell^T M \be_l \leq 0$ by property (b) and $\be_\ell^T M \one\geq 0$ by property (c) we get $z_\ell\geq 0$. Finally, observe that since $\one^T\bz =0$, it must be the case that $z_l \leq 0$. 
\end{proof}
We now consider a leaf $\ell\ne l$ that is a descendant of $p(l)$. Using equation \eqref{eq:equilibrium} we get $\sum_{u\in \ell \rightarrow p(l)}d_u\phi'(y_u) = d_l\phi'(y_l)$ at all times. Taking time derivatives, we obtain, 
$
\sum_{u\in \ell\rightarrow p(l)}d_{u}\dot{y}_{u} \phi''(y_u)=\phi'(y_l)+d_l\dot{y}_l \phi''(y_l).
$
We have shown that the left-hand side is non-negative, thus ensuring,
$$
\dot{y}_l\ge - \frac{1}{d_l}\frac{\phi'(y_l)}{\phi''(y_l)}\geq - \frac{1}{d_l}\frac{\max \phi'}{\min\phi''}.
$$
\paragraph{Leaf fork.} We now consider the case when leaf $l$ undergoes a discrete fork with $m$ children, each attached to an edge of length $\delta$. 
This case will be treated similarly as that of a continuous elongation by introducing $s\in [0,1]$ and $T(s)$ the tree constructed from $T$ by providing $l$ with $m$ children each attached to a leaf of length $s\delta$. It is clear that $T(0)$ corresponds to the tree before the fork and that $T(1)$ corresponds to the tree after the fork. We shall show that optimal configuration $\by(s)$ of $T(s)$ satisfies the property that $y_\ell(s)$ is increasing if $\ell$ is not a child of $l$. 
It is clear from the strict convexity of $\by\rightarrow\Psi(T(s), \by)$ and by symmetry that the value of $y_\ell(s)$ for all children $\ell$ of $l$ is equal to $\frac{1}{m}y_l(s)$, as soon as $s>0$. We will thus see $\by(s)$ as a configuration of the tree before the fork, but for the modified potential $\Psi(s,\by) = \Psi(T(0),\by) + s\delta\times m \phi\left(\frac{1}{m}y_l\right)$. By the same reasoning as above, we get the following dynamics 
$$U(s) \dot{\by}(s)\in -\delta\phi'\left(y_l(s)/m\right)\be_l+\text{Vect}(\oneb),$$
where $U(s) = \left(
    \sum_{u\succeq \LCA(\ell,\ell')}d_u\phi''(y_u(s))
\right)_{\ell,\ell' \in \Lcal(T)}+\left(\one(\ell=\ell'=l)\frac{s \delta}{m}\phi''(y_l(s))\right)_{\ell,\ell' \in \Lcal(T)}$ is a positive definite ultra-metric to which we apply Lemma \ref{lemma:ultrametric_equation} to get the desired monotonicity. Then by considering a leaf $\ell\neq l$ that is a descendant of $p(l)$, we have $\sum_{u\in \ell \rightarrow p(l)}d_u\phi'(y_u(s)) = d_l\phi'(y_l(s))+\delta s\phi'\left(y_l(s)/m\right)$. By taking derivatives on $s$, the quantity, 
$d_l\phi''(y_l(s))\dot{y}_l(s) +\frac{s\delta}{m}\phi''\left(y_l(s)/m\right)\dot{y}_l(s)+\delta\phi'\left(y_l(s)/m\right)$
is non-negative, ensuring that 
$\dot{y}_l(s)\geq -\frac{\delta}{d_l}\frac{\phi'\left(y_l(s)/m\right)}{\phi''(y_l(s))},$
and thus
$y_l'-y_l\geq -\frac{\delta}{d_l}\frac{\max \phi'}{\min\phi''}.$

\paragraph{Leaf deletion.} Finally, we consider the case of the deletion of leaf $l$. We observe as in \cite{bubeck2022shortest}, that the configuration after the deletion of a leaf can be obtained as the limit configuration when this leaf is extended to infinity. Such infinite extension only leads to increasing the value of the configuration at all other leaves, thereby proving the monotonicity statement made for deletions.\end{proof} 

\begin{proposition}[Bounds on $\bx$ and $\by$]\label{prop:1}
Let $\epsilon, \epsilon' \in (0,1/2]$ be two fixed constants, and assume that $\phi(x) = ax+bx^2$ where parameters $a,b$ are chosen such that
\begin{equation}\label{eq:assumptions}
2b k \le \epsilon' a\hbox{ and }b(2-2\epsilon-\epsilon')\ge 2+\epsilon'.
\end{equation}
Then Algorithm \eqref{eq:dynamics} is such that irrespective of the adversary's moves, at all times, $\bx\in \Xcal(T)$ verifies 
\begin{equation}\label{eq:epsilon}
\forall \ell\in\cL(T),\; x_\ell < y_\ell+2-\epsilon.
\end{equation}
Also, in light of Proposition \ref{proposition:dynamics}, we have that at all times and irrespective of the adversary's moves,
\begin{equation}\label{eq:lowerbound}
\forall \ell\in\cL(T),\; \epsilon \leq y_\ell.
\end{equation}
\end{proposition}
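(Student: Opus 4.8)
The plan is to establish the two claims \eqref{eq:epsilon} and \eqref{eq:lowerbound} as invariants maintained over the course of the game, using the dynamics of $\bx$ from Proposition~\ref{prop_dynamics_x} and the dynamics of $\by$ from Proposition~\ref{proposition:dynamics}. Since both configurations start out equal (all $k$ robots at the root, all weight $k$ at the root) and change only through elongations, deletions and forks, it suffices to check that each type of event preserves the invariants. I would treat \eqref{eq:epsilon} first, as the main work, and then derive \eqref{eq:lowerbound} as a comparatively short consequence.

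For \eqref{eq:epsilon}, the key mechanism is that Algorithm~\eqref{eq:dynamics} only permits the player to move a robot off a leaf $\ell$ when doing so is ``worth it'', i.e. when the tension $\tau_{\bx}(\ell\to\ell')$ for some target $\ell'$ exceeds the transport cost; by the minimality defining \eqref{eq:dynamics}, whenever $x_\ell$ is large relative to $y_\ell$ the potential gradient $d_u\phi'(x_u)$ along $\ell\to r$ will force such a beneficial move, so $x_\ell$ cannot drift too far above $y_\ell$. Concretely, I would argue by contradiction at a first time when $x_\ell \ge y_\ell + 2-\epsilon$ for some leaf $\ell$: using the path-sum characterization of $\by$ in \eqref{eq:equilibrium} (balancing $\sum d_u\phi'(y_u)$ along competing branches) together with $\phi(x)=ax+bx^2$ so that $\phi'(x)=a+2bx$, one compares the cost $d(\ell,\ell')$ of moving a robot from $\ell$ to a leaf $\ell'$ in a ``lighter'' branch against the tension $\tau_{\bx}(\ell\to\ell')$, and shows the tension strictly exceeds the cost — contradicting that Algorithm~\eqref{eq:dynamics} is at its argmin (ties broken toward moving). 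The two algebraic conditions in \eqref{eq:assumptions}, namely $2bk\le\epsilon' a$ and $b(2-2\epsilon-\epsilon')\ge 2+\epsilon'$, are precisely what is needed to make this comparison go through: the first controls the spread of $\phi'$ over the range $[0,k]$ of configuration values (so $\phi'$ is nearly constant, and the $\by$-balance \eqref{eq:equilibrium} nearly equalizes branch lengths weighted by $\phi'$), and the second guarantees that a surplus of $2-\epsilon$ robots on $\ell$ produces enough tension per unit length to pay for the move. One must separately check that forks and deletions cannot break the invariant: at a fork, Proposition~\ref{prop_dynamics_x} says the $x_l$ robots split into children with counts in $\{\lfloor x_l/m\rfloor,\lceil x_l/m\rceil\}$ while \eqref{eq:controlled_fok} controls how much $y_l$ can drop, and since $\delta$ is chosen small the new-leaf inequality reduces to the pre-fork one; at a deletion, the robots of $l$ are pushed onto surviving leaves while \eqref{eq:controlled_decay}-type monotonicity says the corresponding $y_\ell$ only increase, so again the gap does not grow. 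Continuous elongation is handled by noting that on the elongated leaf $\ell(t)$ the count $x_{\ell(t)}$ is non-increasing (Proposition~\ref{prop_dynamics_x}) while $\dot y_{\ell(t)}\ge -\tfrac{1}{d_l}\tfrac{\max\phi'}{\min\phi''}$ is only mildly negative, and on any other leaf $x$ is constant while $y$ increases — so the relevant differential inequality keeps $x_\ell - y_\ell$ below $2-\epsilon$.

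For \eqref{eq:lowerbound}, I would combine \eqref{eq:epsilon} with the already-established fact (Proposition~\ref{prop_dynamics_x}) that $x_\ell\ge 1$ for every leaf at all times, plus the conservation identities $\sum_\ell x_\ell = \sum_\ell y_\ell = k$. Fix a leaf $\ell$; summing \eqref{eq:epsilon} over the other leaves gives $k - x_\ell = \sum_{\ell'\ne\ell} x_{\ell'} < \sum_{\ell'\ne\ell}(y_{\ell'}+2-\epsilon) = (k-y_\ell) + (|\Lcal(T)|-1)(2-\epsilon)$, which is too weak on its own; instead the clean route is to use \eqref{eq:epsilon} in the form that the \emph{total} excess $\sum_{\ell'}(x_{\ell'}-y_{\ell'})=0$ together with each term bounded above by $2-\epsilon$, and pair $x_\ell\ge 1$ with a local balance argument from \eqref{eq:equilibrium} at $\ell$ and a sibling to pin $y_\ell$ from below. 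I expect the cleanest version is: if $y_\ell<\epsilon$ then by the $\by$-optimality characterization \eqref{eq:lagrange} the path-sum $\sum_{u\in\ell\to r}d_u\phi'(y_u)$ is small, forcing (via \eqref{eq:equilibrium}) every competing branch to be correspondingly ``short in weighted length'', which combined with $x_\ell\ge1>y_\ell$ and \eqref{eq:epsilon} on the siblings yields a contradiction with the argmin property of Algorithm~\eqref{eq:dynamics} — essentially the same tension-versus-cost computation as before, run in the other direction.

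The main obstacle I anticipate is the elongation case of \eqref{eq:epsilon}: one has a coupled system of differential inequalities for $\by$ (only \eqref{eq:controlled_decay} is available, and it involves the shrinking $d_l$ in the denominator, which is dangerous as $d_l$ could in principle be small — though here $d_l$ is being \emph{increased}, so it is bounded below by its value when the invariant was last tight), and one must verify that the argmin constraint built into Algorithm~\eqref{eq:dynamics} — namely $\Psi(\bx(t))-\Psi(\bx') < \OT_{T(t)}(\bx(t),\bx')$ for all $\bx'\ne\bx(t)$ — is exactly what prevents $x_{\ell}-y_{\ell}$ from crossing the threshold, so the bound is not proved by solving an ODE but by a ``if the invariant is violated, the player would have moved'' argument at the critical instant. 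Making the constants in \eqref{eq:assumptions} do precisely the right work in that contradiction — in particular tracking that $\phi'(x_u)\le \phi'(y_u) + 2b(2-\epsilon)$ along the relevant path and that $2bk\le\epsilon'a$ keeps the $\by$-branch-balancing error under $\epsilon'$ — is the delicate bookkeeping step.
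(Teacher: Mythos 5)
Your broad plan for \eqref{eq:epsilon} is on the right track and essentially the same as the paper's: argue by contradiction at the first time some leaf $\ell$ has $x_\ell = y_\ell + 2-\epsilon$, compare the tension of moving one robot off $\ell$ against the transport cost, and use the two conditions of \eqref{eq:assumptions} exactly as you describe (the first, $2bk \le \epsilon' a$, keeps $\phi'$ nearly constant over $[0,k]$ so the equilibrium \eqref{eq:equilibrium} forces branch lengths to balance up to a factor $1+\epsilon'$; the second, $b(2-2\epsilon-\epsilon')\ge 2+\epsilon'$, makes the resulting $\Delta$ negative). You also correctly recognize that the argument is not an ODE argument but a ``if violated, the player would already have moved'' argument. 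However, you leave the one genuinely load-bearing structural step as a gesture: you say to compare against ``a leaf $\ell'$ in a `lighter' branch'' without explaining how to find it. The paper's proof pins this down cleanly: since $x_r=y_r=k$, walk up from $\ell$ to find the first ancestor $v$ where $x_v < y_v+2-\epsilon$ but $x_u \ge y_u+2-\epsilon$ for all $u$ on $\ell\to v$; because $x_v$ and $y_v$ are sums over children of $v$, there must be a child $w$ of $v$ with $x_w < y_w$, and iterating down gives a leaf $\ell'$ descending from $v$ with $x_u < y_u$ along the whole path $\ell'\to v$. This is what makes the per-edge bounds on the $\Delta$-sum go through on both branches; without it you have no control on the $\ell'$-side. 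Your proposal should not treat forks and deletions separately here — the first-violation argument covers them uniformly, since the argmin constraint $\Psi(\bx)-\Psi(\bx') < \OT(\bx,\bx')$ is re-established after every discrete move.

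Your argument for \eqref{eq:lowerbound} is the genuine gap. Both routes you sketch fail. Summing \eqref{eq:epsilon} over other leaves you already recognize is too weak. The ``cleanest version'' you then propose — ``if $y_\ell < \epsilon$ then by \eqref{eq:lagrange} the path-sum $\sum_{u\in\ell\to r}d_u\phi'(y_u)$ is small'' — is false: that path-sum equals the Lagrange multiplier $\lambda$, which is a single constant shared by all active leaves, and the interior $y_u$ on the path to the root can be as large as $k$, so the sum is not small just because $y_\ell$ is. The paper's actual argument is much more direct and is entirely about the game rules plus the monotonicity in Proposition~\ref{proposition:dynamics}: $y_\ell$ can only decrease when $\ell$ itself is elongated or forked (all other events only increase $y_\ell$), elongation requires $x_\ell\ge 2$ and a fork into $m$ children requires $x_\ell \ge m+1 \ge 3$, and in either case \eqref{eq:epsilon} gives $y_\ell > x_\ell - 2 + \epsilon \ge \epsilon$ (respectively $y_\ell > m-1+\epsilon \ge m\epsilon$), so with $\delta$ small enough via \eqref{eq:controlled_fok} the post-event $y_\ell$ (or $y_\ell'/m$) stays at least $\epsilon$. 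The two invariants are thus established jointly: \eqref{eq:lowerbound} is a consequence of \eqref{eq:epsilon} together with the adversary's constraints, not an independent balancing argument.
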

\begin{proof}We start by showing how \eqref{eq:lowerbound} follows from \eqref{eq:epsilon}. Recall from Proposition \ref{proposition:dynamics} that the value of $y_\ell$ can only decrease if $\ell$ is elongated or forked. Since a leaf can only be elongated if $x_\ell\geq 2$, it follows from \eqref{eq:epsilon} that $y_\ell$ can not go below $\epsilon$ during an elongation. Now assume that the leaf $\ell$ undergoes a fork with $m\leq x_\ell-1$ children. By \eqref{eq:epsilon}, it must be the case that $y_l\geq m-1+\epsilon\geq m\epsilon + (1-\epsilon)$. For a small enough value of $\delta$ in equation \eqref{eq:controlled_fok}, we get that $y_l'/m>\epsilon$, just after the fork. Thus neither elongation nor a fork may make a value of $\by$ go beneath $\epsilon$. 

We now show how \eqref{eq:epsilon} follows from \eqref{eq:assumptions}. Note first that \eqref{eq:epsilon} holds true at $t=0$. Consider then the earliest time at which this inequality is met with equality: there is some leaf $\ell$ such that $x_\ell=y_\ell+2-\epsilon$. Then, since $x_r=y_r=k$, there exists an ancestor $v$ of $\ell$ such that $x_v<y_v+2-\epsilon$ but $x_u\geq y_u+2-\epsilon$ for all $u\in \ell\rightarrow v$. Also, since $x_v =\sum_{u: p(u)=v}x_u$ and $y_v =\sum_{u: p(u)=v}y_u$ there must be a child $w$ of $v$ such that $x_w<y_w$. By iterating this argument until we reach a leaf, there must exist $\ell' \in \Lcal(T)$ descending from $v$ and satisfying that for all $u\in \ell'\rightarrow v : x_u<y_u$. Let $\bx'$ be the robot configuration obtained from $\bx$ by moving one robot from $\ell$ to $\ell'$. We evaluate the difference
$$
\Delta:= \Psi(T,\bx')+\OT_T(\bx,\bx')-\Psi(T,\bx).
$$
This reads
\begin{equation*}
\begin{array}{ll}
\Delta&= \displaystyle \sum_{u\in \ell\rightarrow v} d_{i(j)}[1-a +b\{(x_{u}-1)^2-x_{u}^2\}]+\sum_{u\in \ell'\rightarrow v}d_{u}[1+a+b\{(x_{u}+1)^2-x_{u}^2\}]\\
&=\displaystyle
\sum_{u\in \ell\rightarrow v} d_{u}[1-a +b(1 -2 x_{u})]+\sum_{\ell'\rightarrow v}d_{u}[1+a+b( 1 +2 x_{u})]\\
&\le
\sum_{u\in \ell\rightarrow v} d_{i(j)}[1-a +b(1 -2 (y_u+2-\epsilon))]+\sum_{\ell'\rightarrow v}d_{i'(j)}[1+a+b( 1 +2 y_u)],
\end{array}
\end{equation*}
where we used the inequalities between $\bx$ and $\by$. 

We then recall from \eqref{eq:equilibrium} (noting that $\phi'(x)=a+2bx$):
\begin{equation}\label{eq:identité_xstar}
\sum_{u\in \ell\rightarrow v}d_{u}[a+2b y_u]=\sum_{u\in \ell'\rightarrow v}d_{u}[a+2b y_u].
\end{equation}
We can thus simplify the previous bound on $\Delta$ to obtain
\begin{equation}\label{eq:delta2}
\Delta\le \sum_{u\in \ell\rightarrow v} d_{u}[1+b(1 -2 (2-\epsilon))]+\sum_{u\in \ell'\rightarrow v}d_{u}[1+b].
\end{equation}
Equation \eqref{eq:identité_xstar} together with the assumption $2b k\le \epsilon' a$ made in the proposition entails, noticing that $y_v\le k$ for all $v$, that
\begin{equation}\label{eq:equilibre-branches}
\sum_{u\in \ell\rightarrow v}d_{u}\le (1+\epsilon')\sum_{u\in \ell\rightarrow v} d_{u}, 
\quad \text{and similarly,} \quad 
\sum_{u\in \ell'\rightarrow v} d_{u}\le (1+\epsilon')\sum_{u\in \ell\rightarrow v}d_{u}.
\end{equation}
In view of \eqref{eq:delta2}, we thus have
$$
\begin{array}{ll}
\Delta&\le \displaystyle \sum_{u\in \ell\rightarrow v}d_{u}\left[1+b(-3+2\epsilon)+(1+\epsilon')(1+b)\right]= \sum_{u\in \ell\rightarrow v}d_{u}\left[2+\epsilon' -b(2-2\epsilon-\epsilon')\right].
\end{array}
$$
This upper bound is negative under the second condition $b(2-2\epsilon-\epsilon')\ge 2+\epsilon'$ of the proposition. Thus by \eqref{eq:dynamics} some move by the player must take place if for some $\ell$ one has $x_\ell=y_\ell +2-\epsilon$.
\end{proof}

\begin{remark}\label{rem:param}
We will will choose as specific values for $\phi(x) = ax+bx^2$ the quantities $a=20 k$, and $b=5$, as well as $\epsilon=\epsilon'=1/2$. These quantities satisfy the conditions of Proposition \ref{prop:1}. 
\end{remark}

\begin{proposition}\label{prop:potential-arguement}
The above-defined strategy, for the choice of parameters made in Remark \ref{rem:param}, is such that at all times $t$, the configuration $(T(t),\bx(t))$ of the game satisfies \eqref{eq:master_inequality} for $\gamma=48$. 
\begin{equation*}
\Cost(t)+\Psi(T(t),\bx(t))\le \gamma \Psi(T(t),\by(t)). \tag{\ref{eq:master_inequality}}
\end{equation*}
\end{proposition}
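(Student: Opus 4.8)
The plan is to establish the invariant \eqref{eq:master_inequality} by an induction over the (continuous and discrete) timeline of the game, treating separately the three types of adversary action plus the player's reaction. The strategy is the standard "potential method with a reference configuration": we show that at all times the left-hand side $\Cost(t)+\Psi(T(t),\bx(t))$ grows no faster than $\gamma \Psi(T(t),\by(t))$. Since \eqref{eq:master_inequality} holds at $t=0$ (where $\Cost=0$ and $\bx=\by$ gives $\Psi(\bx)=\Psi(\by)\le \gamma\Psi(\by)$), it suffices to verify that each elementary event preserves it. Throughout I would use the bounds of Proposition \ref{prop:1} (with $a=20k$, $b=5$, $\epsilon=\epsilon'=1/2$), namely $x_\ell<y_\ell+3/2$ and $y_\ell\ge 1/2$ for every leaf, together with $x_\ell\ge 1$ from Proposition \ref{prop_dynamics_x}.

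First, the continuous elongation phase. While the adversary elongates $\ell(t)$ at unit rate, $\Cost$ increases at rate $x_{\ell(t)}$ and $\Psi(T(t),\bx(t))$ increases at rate $\phi(x_{\ell(t)})$ (between player moves, which by Proposition \ref{prop_dynamics_x} only ever decrease $\Psi(\bx)+\OT$ and hence are safe). On the right-hand side, $\Psi(T(t),\by(t))$ changes at rate $\phi(y_{\ell(t)}) + \sum_u d_u\phi'(y_u)\dot y_u$; since $\by$ is the minimizer, the envelope theorem kills the $\dot y_u$ terms, so the right-hand side grows at rate $\gamma\phi(y_{\ell(t)})$. Hence I need $x_{\ell(t)} + \phi(x_{\ell(t)}) \le \gamma \phi(y_{\ell(t)})$; using $x_{\ell(t)}<y_{\ell(t)}+3/2$, $x_{\ell(t)}\ge 2$ (elongation requires $\ge 2$ robots), $y_{\ell(t)}\ge 1/2$ and the explicit quadratic $\phi$, this is a routine numerical check that $\gamma=48$ accommodates.

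Next, the two discrete events. For a \textbf{deletion} of leaf $l$: the player pays $\OT\le$ (something like $x_l \cdot d(l,\text{destination})$) to evacuate the $x_l$ robots, $\Psi(T,\bx)$ changes by removing the $d_l\phi(x_l)$ term and redistributing, and crucially $\Psi(T,\by)$ on the right jumps up (Proposition \ref{proposition:dynamics}: deleting $l$ increases $y_\ell$ for all $\ell\ne l$, and losing the $d_l\phi(y_l)$ term is more than compensated because $\by$ was the minimizer of the larger problem). The cleanest route is to use the defining inequality of Algorithm \eqref{eq:dynamics} — for the post-deletion tree $T'$ the player's new configuration $\bx'$ satisfies $\Psi(T',\bx')+\OT_{T'}(\bx,\bx')\le \Psi(T',\bx'')$ for any competitor $\bx''$; choosing $\bx''$ to be, e.g., a rounding of $\by'$, bounds the left-hand increment by a multiple of $\Psi(T',\by')$, absorbed by the slack $\gamma=48$ and by the increase of the reference potential. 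For a \textbf{fork} at $l$ into $m$ children at length $\delta$: by Proposition \ref{prop_dynamics_x} the player's cost is $O(m\delta)\cdot$(stuff) which is negligible for $\delta$ chosen small (and \eqref{eq:controlled_fok} controls the drop in $y_l$), while again the reference potential only increases and the Algorithm-\eqref{eq:dynamics} optimality inequality caps the left-hand side. One should take $\delta\to 0$ so that all fork-induced increments vanish and the inequality reduces to the "static" comparison already guaranteed by Proposition \ref{prop:1}.

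The main obstacle I anticipate is the bookkeeping in the discrete steps: one must argue that the player's transport cost $\OT_{T'}(\bx,\bx')$ incurred during a deletion or fork is genuinely paid for by the simultaneous \emph{increase} of $\gamma\Psi(T(t),\by(t))$ (which is an instantaneous jump, not a continuous accumulation), rather than by the slack in the static inequality. This requires quantifying how much $\Psi(\by)$ rises — e.g. via \eqref{eq:equilibrium}/\eqref{eq:lagrange} and the convexity of $\phi$ — and checking it dominates the (bounded, since $x_l\le k$ and the evacuation distance is comparable to branch lengths on which $\by$-potential is already accrued) transport cost; the choice $\gamma=48$ versus the constants $a=20k,b=5$ is presumably tuned precisely so that the continuous-phase inequality $x+\phi(x)\le\gamma\phi(y)$ and these discrete increments all close simultaneously. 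I expect the deletion case to be the delicate one, since it is the only event where the player is *forced* to pay a potentially large transport cost, and it is handled (as the proof hints via the reference to \cite{bubeck2022shortest}) by viewing deletion as the $d_l\to\infty$ limit of elongation, so that the continuous estimate already covers it in the limit.
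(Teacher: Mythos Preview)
Your overall architecture (induction over time, envelope theorem on the elongation phase, and separate treatment of the three adversary moves) is exactly the paper's, and your elongation and fork arguments are essentially correct and match the paper's. The paper in fact verifies $x+\phi(x)\le\gamma\phi(y)$ from $x\le y+3/2$, $y\ge 1/2$ with $\gamma\ge 7$, and handles forks as you do (a $\delta$-sized elongation with modified leaf potential, contributing $O(\delta)$ to both sides).

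The genuine gap is your treatment of \emph{deletion}. Neither of your two suggestions works. The ``$d_l\to\infty$ limit of elongation'' idea is used in the paper only for Proposition~\ref{proposition:dynamics} (monotonicity of $\by$), \emph{not} for the invariant~\eqref{eq:master_inequality}. It cannot prove~\eqref{eq:master_inequality} directly: once the departing leaf is down to $x_l=1$ robot (further elongation is not even allowed), the left side would still grow at rate $1+\phi(1)>0$ while $y_l\to 0$ forces $\gamma\phi(y_l)\to 0$, so the differential inequality breaks. And after a true deletion both potentials lose their $d_l$-terms, which the limiting elongation never sees. Your other suggestion (take $\bx''$ a rounding of $\by'$) does not bound the transport cost $\OT_{T'}(\bx,\bx'')$ by the \emph{increment} $\Psi(T',\by')-\Psi(T,\by)$; it only controls $\Psi(T',\bx'')$ in terms of $\Psi(T',\by')$, which is the wrong quantity for preserving an \emph{incremental} invariant.

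What the paper actually does for deletion is a direct two-sided estimate, and both bounds are proportional to $d_\ell\,y_\ell^2$. For the right side, interpolate $\by(s)=\by+s(\by'-\by)$ and use strong convexity $h''(s)\ge 2b\,d_\ell y_\ell^2$ together with $h'(0)=0$ (optimality of $\by$) to get $\Psi(T',\by')-\Psi(T,\by)\ge b\,d_\ell y_\ell^2$. For the left side, the comparison configuration $\bx''$ is \emph{not} a rounding of $\by'$ but the crude move of all $x_\ell$ robots to a single sibling leaf $\ell'$ below $p(\ell)$; then $\OT(\bx,\bx'')\le(5/2)d_\ell x_\ell$ by \eqref{eq:equilibre-branches}, and $\Psi(\bx'')-\Psi(\bx)\le d(\ell,\ell')(b x_\ell^2+x_\ell)$ using the tension bound $\tau_\bx(\ell'\to\ell)\le d(\ell,\ell')$ from Proposition~\ref{prop_dynamics_x}. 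Finally $x_\ell\le y_\ell+3/2\le 4y_\ell$ (since $y_\ell\ge 1/2$) converts everything to a multiple of $d_\ell y_\ell^2$, and the constant $48$ comes precisely from the ratio $(5/2)(16b+16)/b$ with $b=5$. This is the missing idea in your proposal.
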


\begin{proof}
We show \eqref{eq:master_inequality} is preserved by leaf elongations, leaf deletions and leaf forks. 

\paragraph{Leaf elongation.} We first consider the evolution of the equation during a leaf elongation and between instants at which the player re-assigns robots. The three quantities involved in \eqref{eq:master_inequality} evolve smoothly. Denoting by $\ell(t)$ the leaf chosen by the adversary for elongation, we have:
$$
\begin{array}{l}
\frac{d}{dt}\Cost(t)=x_{\ell(t)}(t),\\
\frac{d}{dt}\Psi(T(t),\bx(t))=\phi(x_{\ell(t)}(t)),\\
\frac{d}{dt}\Psi(T(t),\by(t))=\phi(y_{\ell(t)}(t)),
\end{array}
$$
where for the third identity we used the optimality of $\by(t)$.

We thus want to choose $\gamma$ such that, writing $x=x_{\ell(t)}(t)$ and $y=y_{\ell(t)}(t)$ for short, the following inequality holds:
\begin{equation}\label{eq:competitive}
    x+\phi(x)\le \gamma \phi(y).
\end{equation}
Proposition \ref{prop:1} gives us the following relations:
$$
x\le y+2-\epsilon=y+3/2 \quad \text{and} \quad y\ge \epsilon=1/2.
$$
We then notice that $\forall y\geq 1/2: y+3/2+\phi(y+3/2)\leq \gamma \phi(y)$ is satisfied for any $\gamma \geq 7$, and for the choice of parameters made in Remark \ref{rem:param}, therefore proving \eqref{eq:competitive}.

Next, we consider an instant $t$, occurring during a leaf elongation, at which the player reassigns robots, moving from $\bx$ to $\bx'$. The cost incurred by the player for this move is precisely $\OT_{T(t)}(\bx,\bx')$. Because of the player's strategy, this move occurs only if
$$
\OT_{T(t)}(\bx,\bx')+\Psi(T(t),\bx')\le \Psi(T(t),\bx).
$$
Thus the left-hand side of \eqref{eq:master_inequality} makes a downward jump at time $t$, whereas its right-hand side incurs no jump.

\paragraph{Leaf fork.} The case of leaf fork is similar to the case of leaf elongation. Denote by $T$, $\bx$ and $\by$ (resp $T'$, $\bx'$ and $\by'$) the tree, the discrete configuration and the optimal configurations before (resp. after) the fork and denote by $\ell$ the leaf that is forked. Recall that the fork length $\delta$ is chosen small enough for the fork to lead to no reassignments, i.e. $x_\ell=x_\ell'$ and the newly created leaves take their values in $\{\ceil{x_\ell/m},\floor{x_\ell/m}\}$. The fork leads to the following change in cost and potential,
$$
\begin{array}{l}
\Delta_\bx\Cost=\delta x_{\ell},\\
\Delta_\bx\Psi=\Psi(T',\bx')-\Psi(T,\bx)\leq m \delta \phi(\ceil{x_\ell/m}),\\
\Delta_\by\Psi=\Psi(T',\by')-\Psi(T,\by)\geq \int_0^\delta m\phi(y_\ell(s)/m)ds \geq  m\delta\phi(y_\ell'/m),
\end{array}
$$
where for the third inequality, we used the fact that a leaf fork can be seen as a leaf elongation with modified leaf potential $:y\rightarrow m\phi(y/m)$, see Proposition \ref{proposition:dynamics}. Further by Proposition \ref{proposition:dynamics} the value of $\delta>0$ can be chosen small enough to ensure $y_\ell'/m\geq \epsilon$ and $y_\ell'\geq y_\ell-1/2$. We then use $x_\ell \leq y_\ell+3/2\leq y_\ell'+2$ to observe that 
$\Delta_\bx \Cost \leq \Delta_\by \Psi,$
and use $\ceil{x_\ell/m} \leq y_\ell'/m+3/2$ and $y_\ell'/m\geq \epsilon$ along with the computations used for leaf elongations to get that 
$\Delta_\bx \Psi \leq \gamma \Delta_\by \Psi'$
for any $\gamma \geq 7$. Summing up both inequalities, we get that for any $\gamma \geq 8$,
$$\Delta_\bx \Cost+\Delta_\bx \Psi \leq \gamma \Delta_\by \Psi'.$$

\paragraph{Leaf deletion.} We finally deal with leaf deletions and keep the same notations of $(T,\bx,\by)$ and $(T',\bx',\by')$ to denote the state of the game before and after the deletion and use $\ell$ to denote the deleted leaf. We first lower-bound the value of $\Delta_\by \Psi = \Psi(T',\by')-\Psi(T,\by)$ and then upper bound the value of $\Delta_\bx (\Psi+\Cost) = \Psi(T',\bx')-\Psi(T,\bx)+\OT(\bx,\bx')$.

We define $\bdelta = \by'-\by \in \Ycal(T)$ where we set by convention $y_\ell'=0$. For each $s\in[0,1]$, we then let $\by(s) = \by+s\bdelta\in \Ycal(T)$. We then define $h(s):=\sum_u d_u \phi(y_u(s))$, so that 
$$
h(0)=\Psi(T,\by),\; h(1)=\Psi(T',\by').
$$
By optimality of $\by$ for $T$, $h'(0)=0$. Also,
$$
h''(s)=\sum_u d_u \delta_u^2 (2b)\ge d_\ell (y_\ell)^2(2b).
$$
Thus $h'(s)\ge 2bs * d_\ell(y_\ell)^2$. In turn one obtains
\begin{equation}\label{eq:tmp_rhs_increase}
\Delta_\by \Psi = \Psi(T',\by')-\Psi(T,\by)=h(1)-h(0)\ge bd_\ell(y_\ell)^2.
\end{equation}
Let us now upper-bound the increase to the left-hand side of \eqref{eq:master_inequality} incurred by the player's move. Note that the player's strategy is to choose a new configuration $\bx'\in\Ycal(T')$ that minimizes $\OT_{T}(\bx,\bx')+\Psi(\bx')-\Psi(\bx)$, which is precisely the amount by which this left-hand side will increase. We can thus choose any target configuration $\bx'' \in \Ycal(T')$ we like to upper-bound this left-hand side increase. Pick one leaf $\ell'\neq \ell$ that is a descendant of $p(\ell)$, and consider the configuration $\bx'' = \bx - x_\ell\be_{\ell}+x_{\ell}\be_{\ell'}$ obtained by moving the $x_\ell$ robots from $\ell$ to $\ell'$. Using \eqref{eq:equilibre-branches}, we observe that the associated transport cost $\OT(\bx,\bx'')$ is no larger than $(5/2)d_\ell x_\ell$. The difference in potential then writes as follows
$$\Psi(\bx'')-  \Psi(\bx) = \sum_{u \in \ell\rightarrow \LCA(\ell,\ell')}d_u(\phi(x_u-x_\ell)-\phi(x_u))+\sum_{u\in \ell'\rightarrow \LCA(\ell,\ell')}d_u(\phi(x_u+x_\ell)-\phi(x_u)).$$
We then observe that for $\phi(x) = ax +bx^2$, we have for $x,h \geq 0$, 
\begin{align*}
    \phi(x+h)-\phi(x) &=  bh^2 + (a+2bx)h \leq bh^2 + (\phi(x+1)-\phi(x))h\\
    \phi(x-h)-\phi(x) &= bh^2-(a+2bx)h\leq bh^2+(\phi(x-1)-\phi(x))h
\end{align*}
and thus, 
\begin{align*}
\Psi(\bx'')-  \Psi(\bx) &= bx_\ell^2d(\ell,\ell') +x_\ell \left(\sum_{u \in \ell\rightarrow \LCA(\ell,\ell')}d_u(\phi(x_u-1)-\phi(x_u))+\sum_{u\in \ell'\rightarrow \LCA(\ell,\ell')}d_u(\phi(x_u+1)-\phi(x_u))\right)\\
&\leq bx_\ell^2d(\ell,\ell') +x_\ell \tau_\bx(\ell'\rightarrow\ell)\\
&\leq bx_\ell^2d(\ell,\ell') +x_\ell d(\ell,\ell')
\end{align*}
where in the last equation we used the inequality $\tau_x(\ell \rightarrow \ell')\leq d(\ell,\ell')$, which was proved in Proposition \ref{prop_dynamics_x}. 
This yields the following bound,
$$\Psi(\bx'')-  \Psi(\bx)
\leq bx_\ell^2d(\ell,\ell') +x_\ell d(\ell,\ell')
\leq d(\ell,\ell')(y_\ell)^2(16b+8),$$
where we used $x_\ell \leq 4y_\ell$ which follows from \eqref{eq:epsilon} and $y_\ell\geq 1/2$. 
Adding the transport cost of $\OT(\bx'',\bx) = x_\ell d(\ell,\ell')$ we get, 
$$\Delta_\bx (\Psi+\Cost) \leq (y_\ell)^2d(\ell,\ell')(16b+16)\leq (y_\ell)^2d_\ell(5/2)(16b+16) \leq \gamma \Delta_\by \Psi$$
for $\gamma = (5/2)(16b+16)/b = 48$ where we used the lower bound of $\Delta_\by \Psi$ given by \eqref{eq:tmp_rhs_increase}.
\end{proof}

\section{Reductions of Continuous Tree-Mining Game}\label{sec:reductions}

In this section, we precise the reductions that connect the game above to both of our motivations, collective tree exploration (CTE) and layered tree traversal (LTT). The reductions are summarized in the following diagram, Figure~\ref{fig:diagram-reductions}. 
\begin{figure}[!h]
\centering
\begin{tikzpicture}[node distance=3cm, every node/.style={draw, align=center}]

\node (box1) {Continuous Tree-Mining\\ Game (CTM)};

\node[below of=box1] (box2) {Discrete Tree-Mining\\ Game (TM)};

\node[right of=box2, xshift=3cm, rounded corners=0.3cm] (box3) {Asynchronous Collective \\ Tree Exploration (ACTE)};

\node[below of=box3, xshift=-3cm, rounded corners=0.3cm] (box4) {Collective \\ Tree Exploration (CTE)};

\node[below of=box3, xshift=3cm, rounded corners=0.3cm, double] (box5) {Layered Tree\\ Traversal (LTT)};

%
\draw[->] (box1) -- 
node[midway, right,font=\small,draw=none] {Prop. \ref{th:tm-ctm}} 
(box2);
\draw[<->] (box2) -- 
node[midway, above,font=\small,draw=none] {\cite{cosson2023breaking}} 
node[midway, below,font=\small,draw=none] {Prop. \ref{th:tm-acte}} 
(box3);
\draw[->] (box3) -- 
node[midway, left,font=\small,draw=none] {\cite{cosson2023breaking} \& Prop. \ref{th:rcte}} 
(box4);

\draw[->] (box3) -- 
node[midway, right,font=\small,draw=none] {Prop. \ref{th:reduction-lgt}} 
(box5);

\end{tikzpicture}

\caption{Summary of reductions. Rectangles are for settings expressed in the form of a game. Ovals are for exploration problems. Double boundary are for online problems.}
\label{fig:diagram-reductions}
\end{figure}
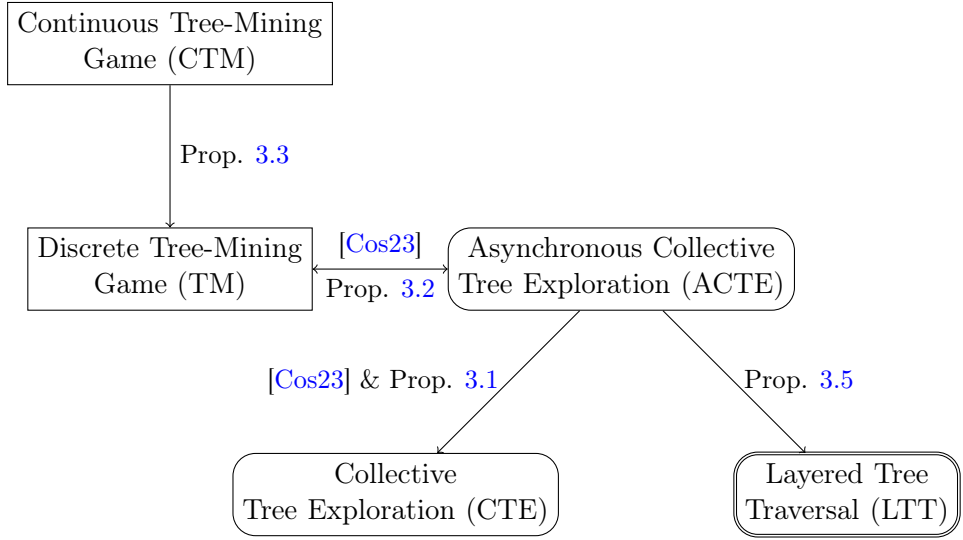


\subsection{Synchronous and Asynchronous Collective Tree Exploration}
The problem of (synchronous) collective tree exploration (CTE) was introduced by \cite{fraigniaud2006collective}. We shall first recall the definition of the problem and then present an extension called asynchronous collective tree exploration (ACTE) \cite{cosson2023breaking}. We finally recall the reduction from (CTE) to (ACTE) in Proposition \ref{th:rcte}. 

\paragraph{(Synchronous) Collective Tree Exploration (CTE).}
The problem is defined as follows. A team of $k$ robots is initially located at the root of an unknown tree $T$ with edges of length $1$. The team is tasked to go through the edges of a tree as fast as possible and then return to the root. At all rounds, the robots can move synchronously along one incident edge, thereby possibly visiting new nodes. When a robot visits a node $u$ for the first time, the full team becomes aware of its degree and of all the unexplored edges adjacent to $u$, that we shall call \textit{dangling} edges. When there are no more dangling edges, the entire team knows that all edges have been explored and meets at the root to declare the termination of exploration. The runtime of an exploration algorithm $\Acal$ is defined on a tree $T$ as the number of rounds before termination and is denoted by $\texttt{Runtime}(\Acal,k,T)$. By extension, for any two integers $n\geq D$, we denote by $\texttt{Runtime}(\Acal,k,n,D)$ the maximum number of rounds required before termination on any tree with $n$ nodes and depth $D$. For more formalism on the definition of synchronous collective tree exploration with complete communication, we refer to \cite[Section 2]{cosson2023breadth}. 

\paragraph{Asynchronous Collective Tree Exploration (ACTE).} The problem of asynchronous collective tree exploration (ACTE) is a generalization of collective tree exploration (CTE) in which the agents move sequentially. At each (discrete) time $t$, only a single robot indexed by $r_t\in [k]$ is allowed to move. The sequence $r_0,r_1,\dots$ is completely adversarial. At the beginning of move $t$, the team is \textit{only} given the information of whether robot $r_t$ is adjacent to some unexplored edge, and if so, the ability to move along that edge. 
Consequently in contrast with collaborative tree exploration, the team does not know the degree of a node until it has been \textit{mined}, i.e. some robot moving from that node was not given a fresh unexplored edge to traverse. Exploration starts with all robots located at the root of some unknown tree $T$ and ends only when all nodes have been mined. We do not ask that all robots return to the root at the end of exploration for this might not be possible in finite time if say some robot is allowed only a single move. 
For an asynchronous collective tree exploration algorithm $\Bcal$, we denote by $\texttt{Moves}(\Bcal,k,T)$ the maximum number of moves that are needed to traverse all edges of $T$. We naturally extend this notation to $\texttt{Moves}(\Bcal,k,n,D)$ for two integers $n\geq D$ denoting the maximum number of moves required to explore a tree with $n$ nodes and depth $D$. (ACTE) generalises (CTE) in the following sense.
\begin{proposition}
    \label{th:rcte}
    For any asynchronous exploration collaborative tree exploration algorithm $\Bcal$, one can derive a synchronous algorithm $\Acal$ satisfying,
    \begin{equation*}
        {\normalfont\texttt{Runtime}}(\Acal,k,T) \leq \left\lceil\frac{1}{k}{\normalfont\texttt{Moves}}(\Bcal,k,T)\right\rceil+D,
    \end{equation*}
    for any tree $T$ of depth $D$, and for all number of robots $k$. 
\end{proposition}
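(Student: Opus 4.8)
The plan is to simulate $\Bcal$ against a \emph{round-robin} move schedule, parallelizing every block of $k$ consecutive asynchronous moves into one synchronous round, and then spending at most $D$ extra rounds to bring all robots back to the root.

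First I would fix the (adversarial) schedule $r_0,r_1,\dots$ so that the asynchronous step $(s-1)k+i$ is performed by robot $i\in[k]$; this round-robin schedule is admissible because $\Bcal$ is, by definition, required to cope with every schedule, so the number of asynchronous moves it makes before all nodes are mined is at most $M:=\Moves(\Bcal,k,T)$ along it. I would then let $\Acal$ be the centralized synchronous algorithm that, in round $s=1,\dots,\lceil M/k\rceil$, computes the moves $\Bcal$ assigns to the asynchronous steps $(s-1)k+1,\dots,(s-1)k+k$, processes them in this order, and executes them simultaneously (a robot whose step index exceeds $M$ idles).

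The crux is to verify that $\Acal$ can perform this simulation using only information legitimately available to a CTE algorithm. I would show, by induction on $(s,i)$, that the CTE state reached after executing within round $s$ the moves of robots $1,\dots,i-1$ is exactly the ACTE state faced by $\Bcal$ at step $(s-1)k+i$: under the round-robin schedule robot $i$ has not moved yet in round $s$, so its position equals its position at the end of round $s-1$ (known to $\Acal$), and the only extra datum $\Bcal$ consults is whether that position still carries a dangling edge. Since the CTE feedback reveals the degree of every visited node and $\Acal$ centrally tracks which incident edges have been traversed so far, $\Acal$ can answer this and hence reproduce $\Bcal$'s move exactly; as ACTE exposes strictly less than CTE, nothing $\Bcal$ relies on is hidden from $\Acal$, so the simulation never stalls. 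This information-synchronization argument is the main obstacle, and the point that makes it go through is that no robot occurs twice within a block of $k$ consecutive round-robin steps, so resolving the block's moves in the fixed order $1,\dots,k$ before the round starts is consistent with their simultaneous execution.

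Finally I would conclude by counting rounds. After $\Bcal$ has made its (at most $M$) moves every node of $T$ is mined, which for a tree is equivalent to every edge having been traversed; this costs $\lceil M/k\rceil$ synchronous rounds. Each robot then sits at depth at most $D$, so moving every robot toward the root along tree edges completes the CTE termination requirement in at most $D$ further rounds, giving $\Runtime(\Acal,k,T)\le\lceil M/k\rceil+D$ as claimed.
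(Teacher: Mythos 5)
Your proof is correct and follows the same route as the paper's: a round-robin schedule $1,2,\dots,k,1,2,\dots$ packs each block of $k$ asynchronous moves into one synchronous round, and $D$ additional rounds suffice for the return to the root. The paper's proof is only a one-line sketch deferring details to \cite{cosson2023breaking}; your write-up usefully spells out the information-synchronization step (that $\Acal$ can centrally reconstruct the single bit $\Bcal$ consults at each round-robin step, because each robot's queried position in block $s$ is its end-of-round-$(s-1)$ position and the traversed-edge bookkeeping for the earlier steps of the block is internal to $\Acal$), which is indeed the point that needs care and which the paper leaves implicit.
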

\paragraph{Proof idea.} Considering a sequence of allowed moves $1,2, \dots, k,1,2, \dots, k, 1,2, \dots$, we recover a synchronous collective exploration algorithm from an asynchronous collective exploration algorithm. Then all robots return to the root in at most $D$ additional steps. We refer to \cite{cosson2023breaking} for more details.

\subsection{Continuous and Discrete Tree-Mining Games}
The problem of asynchronous collective tree exploration can be reduced to a game opposing a player and an adversary known as the `tree-mining' game (TM) \cite{cosson2023breaking}.
In this section, we first recall the original setting of the discrete tree-mining game (TM), and then show it can be reduced to
the continuous tree-mining game (CTM), which was studied in Section \ref{sec:analysis}.

\paragraph{Discrete Tree-Mining Game (TM).} The game is defined as follows for some fixed integer $k\geq 2$. At step $i\geq 0$, the state of the game is defined by a pair $\Tcal(i) = (T(i), \bx(i))$ where $T(i)=(V(i),E(i),L(i))$ is a rooted tree with nodes $V(i)$, edges $E(i)$, and with $L(i)\subset \Lcal(T(i))$ a subset of the leaves of $T(i)$ called \textit{active} leaves. The configuration $\bx(i) = (x_\ell(i))_{\ell\in L(i)}$ represents the number of miners on each active leaf, for a total of $k$ miners, i.e. $\sum_{\ell\in L(i)} x_{\ell}(i)=k$. At the beginning of step $i\in \Nbb$, the adversary is the first to play. 
It chooses an active leaf $\ell(i)\in L(i)$ that becomes inactive and is given $x_{\ell(i)}(i)-1$ (active) children which are added to $L(i+1)$. Then, the player sends one miner of $\ell(i)$ to each of newly created leaves, and decides on some active leaf $\ell'(i) \in L(i+1)$ on which to send the last miner at $\ell(i)$. After round~$i$, the cost of the game is updated as follows,
\begin{align*}
\text{Cost}(i+1) = \text{Cost}(i)+d(\ell(i),\ell'(i)),
\end{align*}
keeping track of the total distance traversed by the excess miners\footnote{This cost can slightly be refined, see \cite{cosson2023breaking}.}. A strategy for the player of the tree-mining game is called $f(k,D)$-bounded if all $k$ miners are guaranteed to attain depth $D$ with a score of at most $f(k,D)$. We have the following result,
\begin{proposition}\label{th:tm-acte}
For every $f(k,D)$-bounded strategy for the tree-mining game (TM) there exists an asynchronous collective tree exploration (ACTE) algorithm $\Bcal$ satisfying,
$$\Moves(\Bcal,k,n,D) \leq 2n + f(k,D).$$
\end{proposition}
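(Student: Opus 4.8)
The plan is to build $\Bcal$ by running the given $f(k,D)$-bounded strategy for (TM) as an oracle and feeding it a \emph{shadow} tree-mining game whose state is kept in correspondence, at all times, with the state of the ACTE exploration; this closely follows and adapts the reduction of \cite{cosson2023breaking}, which I would recall. Concretely, I maintain a shadow tree $\widehat T$ and a shadow configuration $\widehat\bx$ with $\sum_\ell \widehat x_\ell = k$ such that: the active leaves of $\widehat T$ are exactly the frontier nodes of the exploration that may still hide undiscovered edges; $\widehat x_\ell$ is the number of robots currently parked at the node corresponding to $\ell$; $\widehat T$ never has depth larger than $T$, hence at most $D$; and every robot not parked at an active leaf is in transit, partway along a walk that realizes one of the displacements the TM player has committed to.

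I then translate each ACTE step, in which the adversary activates one robot $r_t$. If $r_t$ is in transit it advances one edge along its relocation walk. If $r_t$ sits at an active leaf $\ell$ with a dangling edge, it traverses that edge and reveals a fresh node, which becomes a new active leaf holding one robot; once $\ell$ has spawned enough such children this is reported to the oracle as a fork of $\ell$, and the excess miner the TM player designates is set walking toward its target leaf $\ell'$. If $r_t$ sits at an active leaf $\ell$ with no dangling edge, then $\ell$ is mined, its real children are now known, and a fork of $\ell$ is forced on the oracle. The subtle point is that (TM) rigidly gives a forked leaf exactly $\widehat x_\ell - 1$ children whereas the real degree of $\ell$ in $T$ is adversarial: when $\ell$ has fewer real children than that, the surplus robots are pushed one step into children or consolidated away; when it has more, $\ell$ is only partially processed, re-populated with further robots, and forked again. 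Carrying out this reconciliation while keeping the shadow play legal — $\widehat\bx$ integral and summing to $k$, and every miner ending at depth at most $D$ — is the heart of the argument, and is where I would spend most of the work.

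Granting the simulation, the accounting is a dichotomy. Every ACTE move is either a \emph{displacement move}, which advances some in-transit robot one edge along a walk realizing a TM-player displacement, or a \emph{structural move}, namely a discovery of a fresh node or one of the finitely many moves used to reconcile an adversarial degree and re-consolidate robots around a just-mined leaf. The displacement moves are in length-preserving bijection with the displacements made by the TM player, so their number equals the cost run up by the shadow game, which is at most $f(k,D)$ because the oracle strategy is $f(k,D)$-bounded and all $k$ miners are driven to depth at most $D$. The structural moves are bounded by $2n$ through an amortized charge of a constant number of them to each node of $T$ — one discovery for the first robot ever to reach it, plus a bounded budget for the reconciliation and consolidation happening at it. Summing, $\Moves(\Bcal,k,n,D) \le 2n + f(k,D)$; since the two estimates hold on every tree with $n$ nodes and depth $D$, the proposition follows.

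The step I expect to be the main obstacle is precisely this reconciliation: reproducing, inside a legal play of (TM), an exploration of a tree whose vertex degrees are chosen online by the adversary, in such a way that the shadow tree stays within depth $D$, that $\widehat\bx$ remains a valid configuration throughout, and that the amortized ``constant number of structural moves per node'' bound survives the re-populations of partially-processed leaves. Once that bookkeeping is set up correctly, the two-budget count above is routine.
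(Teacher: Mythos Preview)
Your overall plan --- run the TM strategy as an oracle against a simulated adversary reflecting the ACTE run, then charge every ACTE move to either a $2n$ exploration budget or to the oracle's displacement cost --- is the paper's plan, which it attributes to \cite{cosson2023breaking}. But the specific simulation you set up diverges from the paper's in a way that \emph{creates} the reconciliation problem you then flag as the main obstacle.

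In the paper's reduction each robot carries a \emph{target}: an unmined node of $T$, identified with a TM active leaf. The target does not change while the robot explores. When activated, the robot takes an adjacent unexplored edge if one exists, and otherwise takes one step toward its target (this is the ``locally-greedy algorithm with targets''). A TM step is reported to the oracle only when some robot is sitting at its own target $\ell$ and is told there is no dangling edge --- i.e.\ when $\ell$ is \emph{mined}. Crucially, the $x_\ell-1$ TM-children produced by that fork are \emph{not} the $T$-children of $\ell$: they are the new targets handed to the other $x_\ell-1$ robots that shared target $\ell$, while the robot that did the mining is the excess miner retargeted to whichever $\ell'$ the TM oracle names. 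Since there are always exactly $x_\ell-1$ such robots, the fork rule is matched automatically, whatever the degree of $\ell$ in $T$; there is nothing to reconcile. The $2n$ term then comes from the greedy moves, and the $f(k,D)$ term bounds the retargeting distance summed over all mining events.

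Your version instead identifies TM children with $T$-children of $\ell$ and waits for $\ell$ to spawn enough of them before forking, which is precisely what forces the degree mismatch. The repairs you sketch do not give a legal TM play: once $\ell$ is forked it is inactive and cannot be ``re-populated and forked again''; using a child as a proxy for $\ell$ inflates the TM depth beyond $D$; and the constant-per-node structural-move budget is unsupported when a single high-degree vertex provokes many re-populations. The gap is not in your accounting dichotomy --- that part is fine --- but in the simulation design: let the TM children be driven by where the robots \emph{are}, via fixed targets, rather than by how $T$ branches.
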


\paragraph{Proof idea.} The proof relies on the notion of locally-greedy algorithms with target, formally defined in \cite{cosson2023breaking}. An exploration algorithm is called locally-greedy if moving robots always prefer unexplored edges over explored edges. 
In a locally-greedy algorithm with targets, each robot is assigned a target, which is a discovered node that has not been mined. When a robot is not adjacent to an unexplored edge, it simply performs one step towards its target. In the reduction to the tree-mining game, the targets correspond to the active leaves. 
When a robot mines its target, all robots sharing the same target must be re-targeted, and the additional movement cost associated to this re-targeting is bounded by the distance between the previous target and the new target. This quantity is exactly the cost in the corresponding tree-mining game. We refer to \cite{cosson2023breaking} for more details.

\begin{remark*}
In the description of the game above, leaves are not deleted but rather deactivated, and some nodes can be of degree $2$. An alternative definition of the game, which is closer from the one used in the continuous tree-mining game, would see the deactivated leaves effectively deleted from the tree, as well as nodes of degree equal to $2$ by merging their adjacent edges. This view, which is used for the reduction below, maintains a `simple' tree, as defined in the introduction.
\end{remark*}

\paragraph{Continuous Tree-Mining Game (CTM).} We defined in the beginning of Section \ref{sec:analysis} a variant of the tree-mining game, called the continuous tree-mining game (CTM). Similarly, we say that a strategy for this variant of the game is $f(k,D)$-bounded if $k$ miners are guaranteed to attain depth $D$ with a cost of at most $f(k,D)$. We shall now prove a reduction from the discrete tree-mining game to the continuous tree-mining game.

\begin{proposition}\label{th:tm-ctm}
For any $f(k,D)$-bounded strategy for the player of the continuous game (CTM), there is an $f(k,D)$-bounded strategy for the player of the discrete tree game (TM). 
\end{proposition}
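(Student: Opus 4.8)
The plan is to show that a strategy for the continuous game (CTM) can be "simulated" by the player of the discrete game (TM), so that every discrete fork is realized as a suitable sequence of CTM moves and the total discrete cost is dominated by the continuous cost. First I would set up the dictionary between the two games: in the discrete game, at step $i$ the adversary picks an active leaf $\ell(i)$ with $x_{\ell(i)}(i)$ miners, deactivates it, and creates $x_{\ell(i)}(i)-1$ children, one miner going to each child and one excess miner being routed to a chosen active leaf $\ell'(i)$, at cost $d(\ell(i),\ell'(i))$. I want to reproduce this using the CTM adversary's allowed moves (elongate a leaf with $\geq 2$ miners, fork a leaf with $\geq 3$ miners into $m\in\{2,\ldots,x_\ell-1\}$ children of length $\delta$, kill a leaf), together with the CTM player's freedom to move miners and to choose $\delta$.

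The key steps, in order. (1) Emulate a discrete fork of a leaf $\ell$ carrying $x_\ell$ miners into $x_\ell-1$ children as a CTM fork with $m=x_\ell-1$ children (this is legal since $2\le x_\ell-1\le x_\ell-1$, assuming $x_\ell\ge 3$; the degenerate case $x_\ell=2$ must be handled separately, see below). By Proposition \ref{prop_dynamics_x}, the CTM player, using Algorithm \eqref{eq:dynamics} with a small enough $\delta$, responds so that each newly created leaf gets $\lfloor x_\ell/m\rfloor$ or $\lceil x_\ell/m\rceil$ miners; with $m=x_\ell-1$ this means each child receives exactly one miner except one child receiving two — exactly matching the discrete fork's allocation of one miner per child, with the "last" miner left colocated with one of them. (2) Route the excess miner toward the discrete target $\ell'(i)$: since $\ell'(i)$ is an active leaf and corresponds to a live leaf of the current CTM tree, the CTM player can move that miner there at cost $d(\ell(i),\ell'(i))$ — or, more cheaply, the potential-driven Algorithm \eqref{eq:dynamics} will itself have already relocated miners at cost at most $\OT$, which is bounded by the discrete routing cost; either way the CTM cost of this stage is at most the discrete cost $d(\ell(i),\ell'(i))$. (3) Account for the difference between "deactivation" in TM and "killing a leaf" / keeping degree-two nodes in CTM: by the Remark after Proposition \ref{th:tm-acte}, one may equivalently view TM as maintaining a simple tree with deactivated leaves deleted and degree-two nodes contracted, which is precisely the CTM tree-maintenance convention, so distances and hence costs are preserved under this reparametrization. (4) Handle the elongation/$\delta$-length bookkeeping: the CTM fork introduces edges of length $\delta>0$, whereas the discrete game has unit edges. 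Choose $\delta$ so that the sum over the whole game of these $\delta$-contributions is negligible (e.g. a geometric schedule $\delta_i\to 0$), or rather observe that depth in CTM is measured by "the depth of the highest leaf" and the $f(k,D)$-bound is stated in terms of that depth; one shows the emulation reaches CTM-depth $D$ (up to lower-order additive terms from the $\delta$'s) exactly when all $k$ TM miners reach combinatorial depth $D$. Then an $f(k,D)$-bounded CTM strategy yields total cost $\le f(k,D)$, which upper-bounds the total discrete cost, giving an $f(k,D)$-bounded TM strategy.

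The main obstacle I anticipate is step (1)'s edge cases and the precise matching of allocations. The CTM adversary requires $x_\ell\ge 3$ to fork, and requires $x_{\ell(t)}\ge 2$ to elongate; but the discrete game lets the adversary pick a leaf with as few as $2$ miners (creating a single child). A leaf with exactly two miners that the TM-adversary "forks into one child" is not directly a legal CTM fork; one must instead realize this move by a CTM leaf \emph{elongation} of that leaf (legal, since it has $2\ge 2$ miners) followed by the CTM player moving one miner away — effectively splitting off one miner down a new edge. Getting this correspondence airtight, and checking that the resulting CTM configuration after Algorithm \eqref{eq:dynamics}'s response still has the "one miner per created child" structure demanded by TM (Proposition \ref{prop_dynamics_x} guarantees $x_\ell\ge 1$ on every CTM leaf, which is the crucial invariant), is where the care is needed. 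A secondary subtlety is ensuring the CTM player's \emph{autonomous} re-allocations (triggered by the potential argument, not by us) never increase cost beyond what the discrete accounting allows — but this is exactly the content of the master inequality \eqref{eq:master_inequality}, so it is controlled. Once these correspondences are pinned down, the inequality $\Cost_{\mathrm{TM}}\le \Cost_{\mathrm{CTM}}\le f(k,D)$ follows by summation over the steps.
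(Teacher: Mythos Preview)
Your high-level plan --- simulate the discrete game inside the continuous one --- is the right one, and matches the paper's. But there is a directional confusion in step~(2) that breaks the argument. You treat the discrete target $\ell'(i)$ as \emph{given} and argue that ``the CTM player can move that miner there at cost $d(\ell(i),\ell'(i))$'', concluding that the CTM cost of the stage is at most the discrete cost. But $\ell'(i)$ is the TM \emph{player's} choice: it is exactly what you are supposed to \emph{define} using the given CTM strategy. Worse, the inequality you derive points the wrong way: to get $\Cost_{\mathrm{TM}}\le f(k,D)$ from $\Cost_{\mathrm{CTM}}\le f(k,D)$ you need $\Cost_{\mathrm{TM}}\le \Cost_{\mathrm{CTM}}$, whereas step~(2) argues $\Cost_{\mathrm{CTM}}\le \Cost_{\mathrm{TM}}$. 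Your concluding sentence then asserts the needed inequality without having established it.

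A second gap: the proposition concerns an \emph{arbitrary} $f(k,D)$-bounded CTM strategy, used as a black box. Your argument repeatedly invokes properties specific to Algorithm~\eqref{eq:dynamics} --- Proposition~\ref{prop_dynamics_x} for the post-fork allocation, the master inequality~\eqref{eq:master_inequality} for controlling autonomous reallocations. None of these are available for a generic CTM strategy, so even if the logic of step~(2) were repaired, the proof would only cover that one algorithm.

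The paper's construction avoids both issues by reversing the flow. It manufactures a CTM \emph{adversary} from the TM adversary (delete, elongate, or fork at $\ell(i)$ according to whether $x_{\ell(i)}=1$, $=2$, or $\ge 3$; then keep elongating any leaf with $\ge 2$ miners whose continuous length has not yet reached its discrete unit length), feeds this adversary to the black-box CTM strategy, and then \emph{reads off} the TM player's move from whatever configuration the CTM strategy produces. The elongate-until-matching-length mechanism both replaces your $\delta$-bookkeeping and keeps the CTM depth at most $D$; and since every movement charged in the discrete game is witnessed by actual miner movement in the continuous game, one obtains $\Cost^d(i)\le \Cost^c(t_i)\le f(k,D)$ directly.
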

\begin{proof}
The discrete strategy is defined using the continuous strategy as follows. Given the $i$-th move of the adversary, the player of the discrete game emulates the continuous game for some duration 
$t_{i+1}-t_{i}$ against some adequately defined continuous adversary. The configuration obtained in the continuous tree $T^c$ at time $t_{i+1}$ defines the response of the player on the discrete tree $T^d$ at discrete step $i$. The following properties, which highlight the correspondences between the continuous and the discrete games will always be satisfied at instants $t_1,t_2,t_3,\dots$ (a) $T^c(t_i)$ and $T^d(i)$ have the same set of vertices $V$, edges $E$ and leaves $\Lcal$. (b) For any leaf $\ell\in \Lcal:x_\ell^c=x_\ell^d$. (c) For any $v\in V\setminus \Lcal: d_v^c =d_v^d$ where $d_v^c$ (resp. $d_v^d$) denotes the distance from $v$ to $p(v)$ in $T^c$ (resp. $T^d$). (d) If for some leaf $\ell\in \Lcal$ we have $d_\ell^c<d_\ell^d$ then necessarily, $x_\ell^c =x_\ell^d =1$. 
We assume that the following properties are satisfied at time $t_{i}$ and show how we define time $t_{i+1}$ as well as the evolution of the continuous game during $[t_{i},t_{i+1}]$. 
The state of $T^c(t_{i+1})$ will then define the $i+1$-th move of the player of the discrete game by enforcing (b) in $T^d(i+1)$. Assume that the $i$-th choice of the adversary of the discrete game is a leaf $\ell(i)$ with $x_{\ell(i)}$ miners. Then, a similar move is perpetrated in $T^c$ to that same leaf, i.e. edge deletion if $x_{\ell(i)}=1$, edge elongation if $x_{\ell(i)}=2$ or fork if $x_{\ell(i)}\geq 3$, so that the property (a) is preserved. 
Then, while there is a leaf $\ell$ of $T^c$ satisfying $x_\ell\geq 2$ and $d_\ell^c <d_\ell^d$, that leaf is elongated, leading to the transfer of the excess miner from leaf to leaf. 
When that property is no longer true, we interrupt the passing of time and this defines $t_{i+1}$ as well as the state of the continuous game $T^c(t_{i+1})$. We then perform the move of the player of the discrete game which enforces (b) at step $i+1$ and note that properties (c) and (d) follow. Properties (a), (b), (c), (d) are thus all satisfied at step $i+1$. 
It is clear from the description that all moves taking place in $T^d$ up to step $i$ must have taken place in $T^c$, except for the moves inside partially completed leaves, which came at no cost in the discrete game. In conclusion the cost of the discrete (resp. continuous) game $\text{Cost}^d(i)$ (resp. $\text{Cost}^c(i)$) satisfy $\text{Cost}^d(i)\leq \text{Cost}^c(i)$.
\end{proof}

\subsection{Layered Tree Traversal} 
In this section, we first recall the problem of layered tree traversal (LTT) which is a crucial special case of layered graph traversal (LGT) \cite{fiat1998competitive}. We then recall the equivalence between fractional strategies and mixed (i.e. randomized) strategies for layered tree traversal \cite{bubeck2022shortest}. 
After, we show how a new type of guarantees for layered tree traversal can be obtained from asynchronous collective exploration (ACTE) algorithms. 


\paragraph{Layered Tree Traversal.} A searcher attempts to go from the source $r\in V$ to the target $r'\in V$ of a weighted tree $T=(V,E)$ with as little effort as possible. The searcher does not initially observe the entire tree $T$ which is instead revealed \textit{layer by layer}. The $i$-th layer corresponds to the set of nodes of combinatorial depth $i$. At step $i$, the $i$-th layer is revealed along with all the edges going from the $i-1$-th layer to the $i$-th layer. The searcher then has to choose one node of the $i$-th layer and must move to this node using previously revealed edges. The length of all edges is assumed to take its values in $\{0,1\}$. 
The goal for the searcher is to reach the last layer (which contains $r'$) while traversing the minimum number of length $1$ edges. The following quantities are useful to describe the layered tree,
\begin{itemize}
    \item the \textit{width} $w \in \Nbb$ is equal to the maximum number of nodes of a layer,
    \item the \textit{depth} $D\in \Rbb$ is equal to the shortest path distance from $r$ to $r'$,
    \item the \textit{length} $L \in \Rbb$ is equal to the sum of all edge lengths.
\end{itemize}
It was observed by \cite{fiat1998competitive} that the setting of LTT with edge lengths in $\{0,1\}$ encompasses the general setting of layered tree traversal with arbitrary edge lengths. Given an algorithm $\Acal$ traversing layered trees with edge lengths in $\{0,1\}$, we readily obtain an algorithm $\Acal'$ traversing layered trees with edge lengths in $\Nbb$, by cutting edges in segments of size $0$ and $1$, inserting intermediary layers when needed. Note that this operation does not change the quantities of interest: the width $w$, the depth $D$, and the length $L$. The argument is easily generalized to allow for edge lengths in $\{0\}\cup[\epsilon,\infty]$ where $\epsilon$ is a lower-bound on the smallest non-zero edge length, see \cite{fiat1998competitive} for full details.

\paragraph{Fractional strategies and randomized algorithms.} A \textit{deterministic algorithm} for layered tree traversal is one that maps a sequence of layers, as well as the position of the searcher in the before-last layer, to the position of the searcher in the last layer. The cost of a deterministic algorithm on some layered tree $T$ equals the total length traversed by a searcher using the algorithm to navigate in that tree, when layers are revealed one after the other. A \textit{randomized algorithm} is a probability distribution over deterministic algorithms, its cost is defined by taking the expectation of the cost over the deterministic algorithms. A \textit{fractional strategy} is one which maps a probability distribution on the before-last layer, to a probability distribution on the last layer.  The cost of a fractional strategy is equal to the sum of optimal transport costs between all consecutive probability distributions, until the last layer is revealed. The following reduction was observed by \cite{bubeck2022shortest}.
\begin{proposition}\label{prop:fractional}
    For every fractional strategy of the searcher, there is a mixed strategy incurring the same cost in expectation.
\end{proposition}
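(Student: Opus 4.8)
The plan is to prove Proposition~\ref{prop:fractional} by a coupling argument: from the fractional strategy we construct a single random trajectory of the searcher whose law, layer by layer, coincides with the distributions prescribed by the strategy, and whose expected total movement equals the fractional cost. The conceptual point is that such a random trajectory can be generated by drawing a random seed once and then running a fully deterministic online algorithm, which exhibits it as a probability distribution over deterministic algorithms, i.e.\ a mixed strategy.

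Concretely, fix a layered tree $T$ with layers $0,1,\dots,m$, and let $r=p_0,p_1,\dots,p_m$ be the probability distributions produced by the fractional strategy $S$ on $T$, where $p_i$ is supported on layer $i$, is a deterministic function of the first $i$ revealed layers (since $S$ is a deterministic map on distributions and $p_0=\delta_r$ is fixed), and satisfies $\Cost(S,T)=\sum_{i=1}^m \OT_T(p_{i-1},p_i)$ by definition of the cost of a fractional strategy. For each $i$, fix by an arbitrary deterministic tie-breaking rule an optimal transport plan $\pi_i$, i.e.\ a coupling of $p_{i-1}$ and $p_i$ with $\sum_{u,v}\pi_i(u,v)\,d(u,v)=\OT_T(p_{i-1},p_i)$; since each layer is finite, $\pi_i$ is a finite stochastic matrix and an inverse-CDF map turns a uniform variable $\omega_i\in[0,1]$ into a sample from the conditional $\pi_i(\cdot\mid u)$ for each $u$ in layer $i-1$. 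The mixed strategy is then: draw i.i.d.\ uniforms $\omega_1,\omega_2,\dots$ once and for all (this is the randomization over deterministic algorithms); set $X_0=r$; and for $i\ge 1$, after observing layer $i$, internally recompute $p_{i-1},p_i,\pi_i$ (all deterministic functions of the observed layers) and walk from $X_{i-1}$ to the node $X_i$ of layer $i$ obtained from $X_{i-1}$ and $\omega_i$ via the inverse-CDF map of $\pi_i(\cdot\mid X_{i-1})$. This move is legal for an online searcher because the unique path in $T$ from $X_{i-1}$ to $X_i$ passes through their lowest common ancestor and hence uses only edges between layers $\le i$, all revealed by step $i$, and it costs exactly $d(X_{i-1},X_i)$.

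The verification is then a short induction on $i$: $X_{i-1}$ is a deterministic function of $\omega_1,\dots,\omega_{i-1}$ and is therefore independent of $\omega_i$, so if $X_{i-1}$ has law $p_{i-1}$ (true for $i=1$ since $X_0=r$) then the pair $(X_{i-1},X_i)$ has law exactly $\pi_i$, whence $X_i\sim p_i$ and $\Ebb[d(X_{i-1},X_i)]=\OT_T(p_{i-1},p_i)$. Summing over $i$ gives $\Ebb[\Cost]=\sum_{i=1}^m \OT_T(p_{i-1},p_i)=\Cost(S,T)$, which is the claim (and in fact an equality, not merely an upper bound).

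The step I would be most careful about — the only place there is anything to check beyond routine bookkeeping — is that this yields \emph{one} mixed strategy valid for every tree, rather than a tree-dependent coupling. This holds because the deterministic algorithm indexed by a seed $(\omega_i)_i$ reconstructs the $p_i$'s and the plans $\pi_i$ purely from data it observes online (the revealed layers), using fixed rules; the marginal-matching induction above then runs verbatim against any oblivious adversary. It is exactly the obliviousness of the adversary (the tree is fixed before the searcher's coins are drawn) that makes $X_{i-1}$ independent of $\omega_i$, and this is implicit in the model. We refer to \cite{bubeck2022shortest} for this reduction in the evolving-tree-game formulation; everything else — existence of the optimal transport plan, legality of the moves, and the expectation computation — is routine.
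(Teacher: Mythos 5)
Your proposal is correct and takes essentially the same approach as the paper, which (via the reference to \cite{bubeck2022shortest}) builds the mixed strategy by induction on layers, realizing an optimal transport coupling between consecutive fractional configurations as a Markovian update of the searcher's position. You simply spell out the details the paper delegates to the reference, including the inverse-CDF realization of the coupling and the observation that obliviousness of the tree gives the needed independence for the marginal-matching induction.
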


\paragraph{Proof idea.} The proof works by induction, the idea is to build a probability distribution over deterministic algorithms that has the same expected cost as the fractional strategy. At each step, the optimal transport cost between two consecutive probability distributions on both layers allows to define a coupling which tells how the position of a searcher should change, given its current position. We refer to \cite[Section 4]{bubeck2022shortest} for more details.

\paragraph{Connection to (ACTE).} We now show how an asynchronous collective tree exploration algorithm can be used to obtain a fractional strategy for layered tree traversal (LTT).

\begin{proposition}\label{th:reduction-lgt}
Given an asynchronous collective tree exploration (ACTE) algorithm $\Bcal$, for any $k\in \Nbb$, there exists a mixed strategy of the searcher $\Acal$ such that for any layered tree $T$ with edge lengths in $\{0,1\}$, we have,
\begin{equation*}    \Ebb({\normalfont\texttt{Cost}}(\Acal, T)) \leq \frac{1}{k}\normalfont\texttt{Moves}(\Bcal,k,T')
\end{equation*}
where $T'$ denotes the tree obtained from $T$ by concatenating all nodes connected by length $0$ edges.
\end{proposition}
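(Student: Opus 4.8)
The plan is to run the asynchronous exploration algorithm $\Bcal$ with $k$ robots on the tree $T'$ (with unit-length edges) and read off, layer by layer, the distribution of robot positions to obtain a fractional strategy for layered tree traversal, then invoke Proposition \ref{prop:fractional} to turn it into a mixed strategy. The key observation is that a layered tree is revealed one layer at a time, and the exploration adversary can be made to reveal $T'$ in exactly this order: whenever the searcher must commit to layer $i$, we let the exploration algorithm's robots advance to combinatorial depth $i$ in $T'$, and we use the empirical distribution of the $k$ robots over the nodes of layer $i$ (normalized by $k$) as the fractional position in layer $i$.

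First I would set up the correspondence between the layered tree $T'$ and the exploration instance: the root $r$ of $T'$ is the robots' starting node, and the structure of $T'$ (which has only $\{1\}$-length edges after contracting the length-$0$ edges of $T$) is presented to $\Bcal$ as the unknown tree to explore, with layers revealed in order. Since $\Bcal$ is asynchronous and the move sequence $r_0, r_1, \dots$ is adversarial, I would choose a move schedule that pushes all robots down level by level: only release a robot onto a dangling edge when that edge connects the current frontier layer to the next, so that at well-defined times $t_0 < t_1 < \dots$ all $k$ robots sit at combinatorial depth $0, 1, 2, \dots$ respectively. Define the fractional position $\by(i) \in \Ycal$ on layer $i$ by $y_\ell = x_\ell / k$ where $x_\ell$ is the number of robots at node $\ell$ at time $t_i$; since $\Bcal$ moves robots only along tree edges, the optimal transport cost $\OT_{T'}(\by(i), \by(i+1))$ between consecutive layer-distributions is at most $1/k$ times the number of robot moves performed by $\Bcal$ between times $t_i$ and $t_{i+1}$. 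Summing over all layers, the total fractional cost is at most $\frac{1}{k}\Moves(\Bcal, k, T')$. Applying Proposition \ref{prop:fractional} then yields a mixed strategy $\Acal$ with $\Ebb(\Cost(\Acal,T)) = \Ebb(\Cost(\Acal,T'))$ equal to this fractional cost, which gives the claimed bound (and passing from $T'$ back to $T$ is free, since contracted length-$0$ edges cost nothing to traverse).

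The main obstacle I anticipate is verifying that the adversarial move schedule can legitimately be chosen to synchronize the robots at integer depths — i.e. that $\Bcal$, being required only to explore all edges eventually, can be driven to have all $k$ robots simultaneously at depth $i$ at some time $t_i$ without "wasting" moves that would not be charged against $\Moves$. One must be careful that robots of $\Bcal$ may want to backtrack or linger, and that the notion of "layer $i$ revealed" in LTT (all nodes at combinatorial depth $i$, with their connecting edges) matches the information $\Bcal$ receives upon mining nodes at depth $i-1$; in particular the searcher in LTT must pick its layer-$i$ node using only previously revealed edges, which corresponds exactly to robots moving along already-traversed edges of $T'$. A secondary subtlety is the endpoint: the target $r'$ lies in the last layer, and one needs the fractional strategy's final distribution to be supported appropriately — but since the fractional cost already dominates the expected traversal cost to reach \emph{any} node of the last layer, and Proposition \ref{prop:fractional} preserves this in expectation, this is handled automatically. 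I would keep the exposition short here and defer full details to the cited treatment in \cite{bubeck2022shortest, cosson2023breaking}, matching the "proof idea" style used for the neighboring reductions.
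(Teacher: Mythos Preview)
Your proposal is correct and follows essentially the same route as the paper: run $\Bcal$ on the contracted tree $T'$ under an adversarial schedule that advances all $k$ robots layer by layer, read off the empirical robot distribution on each layer as a fractional configuration, bound the total optimal-transport cost by $\frac{1}{k}\Moves(\Bcal,k,T')$, and invoke Proposition~\ref{prop:fractional}. The paper resolves your anticipated obstacle simply by noting that since $\Bcal$ finishes exploring $T'$ in finitely many moves each robot must eventually reach the next layer, and it handles the $T$-versus-$T'$ indexing by defining $V'_i$ as the set of nodes of $T'$ lying in layer $i$ of $T$ (these sets overlap when length-$0$ edges are contracted), a minor refinement of your depth-in-$T'$ phrasing.
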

\begin{proof}
We use the asynchronous collaborative exploration algorithm $\Bcal$ to define a fractional strategy for layered graph traversal. For this purpose we define $T'$ the tree obtained from the weighted tree $T$ by concatenating all nodes that are connected by length $0$ edges. 
We also denote by $V'_i$ the set of all nodes of $T'$ that belong to the $i$-th layer of $T$. Note that the sets $V'_i$ no longer form a partition of the nodes, since one node may belong to multiple layers. 
At each step $i$, the $i$-th layer of $T$ is revealed and we shall use our (ACTE) algorithm $\Bcal$ on $T'$ to define a fractional configuration on that layer. 

Specifically, we shall consider a run of the algorithm $\Bcal$ on $T'$ and instants $t_0\leq \dots \leq t_i\leq \dots$ such that (1) at time $t_i$, all robots are located on a node of $V_i'$ (2) before time $t_i$, no robot located on a node of $V_i'$ was ever granted a move. If these properties are satisfied for some $t_i$, it is easy to define a time $t_{i+1}$ and a sequence of robot moves such that the property will be satisfied at $t_{i+1}$. Indeed, simply grant a move to the robots until they each reach a node of $V_{i+1}'$. Since the exploration algorithm $\Bcal$ finishes in finite time, all robots will eventually reach such node. 

Note that the run defined above can be performed online, if the layer $i+1$ is always revealed at time $t_i$. Thus, this run of $\Bcal$ allows to define a fractional strategy for layered tree traversal, where the probability distribution is given by the distribution of the $k$ robots on $V'_i$, translated in a distribution over $V_i$ which is the actual $i$-th layer of $T$. It is clear that the number of robot moves between two consecutive steps is at least $k$ times the optimal transport cost between the distributions. Finally, observe that the exploration is not finished until the last layer, which we assume without loss of generality reduced to the target $\{r'\}$, has been reached by all robots. But the target is then still un-mined because no robot located at $r'$ was granted a move, thus at this instant the total number of moves of the exploration algorithm is less than $\normalfont\texttt{Moves}(\Bcal,k,T)$.  This finishes the proof of the proposition. 
\end{proof}

\section{Applications}\label{sec:applications}
In this section, we combine the results obtained for the continuous tree-mining game in Section \ref{sec:analysis} to the reduction presented in Section \ref{sec:reductions} in order to obtain new guarantees for collective tree exploration (CTE) and layered tree traversal (LTT). 
\subsection{Application to Collective Tree Exploration}
The first result that we obtain is on asynchronous collective tree exploration (ACTE).
\begin{theorem}\label{th:result-acte}
    There exists an asynchronous collective tree exploration algorithm $\Bcal$ satisfying, for any $k\in \Nbb$ and $n\geq D$,
    $$\Moves(\Bcal,k,n,D) \leq 2n + \Ocal(k^2D).$$
\end{theorem}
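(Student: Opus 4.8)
The plan is to chain together the reductions established in Section \ref{sec:reductions} with the analysis of the continuous tree-mining game from Section \ref{sec:analysis}. Concretely, Theorem \ref{th:main} provides an $f(k,D)$-bounded strategy for the continuous tree-mining game (CTM) with $f(k,D) = \Ocal(k^2 D)$. By Proposition \ref{th:tm-ctm}, this yields an $f(k,D)$-bounded strategy for the discrete tree-mining game (TM) with the same bound $f(k,D) = \Ocal(k^2 D)$. Then Proposition \ref{th:tm-acte} converts this into an asynchronous collective tree exploration algorithm $\Bcal$ with $\Moves(\Bcal,k,n,D) \leq 2n + f(k,D) = 2n + \Ocal(k^2 D)$. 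This is exactly the claimed statement, so the proof is essentially a one-line composition of three already-proved results.

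The only point requiring a little care is bookkeeping of the bound $f(k,D)$ through the chain: both reductions (Proposition \ref{th:tm-ctm} and Proposition \ref{th:tm-acte}) preserve the $f(k,D)$ term verbatim (the former maps an $f(k,D)$-bounded continuous strategy to an $f(k,D)$-bounded discrete strategy; the latter adds the $2n$ traversal lower bound but keeps $f(k,D)$), so no constants or dimensional factors are lost. One should also recall that Theorem \ref{th:main} itself follows from Proposition \ref{prop:potential-arguement}: the master inequality \eqref{eq:master_inequality} together with the trivial bound $\Psi(T,\by) \le \phi(k) D$ and the choice of parameters in Remark \ref{rem:param} (where $\phi(k) = ak + bk^2 = 20k^2 + 5k^2 = \Ocal(k^2)$ for $a = 20k$, $b=5$) gives $f(k,D) = \gamma \phi(k) D = 48(20k^2+5k^2)D = \Ocal(k^2 D)$.

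I do not anticipate a genuine obstacle here, since all the work has been front-loaded into Section \ref{sec:analysis} and the reduction propositions of Section \ref{sec:reductions}; the statement is a corollary assembled from Theorem \ref{th:main}, Proposition \ref{th:tm-ctm}, and Proposition \ref{th:tm-acte}. If anything, the subtlety worth flagging in the write-up is simply making explicit that the $f$ in each invoked proposition is literally the same function, so that the final competitive overhead is $\Ocal(k^2 D)$ and not something larger. The proof I would write is therefore: ``By Theorem \ref{th:main} there is an $f(k,D)$-bounded strategy for (CTM) with $f(k,D) = \Ocal(k^2D)$; by Proposition \ref{th:tm-ctm} this gives an $f(k,D)$-bounded strategy for (TM); by Proposition \ref{th:tm-acte} this gives an (ACTE) algorithm $\Bcal$ with $\Moves(\Bcal,k,n,D) \le 2n + f(k,D) = 2n + \Ocal(k^2D)$.''
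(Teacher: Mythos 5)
Your proposal is correct and is exactly the paper's proof: the paper's own proof of Theorem~\ref{th:result-acte} is the one-line combination of Theorem~\ref{th:main}, Proposition~\ref{th:tm-ctm}, and Proposition~\ref{th:tm-acte}, and your bookkeeping of the bound $f(k,D)=\Ocal(k^2D)$ through the chain matches the paper's reasoning.
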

\begin{proof}
    Combine Proposition \ref{th:tm-ctm} and Proposition \ref{th:tm-acte} with Theorem \ref{th:main}.
\end{proof}
This result then immediately translates to the synchronous setting (CTE), providing the announced competitive ratio. 
\begin{theorem}
There exists a collective tree exploration algorithm $\Acal$ with runtime, $${\normalfont\texttt{Runtime}}(\Acal,k,n,D) \leq \frac{2n}{k}+\Ocal(kD).$$
The exploration algorithm $\Acal'$ which uses this algorithm with $k'=\sqrt{k}$ robots while leaving $k-\sqrt{k}$ robots idle at the root thus has a competitive ratio of $\Ocal(\sqrt{k})$.
\end{theorem}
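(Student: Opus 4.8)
The plan is to chain together the reductions established in Section~\ref{sec:reductions} with the $\Ocal(k^2D)$-bounded strategy for the continuous tree-mining game (Theorem~\ref{th:main}). Concretely, I would first invoke Theorem~\ref{th:main} to obtain an $f(k,D)$-bounded strategy for (CTM) with $f(k,D)=\Ocal(k^2D)$. Feeding this into Proposition~\ref{th:tm-ctm} yields an $f(k,D)$-bounded strategy for the discrete tree-mining game (TM), and then Proposition~\ref{th:tm-acte} converts this into an asynchronous collective tree exploration algorithm $\Bcal$ with $\Moves(\Bcal,k,n,D)\le 2n+\Ocal(k^2D)$ — this is exactly Theorem~\ref{th:result-acte}. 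Finally, Proposition~\ref{th:rcte} turns the asynchronous algorithm into a synchronous one $\Acal$ with
\[
\Runtime(\Acal,k,T)\le \left\lceil\frac{1}{k}\Moves(\Bcal,k,T)\right\rceil+D \le \frac{2n+\Ocal(k^2D)}{k}+D+1 = \frac{2n}{k}+\Ocal(kD),
\]
absorbing the $+D+1$ into the $\Ocal(kD)$ term. This establishes the first claim.

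For the competitive-ratio claim, I would run the above algorithm with $k' = \sqrt{k}$ robots (taking $k'=\lceil\sqrt{k}\rceil$ or $\lfloor\sqrt{k}\rfloor$ as convenient, which only affects constants), leaving the remaining $k-\sqrt{k}$ robots parked at the root where they contribute nothing to the exploration and nothing to the runtime. By the first part, the runtime is then at most $\frac{2n}{\sqrt{k}}+\Ocal(\sqrt{k}\,D)$. Since $\frac{2n}{\sqrt{k}} = \sqrt{k}\cdot\frac{2n}{k}\le 2\sqrt{k}\cdot\frac{n}{k}$ and $\Ocal(\sqrt{k}D)=\Ocal(\sqrt{k})\cdot D$, the total is bounded by $\Ocal(\sqrt{k})\bigl(\frac{n}{k}+D\bigr)$, which is precisely the definition of an $\Ocal(\sqrt{k})$-competitive algorithm as recalled in the introduction. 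Here one must be a little careful that the competitive benchmark $\frac{n}{k}+D$ is stated in terms of the \emph{full} team size $k$, not $k'$, which is exactly why the $\sqrt{k}$ slowdown from using fewer robots is what gets exposed as the competitive ratio.

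The only genuinely substantive content has already been carried out upstream — the potential-function analysis of (CTM) in Section~\ref{sec:analysis} and the reductions in Section~\ref{sec:reductions} — so there is no real obstacle here; the proof is a bookkeeping assembly. The one point requiring minor care is the arithmetic rewriting $\frac{2n+\Ocal(k^2D)}{k}+D+1 = \frac{2n}{k}+\Ocal(kD)$, where the $\Ocal(k^2D)/k = \Ocal(kD)$ step and the absorption of $D+1$ are both elementary but should be stated explicitly; and, for the second statement, ensuring that ``$k$'' in the competitive-ratio definition refers to the same quantity in benchmark and algorithm so that the $\Ocal(\sqrt{k})$ factor is correctly identified.
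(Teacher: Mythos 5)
Your proof is correct and follows exactly the paper's route: assemble Theorem~\ref{th:result-acte} (itself obtained by chaining Theorem~\ref{th:main} through Propositions~\ref{th:tm-ctm} and~\ref{th:tm-acte}) with Proposition~\ref{th:rcte}, then apply the algorithm with $k'=\lfloor\sqrt{k}\rfloor$ robots and observe $\frac{2n}{\sqrt{k}}+\Ocal(\sqrt{k}D)\le\Ocal(\sqrt{k})\bigl(\frac{n}{k}+D\bigr)$. The arithmetic and the remark that the competitive benchmark uses the full team size $k$ are both handled correctly.
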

\begin{proof}
    Apply Theorem \ref{th:result-acte} with Proposition \ref{th:rcte}. For the competitive ratio, observe that $$\frac{2n}{\sqrt{k}}+\text{cst}\times\sqrt{k}D \leq \text{cst}\times\sqrt{k}\left(\frac{n}{k}+D\right).$$
\end{proof}

\subsection{Application to Layered Tree Traversal}
The main result of this section is the following.
\begin{theorem}\label{th:result-ltt} There exists a collection of layered tree traversal (LTT) randomized algorithms $\{\Acal_k\}_{k\in \Nbb}$ satisfying for any layered tree $T$,
\begin{equation*}
\Ebb({\normalfont\texttt{Cost}}(\Acal_k, T)) \leq \frac{2L}{k} + \Ocal(kD),
\end{equation*}
where $L$ denotes the sum of all edge lengths in $T$ and $D$ denotes the distance from source to target.
\end{theorem}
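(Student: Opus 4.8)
The plan is to obtain Theorem~\ref{th:result-ltt} as a direct corollary of the chain of reductions established in Section~\ref{sec:reductions} combined with the bound $f(k,D) = \Ocal(k^2 D)$ for the continuous tree-mining game proved in Theorem~\ref{th:main}. Concretely, I would first invoke Theorem~\ref{th:main} to fix an $f(k,D)$-bounded strategy for (CTM) with $f(k,D) = \Ocal(k^2 D)$. Then Proposition~\ref{th:tm-ctm} transfers this to an $f(k,D)$-bounded strategy for the discrete tree-mining game (TM), and Proposition~\ref{th:tm-acte} yields an (ACTE) algorithm $\Bcal$ with $\Moves(\Bcal,k,n,D) \leq 2n + f(k,D)$. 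This is precisely the content of Theorem~\ref{th:result-acte}.

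Next I would feed $\Bcal$ into Proposition~\ref{th:reduction-lgt}: for each $k$ it produces a mixed strategy $\Acal_k$ for the searcher such that for any layered tree $T$ with edge lengths in $\{0,1\}$,
\begin{equation*}
\Ebb(\Cost(\Acal_k,T)) \leq \frac{1}{k}\Moves(\Bcal,k,T'),
\end{equation*}
where $T'$ is the tree obtained by contracting all length-$0$ edges of $T$. Here I would note that the contraction does not change the depth ($D$ is the source-to-target distance, which only counts length-$1$ edges) and that the number of nodes of $T'$ that host robots is controlled by the number of length-$1$ edges, namely $L$: more precisely, traversing all edges of $T'$ costs the $2n'$ term, and in the reduction this $n'$ is bounded in terms of $L$ (each unit of $L$ contributes a bounded number of nodes to $T'$ after contraction), so $\Moves(\Bcal,k,T') \leq 2L + \Ocal(k^2 D)$. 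Dividing by $k$ gives $\Ebb(\Cost(\Acal_k,T)) \leq 2L/k + \Ocal(kD)$. The reduction from $\{0,1\}$-weighted trees to arbitrary edge lengths (recalled after the definition of LTT, following \cite{fiat1998competitive}) then extends this to all layered trees, since subdividing edges preserves $L$, $D$, and $w$.

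The main obstacle, and the step requiring the most care, is making precise the relationship between $\Moves(\Bcal,k,T')$ and the length $L$ in Proposition~\ref{th:reduction-lgt}: the statement there is phrased with $\Moves(\Bcal,k,T')$, and one must argue that the contracted tree $T'$ has $\Ocal(L)$ nodes relevant to the exploration (equivalently, that the $2n'$ term in Theorem~\ref{th:result-acte} applied to $T'$ is at most $2L$ up to the lower-order term), and that its depth is still $D$. The subtlety is that $T'$ may have many nodes of degree $2$ that lie on length-$1$ paths, but since (ACTE) counts moves along edges and the total number of length-$1$ edges of $T$ equals $L$, the number of moves needed to visit all of $T'$ is $2L$ plus the competitive overhead $\Ocal(k^2 D)$; this is exactly the bookkeeping step to verify. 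A secondary, purely routine point is the additive $+1$ appearing in Theorem~\ref{th:recall} versus the cleaner $\Ocal(kD)$ here, which is absorbed into the big-$\Ocal$. Once these bookkeeping facts are in place, the theorem follows by concatenating the four cited reductions.
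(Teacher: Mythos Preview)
Your proposal is correct and follows exactly the paper's approach: the paper's own proof of Theorem~\ref{th:result-ltt} is the one-liner ``Apply Theorem~\ref{th:result-acte} with Proposition~\ref{th:reduction-lgt}'', and you have simply unpacked this chain (Theorem~\ref{th:main} $\to$ Proposition~\ref{th:tm-ctm} $\to$ Proposition~\ref{th:tm-acte} $\to$ Proposition~\ref{th:reduction-lgt}) together with the bookkeeping that the contracted tree $T'$ has $L+1$ nodes and depth $D$. The obstacles you flag---that $n'$ is controlled by $L$ after contracting length-$0$ edges, and that the $\{0,1\}$ case extends to arbitrary lengths via subdivision---are exactly the routine verifications implicit in the paper's one-line proof.
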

\begin{proof}
    Apply Theorem \ref{th:result-acte} with Proposition \ref{th:reduction-lgt}.
\end{proof}
We now discuss the relevance of the above result in light of prior work on layered tree and graph traversal. The most notable difference is that our guarantee does not depend on the width $w$, contrarily to all previous work on the topic which focused on bounds of the form $\Ocal(c(w)D)$ (competitive analysis). The latest result of this kind is the recent algorithm $\Acal$ of \cite{bubeck2022shortest} satisfying, 
$\Ebb({\Cost}(\Acal, G)) \leq \Ocal(w^2D).$
Instead, our guarantees depend on the sum of all edge lengths $L$, which cannot be bounded by a function of $w$ and $D$ in full generality. We thus present below two settings illustrating the interest Theorem \ref{th:result-ltt}.

\paragraph{Unit edge lengths.} Recall that layered tree traversal can be reduced to the case of edge lengths belonging to $\{0,1\}$. We consider here the simple restriction where the edge lengths are all set equal to $1$, and where the environment is thus represented by an unweighted tree structure. This setting seems particularly relevant to robotic applications in which sensors have a given range. The unit lengths assumption entails that $D = N $, where $N$ is the number of layers, and thus that $L\leq w D$, where $w$ is the width of the layered tree. Therefore choosing $k = \floor{\sqrt{w}}$, we obtain an algorithm traversing unweighted layered trees of width $w$ with cost $\Ocal(\sqrt{w}D)$. 
To the best of our knowledge, this is the first guarantee on this problem that improves over the naive $\Ocal(L) = \Ocal(wD)$ upper bound, achieved by a simple depth-first search. 
More generally, if edges are not of unit lengths, but the ratio between the shortest and the longest edge lengths is bounded by a constant $C$, we obtain a guarantee in $\Ocal(\sqrt{Cw}D)$.

\paragraph{Average case analysis.} Another natural situation which provides a control of the value of $L$ is when the problem instance is sampled at random from probability distribution. One reasonable way to sample a layered graph of width $w$ is as follows. We first pick an arbitrary time horizon $N$ and we consider a set of layers $L_1, \dots, L_N$ that each contain $w$ nodes. One node of the last layer $L_N$ is arbitrarily chosen to be connected to the target $r'$, whereas all nodes of the first layer are connected to the source $r$. For all $i<N$, we define $E_s$ the set of edges in $L_i \times L_{i+1}$, connecting layer $i$ to layer $i+1$, by assigning to each node of $L_{i+1}$ one parent in $L_i$ chosen uniformly at random. 
Then for every edge $e\in E$, we pick an independent sample of a Bernouilli random variable $Z_e \sim \Bcal(\frac{1}{2})$ to decide whether $e$ is of length $0$ or $1$. The total length of the tree is then the sum of $|E|=wN$ independent Bernoullis, i.e. $L = \sum_{e\in E} Z_e$, and the distance from source to target is the sum of $N$ independent Bernoullis $D = \sum_{e\in P(r,r')}Z_e$ where $P(r,r')$ denotes the path from source to target. In this context, with high probability and for $N$ sufficiently large, $L$ will be of order as $\Theta(wD)$. As above, we then choose $k=\floor{\sqrt{w}}$ to obtain a guarantee in $\Ocal(\sqrt{w}D)$. This discussion could be of particular interest in the perspective of an average case analysis of layered tree and graph traversal. 

\subsection*{Acknowledgements}
The authors thank Laurent Viennot for stimulating discussions, and the Argo team at Inria. This work was supported by PRAIRIE ANR-19-P3IA-0001.
\bibliography{sample}

\appendix
\section{Proofs of Lemmas \ref{lemma:elongation-tension}, \ref{lemma:fork-tension}, \ref{lemma:fork-tension}, \ref{lemma:minoring-tension} on the dynamics of $\bx$}\label{ap:control_of_x}
\begin{lemma}\label{lemma:elongation-tension}
Consider some time interval $I=[t_1,t_2]$, in which some leaf $l$ is continuously extended. In this time interval, all moves by Algorithm \eqref{eq:dynamics} consist in moving a robot from $l$ to some other leaf of the tree, and no two moves happen simultaneously.
\end{lemma}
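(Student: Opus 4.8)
The plan is to analyze the effect of a leaf elongation on the optimality condition enforced by Algorithm \eqref{eq:dynamics}, namely that $\Psi(T,\bx) - \Psi(T,\bx') < \OT_T(\bx,\bx')$ for all $\bx' \neq \bx$, i.e. $\tau(\bx \to \bx') < \OT_T(\bx,\bx')$. During the elongation of leaf $l$, the only edge length that changes is $d_l$, increasing at unit rate; so the only potential term that changes is $d_l\,\phi(x_l)$. Consequently, for any fixed target configuration $\bx'$, the tension $\tau(\bx \to \bx')$ changes in time only through the term $d_l(\phi(x_l) - \phi(x'_l))$, while $\OT_T(\bx,\bx')$ changes only through $d_l\,|x_l - x'_l|$. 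A player move is forced exactly when the strict inequality $\tau(\bx\to\bx') < \OT_T(\bx,\bx')$ is about to be violated for some $\bx'$.

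First I would argue that whenever a move is triggered, there is a triggering target $\bx'$ with $x'_l < x_l$, i.e. the elongated leaf strictly loses robots. Indeed, if $x'_l \geq x_l$ then elongating $l$ does not increase $\tau(\bx\to\bx') - (\text{the part of }\OT\text{ not through }l)$ faster than it increases $d_l|x_l - x'_l|$: more precisely, the derivative in $d_l$ of $\tau(\bx\to\bx') - \OT_T(\bx,\bx')$ equals $\phi(x_l) - \phi(x'_l) - |x_l - x'_l|$; I would check, using $x_l \geq 2$ (the adversary only elongates leaves with at least two robots) together with the definition of $\phi$ and $\phi(1)=a+b>1$, that this quantity is $\le 0$ when $x'_l \ge x_l$, so such a $\bx'$ can never become tight first. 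Hence the binding constraint has $x'_l < x_l$, and the move transfers robots out of $l$.

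Next I would show the move is \emph{atomic}: exactly one robot leaves $l$, and no other robot moves. Here I invoke Lemma \ref{lemma:tension-decomposition}: writing the binding $\bx'$ via an optimal transport plan $\ell_1\to\ell_1',\dots,\ell_h\to\ell_h'$ from $\bx$, we have $\tau(\bx\to\bx') \le \sum_i \tau_\bx(\ell_i \to \ell_i')$ with strict inequality whenever paths overlap. If the new configuration $\bx'$ chosen by \eqref{eq:dynamics} moved more than one robot, I would split off a single atomic move $\ell_1 \to \ell_1'$ with $\ell_1$ a leaf losing a robot — by the previous paragraph I can take $\ell_1 = l$ — and compare: at the triggering instant $\tau(\bx\to\bx') = \OT_T(\bx,\bx')$ but $\tau_\bx(l\to\ell_1') \le d(l,\ell_1')$ always (the single-robot tension never exceeds the transport distance, as the potential is increasing), so the single atomic move $\bx + \be_{\ell_1'} - \be_l$ already satisfies $\tau_\bx(l\to\ell_1') \le d(l,\ell_1') = \OT_T(\bx, \bx+\be_{\ell_1'}-\be_l)$, hence is at least as good as $\bx'$ in \eqref{eq:dynamics}; and since ties are broken in favor of any configuration $\neq \bx$, and by the strictness clause of Lemma \ref{lemma:tension-decomposition} a genuinely multi-robot binding move would have to be suboptimal once one accounts for overlaps, the argmin selects an atomic move. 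Finally, to rule out two simultaneous atomic moves, I would observe that only $d_l$ is driving the dynamics, so only constraints $\bx'$ with $x'_l < x_l$ can bind at a given instant; among these, the first to bind as $d_l$ increases is the one minimizing the threshold value of $d_l$, and generically (and, with the tie-breaking rule, always) this selects a single target leaf, so the two robots of $l$ are never dispatched at the same instant.

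The main obstacle I anticipate is the last point — proving that moves are \emph{never} simultaneous rather than merely generically so. This requires a careful argument that after one atomic move $l \to \ell'$ is executed, the configuration strictly satisfies all the remaining constraints (so time can advance before the next move), which amounts to showing the post-move slack $\OT_T(\bx,\bx'') - \tau(\bx\to\bx'')$ is strictly positive for every other $\bx''$; this is where the strict convexity of $\phi$ and the strictness clause in Lemma \ref{lemma:tension-decomposition} do the real work, and I would expect to need the detailed tension computations deferred to Lemmas \ref{lemma:deletion-tension}, \ref{lemma:fork-tension}, \ref{lemma:minoring-tension} to close it cleanly.
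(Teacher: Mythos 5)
Your opening observation --- that only the slack $\OT(\bx,\bx')-\tau(\bx\to\bx')$ of constraints with $x'_l<x_l$ can decrease as $d_l$ grows --- is a fine parallel to the paper's observation that elongating $l$ leaves the pairwise tension $\tau_\bx(\ell\to\ell')$ unchanged for all $\ell,\ell'\ne l$. But the middle of your argument has two genuine gaps.

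First, showing $x'_l<x_l$ for the binding $\bx'$ only establishes that the transport plan contains \emph{some} move out of $l$; it does not show that \emph{all} moves in the plan emanate from $l$. Your overlap argument needs all the $\ell_i$ to coincide with $l$ in order for Lemma \ref{lemma:tension-decomposition}'s strictness clause to fire (a multi-robot move with disjoint paths has no overlap). The missing ingredient is precisely what the paper supplies: for $\ell,\ell'\ne l$, the term $d_l\phi(x_l)$ cancels in $\Psi(\bx)-\Psi(\bx+\be_{\ell'}-\be_\ell)$, so $\tau_\bx(\ell\to\ell')$ is constant during the elongation and stays strictly below $d(\ell,\ell')$ (since it was strictly below at $t_1$ and no move has occurred). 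Feeding this into the decomposition $\tau(\bx\to\bx')\le\sum_i\tau_\bx(\ell_i\to\ell_i')$ forces every $\ell_i=l$, after which all paths share the edge $(l,p(l))$, the strict version of Lemma \ref{lemma:tension-decomposition} applies, and $h>1$ is contradicted.

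Second, your justification that $\tau_\bx(l\to\ell_1')\le d(l,\ell_1')$ ``always, as the potential is increasing'' is not correct as stated: this bound is not a property of $\phi$ but an invariant enforced by the algorithm's tie-broken argmin, valid only because no move has yet been triggered. Relatedly, the ``ties broken in favour of non-$\bx$'' clause does not by itself select an atomic move over a multi-robot one when both achieve the minimum of $\Psi(\bx')+\OT(\bx,\bx')$; the paper instead shows the multi-robot candidate is \emph{strictly} worse at $t_3$ (by the strict-overlap inequality), so it never ties. Finally, you flag but do not close the no-simultaneous-moves step. The paper handles it by a short contradiction: if some $\bz\ne\bx'$ were already tight just after the move, then chaining $\tau(\bx,\bz)=\tau(\bx,\bx')+\tau(\bx',\bz)\ge\OT(\bx,\bx')+\OT(\bx',\bz)\ge\OT(\bx,\bz)$ shows $\bz$ was a candidate at $t_3$, hence of the form $\bx-\be_l+\be_{\ell'}$, which forces strict triangle inequality and hence $\tau(\bx,\bz)>\OT(\bx,\bz)$ before the move --- a contradiction. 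You would need some argument of this kind to finish.
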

\begin{proof}
We denote by $\bx$ the configuration of the robots at time $t_1$ and we assume that it is stable (i.e. for any other configuration $\bz:\tau(\bx,\bz)< \OT(\bx,\bz)$. We denote by $\bx'$ the first configuration that the player will switch to after $\bx$, and by $t_3\in (t_1,t_2)$ the time at which it occurs. 
We shall show that $\bx'$ will satisfy $\bx' =\bx -\be_l+ \be_\ell$ for some leaf $\ell\neq l$. First observe that before $t_3$ the tension between any two leaves $\ell,\ell'\neq l$ in $\bx$ does not change with time (and thus remains always $<d(\ell,\ell')$). Then observe that the tension $\tau(\bx\rightarrow \bx') = \Phi_t(\bx')-\Phi_t(\bx)$ evolves continuously with time, as well as $\OT(\bx,\bx')$. So at time $t_3$, when the movement occurs, we have that $\Phi(\bx)-\Phi(\bx') = \OT(\bx,\bx') = \sum_{i\in [h]}d(\ell_i,\ell_i')$ for $\ell_1\rightarrow \ell_1',  \dots,  \ell_h\rightarrow \ell_h'$ an optimal transport plan leading from $\bx$ to $\bx'$.  Using Lemma \ref{lemma:tension-decomposition}, we get that at time $t_3$, 
$$\sum_{i\in [h]}d(\ell_i,\ell_i') = \tau(\bx\rightarrow \bx') \leq \tau(\ell_1 \rightarrow \ell_1')+\dots+\tau(\ell_h\rightarrow \ell_h'),$$
where by continuity of tensions, since no moves took place before $t_3$ we have for all $i\in [h]: \tau(\ell_i\rightarrow\ell_i')\leq d(\ell_i\rightarrow\ell_i')$, with strict inequality if $\ell_i\neq l$. This implies that $\forall i\in [h]: \ell_i = l$. Then observe that if $h>1$, the optimal transport plan has overlaps of positive length, so by the lemma above, 
$$\sum_{i\in [h]}d(\ell_i,\ell_i') = \tau(\bx\rightarrow \bx') < \tau(\ell_1 \rightarrow \ell_1')+\dots+\tau(\ell_h\rightarrow \ell_h'),$$
which is impossible using again the continuity of tensions. So it is a necessity that $h=1$ and that $\bx' = \bx-\be_l+\be_{\ell_1'}$. Now to conclude this proof, it suffices to observe that the configuration $\bx'$ is stable at time $t_3$ (i.e. for any other configuration $\bz:\tau(\bx',\bz)< \OT(\bx',\bz)$ where $\OT(\bx',\bz)$ denotes the optimal transport distance from $\bx'$ to $\bz$). 
We can then iterate the argument to show that the property will remain satisfied all the way to time $t_2$. Assume by contradiction that right after the player makes the move to $\bx'$ at $t_3$, there is another configuration $\bz\neq \bx'$ satisfying $\tau(\bx',\bz)\geq \OT(\bx',\bz)$, then the following identities held before the move at time $t_3$, $\tau(\bx,\bz) = \tau(\bx,\bx')+\tau(\bx',\bz) \geq \OT(\bx,\bx') +\OT(\bx',\bz) \geq \OT(\bx,\bz)$. So $\bz$ would have been another allowed candidate at time $t_3$. Thus, $\bz = \bx -\be_{l}+\be_{\ell'}$ for some other leaf $\ell'$. This implies that $\OT(\bx,\bx')+\OT(\bx',\bz)>\OT(\bx,\bz)$ which in turns implies that at time $t_3$, before the move of the player we had $\tau(\bx,\bz)>\OT(\bx,\bz)$, which is a contradiction.  
\end{proof}

\begin{lemma}\label{lemma:deletion-tension}
    When some leaf $l$ is deleted, denoting by $\bx$ the configuration before the deletion and by $\bx'$ the configuration right after the deletion, we have that $\forall \ell\neq l: y_\ell\geq x_\ell$. In other words, all moves by Algorithm \eqref{eq:dynamics} are from the deleted leaf to other leaves of the tree.
\end{lemma}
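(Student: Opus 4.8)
The plan is to mimic the structure of the elongation argument (Lemma \ref{lemma:elongation-tension}), replacing the continuous elongation by the discrete jump that the player must make when leaf $l$ is killed. Let $\bx$ denote the configuration just before the deletion; it is stable in the sense that $\tau(\bx\rightarrow\bz)<\OT_T(\bx,\bz)$ for every $\bz\neq\bx$ in $\Xcal(T)$. When $l$ is killed, the player is forced to move the $x_l$ robots sitting at $l$, so the new configuration $\bx'$ is the minimizer over $\Xcal(T')$ of $\Psi(T',\cdot)+\OT_{T'}(\bx,\cdot)$, where $T'$ is the tree with $l$ (and possibly its degree-$2$ parent) removed. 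I would first argue that it suffices to work in a tree where $l$ is still present but its edge has been stretched to infinite length: by the observation already used in Proposition \ref{proposition:dynamics} (following \cite{bubeck2022shortest}), the post-deletion optimal/minimizing configuration is exactly the limit of the configuration as $d_l\to\infty$, and no robot will ever sit on $l$ in that limit since the potential contribution $d_l\phi(x_l)$ of any positive mass there diverges.

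The heart of the argument is then to show that along this infinite elongation of $l$, every player move is of the form $\bx\mapsto\bx-\be_l+\be_{\ell}$ for some $\ell\neq l$ — i.e. robots only leave $l$, never arrive at any other leaf from a third leaf. This is precisely the content of Lemma \ref{lemma:elongation-tension} applied to the leaf $l$ being elongated: I would invoke it to conclude that the sequence of configurations the player passes through, from $\bx$ until the limiting configuration $\bx'$, is obtained by a sequence of atomic moves each originating at $l$. Summing these atomic moves, the net effect is $\bx' = \bx - x_l\be_l + \sum_{\ell\neq l}(x'_\ell-x_\ell)\be_\ell$ with $x'_\ell \geq x_\ell$ for every leaf $\ell\neq l$, which is the claimed monotonicity $x'_\ell\geq x_\ell$. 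Here one must be slightly careful that Lemma \ref{lemma:elongation-tension} is stated for a finite interval of elongation; the passage to the limit $d_l\to\infty$ requires noting that only finitely many moves occur (each move strictly decreases $\Psi+\Cost$ by a bounded-below amount, or more simply, each move strictly decreases $x_l$ which is a non-negative integer), so the limiting configuration is reached after finitely many atomic moves, all leaving $l$.

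Finally, to get the cleaner statement $y_\ell\geq x_\ell$ for $\ell\neq l$ that appears in the lemma, I would combine the monotonicity just proved for $\bx'$ with the bound \eqref{eq:epsilon} from Proposition \ref{prop:1}, $x'_\ell < y'_\ell + 2 - \epsilon$, together with the monotonicity of $\by$ under deletion from Proposition \ref{proposition:dynamics} ($y'_\ell\geq y_\ell$ for $\ell\neq l$) — though I suspect the intended reading is simply the ``all moves are from the deleted leaf'' conclusion, in which case the $\by$-comparison is a restatement via \eqref{eq:epsilon}. The main obstacle I anticipate is the tie-breaking / well-posedness at the exact moment of deletion: the reduction to ``elongate $l$ to infinity'' must be justified carefully because the player's forced move happens instantaneously rather than through a genuine continuous elongation (note also that the adversary is only allowed to elongate leaves with $\geq 2$ robots, so one is really using the limiting argument, not an actual game move), and one has to check that the minimizer over $\Xcal(T')$ coincides with the limit of minimizers over the stretched trees. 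Once that identification is in hand, the rest follows verbatim from Lemma \ref{lemma:elongation-tension} and Lemma \ref{lemma:tension-decomposition}.
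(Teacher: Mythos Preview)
Your approach is genuinely different from the paper's, and the gap you flag at the end is real and is not easily closed.

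The paper does \emph{not} pass through an elongation-to-infinity limit for $\bx$. Instead it argues directly by contradiction: take an optimal transport plan $\ell_1\to\ell_1',\dots,\ell_h\to\ell_h'$ from $\bx$ to the deletion response $\bx'$, and suppose some source $\ell_i\neq l$. Since the corresponding destination $\ell_i'$ is also $\neq l$, the configuration $\bx'':=\bx'-\be_{\ell_i'}+\be_{\ell_i}$ is still a valid configuration on $T'$. By Lemma~\ref{lemma:tension-decomposition} one has $\tau(\bx\to\bx')\le \tau(\bx\to\bx'')+\tau_{\bx}(\ell_i\to\ell_i')$, while $\OT(\bx,\bx')=\OT(\bx,\bx'')+d(\ell_i,\ell_i')$. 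Optimality of $\bx'$ over $\bx''$ then forces $\tau_{\bx}(\ell_i\to\ell_i')\ge d(\ell_i,\ell_i')$, contradicting the stability of $\bx$ just before the deletion. This is short and uses only Lemma~\ref{lemma:tension-decomposition}.

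The issue with your route is that the ``deletion $=$ elongation to infinity'' identification, which the paper uses for $\by$ in Proposition~\ref{proposition:dynamics}, does not obviously carry over to $\bx$. The fractional optimum $\by$ minimizes $\Psi(T,\cdot)$ alone, so sending $d_l\to\infty$ simply forces $y_l\to 0$ and the limit minimizes $\Psi(T',\cdot)$. The player's configuration $\bx$, by contrast, is defined through the objective $\Psi(T',\cdot)+\OT(\bx,\cdot)$, and during your hypothetical elongation the player makes a \emph{sequence} of atomic moves, each optimized at a different value of $d_l$ (hence a different transport metric), rather than a single global minimization at the original $d_l$. There is no a priori reason the endpoint of that greedy sequence should coincide with the one-shot minimizer over $\Xcal(T')$ that Algorithm~\eqref{eq:dynamics} actually produces at deletion. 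You would need an additional argument to identify the two, and the natural such argument essentially reproduces the paper's direct contradiction anyway.

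(Your reading of the statement is correct: the ``$y_\ell\ge x_\ell$'' in the lemma is a notational slip for ``$x'_\ell\ge x_\ell$'', i.e.\ all moves originate at the deleted leaf.)
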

\begin{proof}
We denote by $\bx$ the configuration before the removal of $l$ and by $\bx'$ the configuration after the removal of $l$ and we recall that the transport distance $\OT(\bx,\bx') = \sum_{i\in [h]}d(\ell_i,\ell_i')$ where $\ell_1\rightarrow \ell_1',  \dots,  \ell_h\rightarrow \ell_h'$ is an optimal transport plan leading from $\bx$ to $\bx'$. We also recall that Lemma~\ref{lemma:tension-decomposition} gives, 
$$\tau(\bx\rightarrow \bx') \leq \tau_\bx(\ell_1 \rightarrow \ell_1')+\dots+\tau_\bx(\ell_h\rightarrow \ell_h').$$
We assume by contradiction that there is a leaf $\ell\neq l$ satisfying $y_\ell <x_\ell$. Without loss of generality, this implies that we can choose a leaf $\ell_i\neq l$ of the optimal transport plan, for $i\in [h]$, such that $y_{\ell_i'}>x_{\ell_i'}$.
Consequently the configuration $\bx'' = \bx'-\be_{\ell_i'}+\be_{\ell_i}$  where one move from $\ell_i$ to $\ell_i'$ never took place is valid, even after the deletion of leaf $l$ because obviously $\ell_i'\neq l$.
Also remember that by the proof of Lemma \ref{lemma:tension-decomposition}, 
$$\tau(\bx\rightarrow \bx')\leq \tau(\bx\rightarrow \bx'')+\tau(\ell \rightarrow \ell')$$
and recall,
    $$\OT(\bx,\bx') = \OT(\bx,\bx'')+d(\ell,\ell').$$
Since $\bx'$ was performed instead of $\bx''$, it must have been that $\tau(\bx\rightarrow \bx')-\OT(\bx,\bx')\geq \tau(\bx\rightarrow \bx'')-\OT(\bx,\bx'')$ and thus that $\tau_\bx(\ell \rightarrow \ell')\geq d(\ell,\ell')$. This is not possible, because it implies that a movement from $\ell$ to $\ell'$ would have taken place before the removal of $l$, at time $t_1^-$.
\end{proof}

\begin{lemma}\label{lemma:fork-tension}
Consider the fork of a leaf $l$ in $m$ children, and denote by $\bx$ the configuration right before the fork. We call `configuration induced by the fork' and we denote by $\bx'$ a configuration in which the $x_l$ robots formerly located at $l$ are partitioned evenly on the $m$ children of $l$, i.e. where the newly created children take values in $\{\ceil{\frac{x_l}{m}},\floor{\frac{x_l}{m}}\}$ and where $x_l'=x_l$. There exists some $\delta\in (0,1)$, such that for any fork length of $\delta$, the configuration $\bx'$ is stable after the fork. 
\end{lemma}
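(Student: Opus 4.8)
\textbf{Proof plan for Lemma \ref{lemma:fork-tension}.}

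The plan is to analyze the fork of leaf $l$ into $m$ children as a limiting process as the fork length $\delta \to 0^+$, and to verify that for small enough $\delta$ the configuration $\bx'$ described in the statement is the one selected by Algorithm \eqref{eq:dynamics}. First I would set up notation: let $T'$ denote the tree right after the fork, with new leaves $\ell_1,\dots,\ell_m$ each at distance $\delta$ from $l$, and let $\bx$ be the stable configuration on $T$ before the fork (so $\tau_{\bx}(\ell\rightarrow\ell') < \OT_T(\bx,\bx - \be_\ell + \be_{\ell'})$ strictly, except that equality may hold for moves originating at $l$ in the degenerate case; more carefully, $\bx$ is stable meaning no move strictly decreases potential more than its transport cost). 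The candidate $\bx'$ has $x'_u = x_u$ for all nodes $u$ that existed in $T$ except possibly on $l\rightarrow r$ where $x'_l = x_l$ still holds, and the children $\ell_1,\dots,\ell_m$ receive the near-balanced split in $\{\lfloor x_l/m\rfloor,\lceil x_l/m\rceil\}$.

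The key steps, in order. Step 1: Show that among configurations of $T'$ with a \emph{fixed} assignment to the leaves that pre-existed in $T$ (i.e. leaving $x_u$ unchanged for all old nodes and only distributing the $x_l$ robots among $\ell_1,\dots,\ell_m$), the potential $\Psi(T',\cdot) = \Psi(T,\cdot) + \delta\sum_{i=1}^m \phi(x_{\ell_i})$ is minimized, using strict convexity of $\phi$ (Jensen / discrete convexity), exactly by the balanced split taking values in $\{\lfloor x_l/m\rfloor,\lceil x_l/m\rceil\}$. This also controls the added transport cost: any such reassignment of the $x_l$ robots within the fork costs at most $\delta x_l \leq \delta k$, which is $O(\delta)$. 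Step 2: Show that \emph{no} reconfiguration touching the rest of the tree is triggered, i.e. that no move between a pair of old leaves $\ell,\ell'$ (or a move that carries robots across the forked region into the rest of the tree) becomes profitable for small $\delta$. This follows from the stability of $\bx$ in $T$ together with a continuity argument: for old leaves the tensions $\tau_{\bx}(\ell\rightarrow\ell')$ and transport costs $\OT$ are \emph{unchanged} by the fork (the fork only adds the term $\delta\sum\phi(x_{\ell_i})$, which does not affect any potential difference between configurations that agree on $\ell_1,\dots,\ell_m$), so the strict inequality of stability is preserved; for moves that would shift the balance among $\ell_1,\dots,\ell_m$ against the rest of the tree, the potential gain is $O(\delta)$ while it must compete with a fixed positive transport cost, so for $\delta$ small enough no such move is profitable either. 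Step 3: Combine — since $\bx'$ is feasible on $T'$, achieves the within-fork minimum of $\Psi(T',\cdot)$ up to an $O(\delta)$ transport penalty, and no other move strictly improves $\Psi + \OT$, the tie-breaking rule of \eqref{eq:dynamics} (which favors $\bx'\neq\bx$ only when there is at least equality) selects $\bx'$, and moreover $\bx'$ is itself stable on $T'$. Invoking Lemma \ref{lemma:tension-decomposition} lets me reduce the ``no profitable multi-robot move'' check to atomic moves, as in the proofs of Lemmas \ref{lemma:elongation-tension} and \ref{lemma:deletion-tension}.

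The main obstacle I anticipate is Step 2 in the borderline case: handling moves that are ``almost tight'' for the stable configuration $\bx$ on $T$, i.e. leaves $\ell,\ell'$ with $\tau_{\bx}(\ell\rightarrow\ell')$ very close to $d(\ell,\ell')$, since an $O(\delta)$ perturbation could in principle tip them over. The resolution is that for a \emph{fixed} stable configuration $\bx$ there are only finitely many candidate target configurations, so the stability gap $\min_{\bz\neq\bx}\big(\OT_T(\bx,\bz) - \tau_{\bx}(\bx\rightarrow\bz)\big)$ is a fixed positive number (strictly positive by stability and the strict tie-breaking), and one simply chooses $\delta$ smaller than this gap divided by $k$; the delicate point is merely to phrase ``$\bx$ is stable'' with the correct strictness so that this gap is genuinely positive, which is guaranteed because $\bx$ was itself produced by Algorithm \eqref{eq:dynamics} with ties broken against staying put, matching the inductive setup used for elongations and deletions. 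Care is also needed to ensure $\delta<1$ as required by the adversary's rules, but this is automatic once $\delta$ is taken small.
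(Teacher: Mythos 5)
Your proposal is correct and follows essentially the same route as the paper's proof: show tensions between newly created children are non-positive by convexity, and for any other pair of leaves observe that the quantity $\tau_{\bx'}(\ell\rightarrow\ell')-d(\ell,\ell')$ is strictly negative at fork length $0$ (by stability of $\bx$ in $T$) and varies continuously, then take $\delta$ small, finishing by invoking Lemma~\ref{lemma:tension-decomposition} to reduce multi-robot moves to atomic ones. Your more explicit observation — that there are only finitely many candidate target configurations so the stability gap is a genuinely positive constant — is the precise justification for the paper's phrase "there is a value of $\delta>0$," and your Step 1 (Jensen-type minimality of the balanced split) is slightly stronger than what the paper states, which only needs $\tau_{\bx'}(\ell_i\rightarrow\ell_j)\le 0$ among children; both suffice.
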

\begin{proof}
We recall that a fork of leaf $l$ in $m$ children leads to the creation of $m$ new leaves at distance $\delta$ from $l$. By convexity of $\Phi(\cdot)$, the configuration $\bx'$ defined above are such that for any $\ell,\ell'$ newly created children, $\tau_{\bx'}(\ell\rightarrow \ell')\leq 0$ and thus, $\tau_{\bx'}(\ell\rightarrow \ell')-d(\ell,\ell')<0$ after the fork. We consider $T(u)$ the tree corresponding to a fork length of $u\in[0,1)$. We note that for any two leaves $\ell,\ell'$, which are not both children of $l$, the value of $\tau_{\bx'}(\ell\rightarrow\ell')-d(\ell,\ell')$ evolves continuously and is known to be strictly below $0$ for $u=0$, by stability of the configuration $\bx$ before the fork. 
Therefore there is a value of $\delta>0$, such that the benefit of moving robots between leaves remains below below $0$, and by Lemma \ref{lemma:tension-decomposition}, this implies that the configuration $\bx'$ defined above is stable after the fork. 
\end{proof}

The lemmas above, also lead to the following result. 
\begin{lemma}\label{lemma:minoring-tension}
Over the course of the evolving tree game, the discrete configuration $\bx$ always satisfies $\forall \ell \in \Lcal: x_\ell\geq 1$. 
\end{lemma}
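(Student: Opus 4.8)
The plan is to establish the invariant $x_\ell \geq 1$ for all $\ell \in \cL(T)$ by induction over the discrete events (elongations, deletions, forks) that the adversary triggers, using the structural descriptions of the player's moves obtained in \cref{lemma:elongation-tension,lemma:deletion-tension,lemma:fork-tension} together with \cref{prop_dynamics_x}. The base case is trivial: at $t=0$ the tree is a single root node carrying all $k \geq 2$ robots, so $x_r = k \geq 1$. For the inductive step I would assume the invariant holds just before an event and check that each of the three types of move preserves it.

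First, for a leaf elongation of a leaf $l$: by \cref{lemma:elongation-tension} every move of Algorithm \eqref{eq:dynamics} during the elongation transfers a single robot out of $l$, and no two such moves coincide in time. Hence all leaves $\ell \neq l$ can only gain robots and remain $\geq 1$ by the induction hypothesis, while $x_l$ decreases in unit steps only; since the adversary is permitted to elongate $l$ only while $x_l \geq 2$, the value $x_l$ cannot be brought below $1$ (the last admissible unit decrease takes it from $2$ to $1$, after which no further move out of $l$ can be triggered by an elongation of $l$). The point requiring care here is precisely this continuous-time bookkeeping: one must argue that the player never removes the ``last'' robot from the currently elongated leaf, which follows from combining the one-robot-at-a-time property of \cref{lemma:elongation-tension} with the adversary's constraint $x_{\ell(t)} \geq 2$. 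I expect this to be the only genuinely delicate point.

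Next, for a leaf deletion of $l$: \cref{lemma:deletion-tension} ensures that all of the player's moves go from $l$ to other leaves, so every surviving leaf only gains robots, while $l$ itself ceases to be a leaf, and the invariant is preserved. Finally, for a fork of a leaf $l$ with $x_l \geq 3$ into $m \in \{2,\dots,x_l-1\}$ children: \cref{prop_dynamics_x} (which in turn rests on \cref{lemma:fork-tension}) supplies a fork length $\delta > 0$ for which no previously existing node changes its robot count and each newly created leaf receives a number of robots in $\{\floor{x_l/m}, \ceil{x_l/m}\}$; since $m \leq x_l - 1 < x_l$ we have $x_l/m > 1$, hence $\floor{x_l/m} \geq 1$, so every new leaf carries at least one robot and all other leaves are untouched. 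Assembling the three cases completes the induction and hence the lemma; no serious obstacle is anticipated beyond the elongation bookkeeping noted above, since the substantive work has already been carried out in the preceding lemmas.
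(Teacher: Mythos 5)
Your proposal is correct and follows essentially the same route as the paper's proof: both arguments reduce the claim to the three preceding lemmas (\cref{lemma:elongation-tension,lemma:deletion-tension,lemma:fork-tension}) by checking that each type of adversary move preserves the invariant $x_\ell \geq 1$, the only difference being that the paper phrases this as a proof by contradiction at the first offending instant while you phrase it as a forward induction over events. The explicit verification you give for the elongation case (unit decrements plus the adversary constraint $x_{\ell(t)} \geq 2$) and for the fork case ($m \leq x_l - 1$ implies $\floor{x_l/m} \geq 1$) is exactly the content the paper's terser proof leaves implicit.
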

\begin{proof}
We assume by contradiction that the property is not satisfied, and we consider the first instant $t$ when a leaf was depopulated for the first time (take the supremum over the instants for which the property hold). By Lemma \ref{lemma:elongation-tension}, $t$ cannot occur in the middle of a leaf elongation because a leaf with a single robot cannot be elongated. It must thus correspond to the instant of a discrete step. But by Lemma \ref{lemma:deletion-tension} and \ref{lemma:fork-tension} no discrete steps can lead to a lonely leaf. This finishes the contradiction.
\end{proof}



\end{document}